\newenvironment{proof}{\noindent{\bf Proof : \ }}{\hfill$\Box$\par\medskip}
\newtheorem{theorem}{Theorem}[section]
\newtheorem{lemma}[theorem]{Lemma}
\newtheorem{definition}[theorem]{Definition}
\newtheorem{remark}[theorem]{Remark}
\newtheorem{claim}[theorem]{Claim}
\newtheorem{fact}[theorem]{Fact}
\newenvironment{proofof}[1]{\begin{trivlist} \item {\bf Proof
#1:~~}}
  {\qed\end{trivlist}}
\renewenvironment{proofof}[1]{\par\medskip\noindent{\bf Proof of #1: \ }}{\hfill$\Box$\par\medskip}
\newcommand{\namedref}[2]{\hyperref[#2]{#1~\ref*{#2}}}
\newcommand{\thmlab}[1]{\label{thm:#1}}
\newcommand{\thmref}[1]{\namedref{Theorem}{thm:#1}}
\newcommand{\lemlab}[1]{\label{lemm:#1}}
\newcommand{\lemref}[1]{\namedref{Lemma}{lemm:#1}}
\newcommand{\claimlab}[1]{\label{claim:#1}}
\newcommand{\claimref}[1]{\namedref{Claim}{claim:#1}}
\newcommand{\seclab}[1]{\label{sec:#1}}
\newcommand{\secref}[1]{\namedref{Section}{sec:#1}}
\newcommand{\applab}[1]{\label{app:#1}}
\newcommand{\appref}[1]{\namedref{Appendix}{app:#1}}
\newcommand{\deflab}[1]{\label{def:#1}}
\newcommand{\defref}[1]{\namedref{Definition}{def:#1}}
\newcommand{\ie}{\emph{i.e.}}
\newcommand{\error}{\kappa}
\newcommand{\metanode}[1]{\ensuremath{\mathsf{map}(#1)}}
\renewcommand{\d}{\delta}
\def \HAM    {\mdef{\mathsf{dist}}}
\def \LDC    {\mdef{\textsf{LDC}}}
\def \RLDC    {\mdef{\textsf{RLDC}}}
\def \ECC    {\mdef{\textsf{ECC}}}
\def \ECCD {\mdef{\textsf{ECCD}}}
\def \RLCC    {\mdef{\textsf{RLCC}}}
\def \CRLCC    {\mdef{\textsf{CRLCC}}}
\def \CRLCCs    {\mdef{\textsf{CRLCCs}}}
\def \CRLDC    {\mdef{\textsf{CRLDC}}}
\def \LDCs    {\mdef{\textsf{LDC}s}}
\def \CRHF    {\mdef{\textsf{CRHF}}}
\def \Gen    {\mdef{\mathsf{Gen}}}
\def \Goody    {\mdef{\mathsf{Good}}}
\def \Fool    {\mdef{\mathsf{Fool}}}
\def \Limit    {\mdef{\mathsf{Limit}}}
\def \GenH    {\mdef{\mathsf{GenH}}}
\def \LCCs    {\mdef{\textsf{LCC}s}}
\def \RLDCs    {\mdef{\textsf{RLDC}s}}
\def \RLCCs   {\mdef{\textsf{RLCC}s}}
\def \PPT    {\mdef{\textsf{PPT}}}
\def \PCPs    {\mdef{\textsf{PCP}s}}
\def \deg    {\mdef{\mathsf{deg}}}
\def \Green    {\mdef{\mathsf{IsGreen}}}
\def \Good    {\mdef{\mathsf{IsGood}}}
\def \GreenMeta    {\mdef{\mathsf{IsGreenMeta}}}
\def \GreenEdge    {\mdef{\mathsf{IsGreenEdge}}}
\def \DeltaExpansion {\mdef{\mathsf{IsLocalExpander}}}
\def \ReduceDegree    {\mdef{\mathsf{ReduceDegree}}}
\def \indeg    {\mdef{\mathsf{indeg}}}
\def \outdeg    {\mdef{\mathsf{outdeg}}}
\def \parents    {\mdef{\mathsf{parents}}}
\def \lab					{\mdef{\mathsf{lab}}}
\newcommand{\COMMENTED}[1]{{}}
\newcommand{\PPr}[1]{\ensuremath{\mathbf{Pr}\left[#1\right]}}
\renewcommand{\O}[1]{\ensuremath{\mathcal{O}\left(#1\right)}}
\newcommand{\eps}{\epsilon}
\newcommand{\mdef}[1]{{\ensuremath{#1}}\xspace}  % Math Def which can also be used in normal text.
\newcommand{\myfunc}[1]{\mdef{\mathsf{#1}}}      % Functions denoted with text (e.g. size()) should use mathsf.
\DeclareMathOperator*{\polylog}{polylog}
\newcommand{\superscript}[1]{\ensuremath{^{\mbox{\tiny{\textit{#1}}}}}\xspace}
\def \th {\superscript{th}}     % 'The i-th entry it a list...' --> i\th
\def \etal{{\it et~al.}\,}
\def \A        {\mdef{\mathcal{A}}}                  % Algorithm
\def \negl     {\mdef{\myfunc{negl}}}                % Negliglbe function
\def \polylog  {\mdef{\myfunc{polylog}\,}}             % Polylogarithm function
\newcommand{\ceil}[1]{\mdef{\left\lceil#1\right\rceil}}               % Absolute value
\def \sec						      {\mdef \lambda}
\def \Enc							      {\mdef{\mathsf{Enc}}}
\def \Dec							      {\mdef{\mathsf{Dec}}}
\def \hashexperiment        {\mdef{\mathsf{Hash-coll}_{\A,\Pi}(\sec)}}
\newcommand{\ignore}[1]{}
\newif\ifnotes\notestrue %set this to true if notes are visible and to false (next line) if they should be hidden
\newcommand{\elena}[1]{\textcolor{red}{{\bf (Elena:} {#1}{\bf ) }} \marginpar{\tiny\bf
             \begin{minipage}[t]{0.5in}
               \raggedright E:
                \end{minipage}}}
\newcommand{\samson}[1]{\textcolor{green}{{\bf (Samson:} {#1}{\bf ) }} \marginpar{\tiny\bf
             \begin{minipage}[t]{0.5in}
               \raggedright S:
            \end{minipage}}}
\newcommand{\gv}[1]{\textcolor{purple}{{\bf (GV:} {#1}{\bf ) }} \marginpar{\tiny\bf
             \begin{minipage}[t]{0.5in}
               \raggedright G:
            \end{minipage}}}  
\newcommand{\jeremiah}[1]{\textcolor{blue}{{\bf (Jeremiah:} {#1}{\bf ) }} \marginpar{\tiny\bf
             \begin{minipage}[t]{0.5in}
               \raggedright J:
            \end{minipage}}} 
\newcommand{\elena}[1]{}
\newcommand{\samson}[1]{}
\newcommand{\gv}[1]{}
\newcommand{\jeremiah}[1]{}
\renewcommand*{\@fnsymbol}[1]{\textcolor{magenta}{\ensuremath{\ifcase#1\or *\or \dagger\or \ddagger\or
 \mathsection\or \triangledown\or \mathparagraph\or \|\or **\or \dagger\dagger
   \or \ddagger\ddagger \else\@ctrerr\fi}}}
\providecommand{\email}[1]{\href{mailto:#1}{\nolinkurl{#1}\xspace}}
\title{Relaxed Locally Correctable Codes in Computationally Bounded Channels}
\author{
Jeremiah Blocki\thanks{Department of Computer Science, Purdue University, West Lafayette, IN. 
Email: \email{jblocki@purdue.edu}.}
\and
Venkata Gandikota\thanks{Department of Computer Science, Johns Hopkins University, Baltimore, MD.
Email: \email{gv@jhu.edu}.}
\and
Elena Grigorescu\thanks{Department of Computer Science, Purdue University, West Lafayette, IN. 
Research supported in part by NSF CCF-1649515 and by a grant from the Purdue Research Foundation.
Email: \email{elena-g@purdue.edu}.}
\and
Samson Zhou\thanks{School of Informatics, Computing, and Engineering, Indiana University, Bloomington, IN. 
This work was done in part while at the Department of Computer Science, Purdue University, West Lafayette, IN. 
Research supported in part by NSF CCF-1649515. 
E-mail: \email{samsonzhou@gmail.com}}\\
}
\begin{document}
\maketitle
%  !TEX root=main.tex

\begin{abstract}
Error-correcting codes that admit {\em local} decoding and correcting algorithms have been the focus of much recent research due to their numerous theoretical and practical applications.  An important goal is to obtain the best possible tradeoffs between the number of queries  the algorithm makes to its oracle (the {\em locality} of the task), and the amount of redundancy in the encoding (the {\em information rate}).

In Hamming's classical adversarial channel model, the current tradeoffs are dramatic, allowing either small locality, but superpolynomial blocklength, or small blocklength, but high locality.
However, in the computationally bounded, adversarial channel model, proposed by Lipton (STACS 1994), constructions of locally decodable codes suddenly exhibit small locality and small blocklength, but these constructions require strong trusted setup assumptions e.g., Ostrovsky, Pandey and Sahai (ICALP 2007) construct private locally decodable codes in the setting where the sender and receiver already share a symmetric key.

We study variants of locally decodable and locally correctable codes in computationally bounded,  adversarial channels, in a setting with no public-key or private-key cryptographic setup. The only setup assumption we require is the selection of the {\em public} parameters (seed) for a collision-resistant hash function. Specifically, we provide constructions of {\em relaxed locally correctable} and {\em relaxed locally decodable codes}  over the binary alphabet, with constant information rate, and poly-logarithmic locality. 

Our constructions, which compare favorably with their classical analogues in the computationally unbounded Hamming channel, crucially employ {\em collision-resistant hash functions} and {\em local expander graphs}, extending ideas from recent cryptographic constructions of memory-hard functions.
\end{abstract}
%  !TEX root=main.tex

\section{Introduction}
\seclab{sec:intro}
\newcommand{\ra}{\rightarrow}
Classically, an error-correcting code is a tuple $(\Enc, \Dec)$ of encoding and decoding algorithms employed by a sender to encode messages, and by a receiver to decode them, after potential corruption by a noisy channel during transmission.
Specifically, the sender encodes a {\em message} $m$ of $k$ symbols from an alphabet $\Sigma$ into a {\em codeword} $c$ of block-length $n$ consisting of symbols  over the same alphabet, via $\Enc:\Sigma^k\ra \Sigma^n$. 
The receiver uses $\Dec: \Sigma^n\ra \Sigma^k$ to recover the message from a received word  $w\in \Sigma^n$, a corrupted version of some $\Enc(m)$.
Codes over the binary alphabet $\Sigma=\{0,1\}$ are preferred in practice. 
The quantities of interest in designing classical codes are the {\em information rate}, defined as $k/n$, and the {\em error rate}, which is the tolerable fraction of errors in the received word. 
Codes with both large information rate and large error rate are most desirable. 

In modern  uses of error-correcting codes, one may only need to recover small portions of the message, such as a single bit. 
In such settings, the decoder may not need to read the entire received word $w\in \Sigma^n$, but only read a few bits of it. 
Given an index $i\in [n]$ and oracle access to $w$, a local decoder must make only $q=o(n)$ queries into $w$, and output the bit $m_i$. 
Codes that admit such fast decoders are called {\em  locally decodable codes} (\LDCs) \cite{KatzT00,SudanTV99}. 
The parameter $q$ is called the {\em locality} of the decoder. 
A related notion is that of {\em locally correctable codes} (\LCCs). \LCCs are codes for which the local decoder with oracle access to $w$ must output bits of the codeword $c$, instead of bits of the message $m$. \LDCs and \LCCs have  widespread applications in many areas of theoretical computer science, including private information retrieval, probabilistically checkable proofs, self-correction,  fault-tolerant circuits,  hardness amplification, and data structures (e.g., \cite{BabaiFLS91,LundFKN92,BlumLR93,BlumK95,ChorKGS98,ChenGW13,AndoniLRW17}). 
See surveys \cite{Tre04-survey,Gasarch04}. 
However, constructions of such codes suffer from apparently irreconcilable tension between locality and rate: existing codes with constant locality have slightly subexponential blocklength \cite{Yekhanin08,Efremenko12,DvirGY11}, and codes of linear blocklength have slightly subpolynomial query complexity \cite{KoppartyMRS17}. 
See surveys by Yekhanin \cite{Yekhanin12} and by Kopparty and Saraf \cite{KoppartyS16}.

Ben-Sasson \etal\cite{Ben-SassonGHSV06} propose the notion of {\em relaxed locally decodable codes } (\RLDCs) that remedies  
the dramatic tradeoffs of classical \LDCs. In this notion the decoding algorithm is allowed to output $\bot$ sometimes, to signal that it does not know the correct value; however, it should not output an incorrect value too often. 
More formally, given $i\in [k]$ and oracle access to the received word $w$ assumed to be relatively close  to some codeword $c=\Enc(m) \in \Sigma^n$, the local decoder (1) outputs $m_i$ if $w=c$; (2) outputs either $m_i$ or $\bot$ with probability $2/3$,  otherwise;  and 
(3) the set of indices $i$ such that the decoder outputs $m_i$ (the correct value) with probability $2/3$ is of size $>\rho \cdot k$ for some constant $\rho>0$.
The relaxed definition allows them to achieve \RLDCs with constant query complexity and  blocklength $n=k^{1+\eps}$. 

Recently, Gur \etal\cite{GurRR18} introduce the analogous  notion of  {\em relaxed locally correctable codes} (\RLCCs). 
In particular, upon receiving a word $w\in \Sigma^n$ assumed to be close to some codeword $c$, the decoder: (1) outputs $c_i$ if $w=c$; (2) outputs either $c_i$ or $\bot$ with probability $2/3$,  otherwise; and (3) the set of indices $i$ such that the decoder outputs $c_i$ with probability $2/3$ is of size $\rho \cdot k$, for some $\rho>0$.
In fact, \cite{GurRR18} omits condition (3) in their definition, since the first two conditions imply the 3rd, for codes with constant locality that can withstand a constant fraction of error \cite{Ben-SassonGHSV06}. 
The reduction from \cite{Ben-SassonGHSV06}, however, does not maintain the asymptotic error rate, and in particular, in the  non-constant query complexity regime, the error rate becomes subconstant. 
Since our results work in the $\omega(1)$-query regime, we will build codes that achieve the 3rd condition as well (for constant error rate).
The results in \cite{GurRR18} obtain  significantly better parameters for \RLCCs than for classical \LCCs; namely, they construct \RLCCs with constant query complexity, polynomial block length, and constant error rate, and  \RLCCs with quasipolynomial query complexity, linear blocklength (constant rate), with the caveat that the error rate is subconstant. 
These results immediately extend to \RLDCs, since their codes are {\em systematic}, meaning that the initial part of the encoding consists of the message itself.

\paragraph{Computationally Bounded, Adversarial Channels} In this work we study \RLDCs and \RLCCs in the more restricted, yet natural, {\em computationally bounded adversarial channel}, introduced by Lipton \cite{Lipton94}. 
All the above constructions of local codes assume a channel that may introduce a bounded number of adversarial errors, and the channel has as much time as it needs to decide what positions to corrupt. 
This setting corresponds to  Hamming's error model, in which codes should be resilient to any possible error pattern, as long as the number of corruptions is bounded. 
Hence, codes built for the Hamming channel are safe for data transmission, but the drastic requirements lead to coding limitations. 
By contrast, in his foundational work, Shannon proposes a weaker, probabilistic channel in which each symbol is corrupted  independently, with some fixed probability. 

In  \cite{Lipton94}, Lipton argues that many reasonable channels stand somewhere in between these two extremes, and one may assume that in reality adversaries are computationally bounded, and can be modeled as {\em polynomial time probabilistic} (\PPT) algorithms. 
Variants of this model have been initially studied for classical error-correcting codes \cite{Lipton94,GopalanLD04,MicaliPSW05,GuruswamiS16, ShaltielS16} to show better error rate capabilities than in the Hamming model. See also Appendix \ref{sec:relwork} for further details on related work.
 
Ostrovsky \etal \cite{OstrovskyPS07} introduce the study of ``private'' locally decodable codes against computationally bounded channels, and build codes with constant information and error rates over the binary alphabet, which can correct every message bit using only a small (superconstant) number of queries to the corrupted word. 
Their results assume the existence of one-way functions, and require that the encoding and decoding algorithms share a secret key that is not known to the channel. 
Hemenway and Ostrovky \cite{HemenwayO08} and Hemenway \etal \cite{HemenwayOSW11} construct public-key \LDCs assuming the existence of $\Phi$-hiding schemes \cite{CachinMS99} and of IND-CPA secure cryptosystems, respectively.
   
 By contrast, our constructions of \RLDCs and \RLCCs do not require the sender and receiver to exchange cryptographic keys. 
Instead our constructions are based on the existence of collision-resilient hash functions, a standard cryptographic assumption. 
Because the parameters of a collision-resistant hash function are public, {\em any} party (sender/receiver/attacker) is able to compute it.

\section{Our Contributions}

We start by defining the version of relaxed locally correctable codes that is relevant for our model. 
We remark that our codes are systematic as well, and therefore the results also apply to the  \RLDCs analogue.

Since these codes interact with an adversarial channel, their strength  is not only measured in their error correction and locality capabilities (as is the case for \RLCCs/\RLDCs in the Hamming channel), but also in the security they provide against the channel.
We present these codes while describing how they interact with the channel, in order to make the analogy with the classical setting. 
We use the notation $\Enc$ and $\Dec$ to denote encoding and decoding  algorithms.

\begin{definition} A {\em local code} is a tuple $(\Gen, \Enc, \Dec)$ of probabilistic algorithms such that  
\begin{itemize}
\item $\Gen(1^{\sec})$ takes as input security parameter $\sec$ and generates a public seed $s \in \{0,1\}^{*}$. 
This public seed $s$ is {\em fixed} once and for all.  
\item $\Enc$ takes as input the public seed $s$ and a message $x \in \Sigma^k$ and outputs a codeword $c =\Enc(s,x)$ with $c \in \Sigma^n$.  
\item $\Dec$ takes as input the public seed $s$, an index $i \in [n]$, and is given oracle access to a word $w \in \Sigma^n$. 
$\Dec^w(s,i)$ outputs a symbol $b\in \Sigma$ (which is supposed to be the value at position $i$ of the closest codeword to $w$).
\end{itemize}
We say that the (information) {\em rate} of the code  $(\Gen, \Enc, \Dec)$ is $k/n$. 
We say that the code is {\em efficient} if $\Gen,\Enc,\Dec$ are all probabilistic polynomial time (PPT) algorithms.
\end{definition}

\begin{definition}
\deflab{def:crlcc:channel}
 A {\em computational adversarial channel} $\A$ with error rate $\tau$ is an algorithm that interacts with a local code $(\Gen, \Enc, \Dec)$  in rounds, as follows. In each round of the execution, given a security parameter $\sec$, 
\begin{enumerate}
\item  Generate $s\leftarrow\Gen(1^\sec)$; $s$ is public, so $\Enc$, $\Dec$, and $\A$ have access to $s$
\item The channel $\A$ on input $s$ hands a message $x$ to the sender.
\item The sender computes $c=\Enc(s, x)$ and hands it back to the channel (in fact the channel can compute $c$ without this interaction).
\item The channel $\A$ corrupts at most $\tau n$ entries of $c$ to obtain a word $w\in \Sigma^n$ and selects a challenge index $i \in [n]$;  $w$ is given to the receiver's \Dec with query access along with the challenge index $i$.
\item The receiver outputs $b\leftarrow \Dec^w(s, i)$.
\item We define $\A(s)$'s  {\em probability of fooling} $\Dec$ on this round to be $p_{\A, s}=\Pr[b\not \in \{\bot, c_i \}],$ where the probability is taken only over the randomness of the $\Dec^w(s, i)$. 
We say that $\A(s)$ is $\gamma$-successful {\em at fooling} $\Dec$ if $p_{\A, s}>\gamma$. 
We say that $\A(s)$ is $\rho$-successful {\em at limiting} $\Dec$ if $\left|\Goody_{\A,s}\right| < \rho \cdot n$, where $\Goody_{\A,s} \subseteq [n]$ is the set of indices $j$ such that $\Pr[\Dec^w(s,j) = c_j ] > \frac{2}{3}$. 
We use $\Fool_{\A,s}(\gamma,\tau,\sec)$ (resp. $\Limit_{\A,s}(\rho,\tau,\sec)$) to denote the event that the attacker was $\gamma$-successful at fooling $\Dec$ (resp. $\rho$-successful at limiting $\Dec$) on this round. 
\end{enumerate}
\end{definition}
\noindent
We now define our secure  \RLCC codes against computational adversarial channels.

\begin{definition}[(Computational) Relaxed Locally Correctable Codes (CRLCC)] A local code $(\Gen, \Enc, \Dec)$ is a $(q, \tau, \rho, \gamma(\cdot),\mu(\cdot))$-\CRLCC ~ {\em against a class} $\mathbb{A}$ of adversaries if $\Dec^w$ makes at most $q$ queries to $w$ and satisfies the following:
\begin{enumerate}
\item\label{weak1} For all public seeds $s$ if $w \leftarrow \Enc(s, x)$ then $\Dec^w(s,i)$ outputs $b=(\Enc(s, x))_i$. 
\item\label{weak2} For all $\A \in \mathbb{A}$ we have $\Pr[\Fool_{\A,s}(\gamma(\sec),\tau,\sec)] \leq \mu(\sec)$, where the randomness is taken over the selection of $s \leftarrow \Gen(1^\sec)$ as well as $\A$'s random coins. 
\item \label{strong} For all $\A \in \mathbb{A}$ we have $\Pr[\Limit_{\A,s}(\rho,\tau,\sec)] \leq \mu(\sec)$, where the randomness is taken over the selection of $s \leftarrow \Gen(1^\sec)$ as well as $\A$'s random coins. 
\end{enumerate}

  When $\mu(\sec)=0$, $\gamma(\sec)=\frac{1}{3}$ is a constant and $\mathbb{A}$ is the set of all (computationally unbounded) channels we say that the code is a $(q, \tau, \rho, \gamma)$-\RLCC. When $\mu(\cdot)$ is a negligible function {\em and} $\mathbb{A}$ is restricted to the set of all probabilistic polynomial time (\PPT) attackers we say that the code is a $(q, \tau, \rho, \gamma)$-\CRLCC (computational relaxed locally correctable code). 
  
We say that a code that satisfies conditions \ref{weak1} and \ref{weak2} is a {\em Weak \CRLCC}, while a code satisfying conditions \ref{weak1}, \ref{weak2} and \ref{strong} is a {\em Strong \CRLCC} code.
\end{definition}

We construct Weak and Strong \CRLCCs against \PPT adversaries, under the assumption that  {\em Collision-Resistant Hash Functions} (\CRHF) exist. 
Briefly, a \CRHF function is a pair $ (\GenH,H)$ of probabilistic polynomial time (\PPT) algorithms, where $\GenH$  takes as input a security parameter $\sec$ and outputs a public seed $s \in \{0,1\}^*$; the function $H: \{0,1\}^*\times \Sigma^*\ra \Sigma^{\ell(\sec)}$, takes as input the seed $s$ and a long enough input that is hashed into a string of length $\ell(\sec)$. 
We note that $H$ is deterministic upon fixing $s$. 
The value $\ell(\sec)$ is the {\em length} of the hash function. $(\GenH, H)$ is said to be collision-resistant if for all \PPT adversaries that take as input the seed $s$ generated by $\Gen(1^{\sec})$, the probability that they produce a collision pair $(x, x')$, i.e. such that $H(s, x)=H(s, x')$ and $x\ne x'$, is negligible in $1^{\sec}$.

\begin{theorem}
\thmlab{thm:rlcc}
Assuming the existence of a collision-resistant hash function $ (\GenH,H)$   with length $\ell(\lambda)$, 
there exists a constant $0<\tau' < 1$ and negligible functions $\mu(\cdot),\gamma(\cdot)$  such that for all $\tau \leq \tau'$ there is a constant $0< \rho(\tau)<1$ , such there exists a constant rate $(\ell(\lambda)\cdot {\polylog n}, \tau, \rho(\tau), \gamma(\cdot),\mu(\cdot))$-Weak \CRLCC of blocklength $n$ over the binary alphabet. In particular, if $\ell(\lambda) =\polylog \sec$ and $\sec \in \Theta(n)$ then the code is a $({\polylog n}, \tau, \rho, \gamma,\mu(\cdot))$-Weak \CRLCC. 
\end{theorem}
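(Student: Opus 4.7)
The plan is to build a systematic code following the template of cryptographic memory-hard functions. Fix a local expander DAG $G$ on $N=\Theta(n/\ell(\sec))$ vertices with bounded indegree $d$ and depth $D=\polylog n$. Given a message $x$ of length $k=\Theta(n)$, first apply a classical constant-rate, constant-distance binary code to obtain $\tilde{x}$; partition $\tilde{x}$ into $N$ data blocks of $\Theta(\ell(\sec))$ bits each, assigning one block $\tilde{x}_v$ to each vertex $v\in V(G)$. At each $v$ also store a hash label
\[
L_v \;=\; H\!\left(s,\,v,\,\tilde{x}_v,\,L_{u_1},\ldots,L_{u_d}\right),\qquad \{u_1,\ldots,u_d\}=\parents(v).
\]
The codeword $\Enc(s,x)$ is the concatenation of all pairs $(\tilde{x}_v,L_v)$, of total blocklength $n=\Theta(N\ell(\sec))$ and constant rate.

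On challenge index $i$, $\Dec^w(s,i)$ locates the vertex $v(i)$ whose record contains position $i$ and queries a structured neighborhood $Q(i)\subseteq V(G)$ of $v(i)$, consisting of the depth-$D$ ancestor cone together with a bounded number of downstream witnesses. Since $|Q(i)|\le d^{D}=\polylog n$ and each record has $\Theta(\ell(\sec))$ bits, total bit-query complexity is $\ell(\sec)\cdot\polylog n$. For every vertex $u\in Q(i)$ whose parents are also in $Q(i)$, the decoder recomputes the hash equation above from $w$'s corresponding pieces and compares it against $w$'s stored label at $u$; if every check passes it outputs the bit of $w$ at position $i$, and otherwise $\bot$. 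Condition \ref{weak1} is immediate: $w=\Enc(s,x)$ satisfies every hash equation by construction.

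For Condition \ref{weak2} I would reduce to collision resistance of $(\GenH,H)$. Suppose a \PPT channel $\A$ satisfies $\Pr[\Fool_{\A,s}(\gamma(\sec),\tau,\sec)]\ge\mu(\sec)$ with $\mu$ non-negligible: given $s$, it outputs a message $x$, a word $w$ at Hamming distance at most $\tau n$ from $c=\Enc(s,x)$, and an index $i$ such that $\Pr[\Dec^w(s,i)\notin\{c_i,\bot\}]>\gamma(\sec)$. On any such execution, every queried hash equation is satisfied by $w$, yet the block underlying position $i$ has been tampered, so $v(i)$ lies in the tamper set $T\subseteq V(G)$. Using the local expansion of $G$ I would argue that $T\cap Q(i)$ must contain a \emph{boundary vertex} $u^\star$: a vertex in $T$ one of whose hash-checked neighbors lies in $Q(i)\setminus T$. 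At such a $u^\star$, the honest neighbor pins one side of the hash equation to its honest value, while the tampered inputs at $u^\star$ produce another preimage with the same image, yielding a collision for $H(s,\cdot)$. A \PPT reduction samples $s\gets\GenH(1^{\sec})$, runs $\A(s)$, recomputes the honest $c$ from $x$, locates $u^\star$ inside $Q(i)$, and emits the two colliding preimages; its success probability is at least $\mu(\sec)-\negl(\sec)$, contradicting collision resistance.

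The main technical obstacle is the combinatorial boundary argument: the local expander $G$ must guarantee that for any tamper set $T$ with $|T|\le\tau N$ and every query neighborhood $Q(i)$ with $v(i)\in T$, $Q(i)$ contains at least one boundary vertex of $T$. Building $G$ with the requisite local expansion while keeping the indegree $d=O(\log N)$, the depth $D=\polylog n$, the rate constant, and the tamper budget $\tau$ a constant is the delicate balancing act --- this is where the construction borrows heavily from the local expander graphs used in memory-hard function constructions. Promoting the resulting structure to the binary alphabet via the outer binary code, without inflating the locality beyond $\ell(\sec)\cdot\polylog n$, completes the construction.
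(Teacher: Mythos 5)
Your high-level intuition (hash-label the vertices of a local expander, extract a collision from a boundary vertex when the decoder is fooled) matches the paper's, but there are two concrete gaps in the decoder that make the proposal fail as written. First, the query-complexity arithmetic does not close: a $\delta$-local expander on $N$ vertices has indegree $\Theta(\log N)$ and, since it contains the directed path $1\to2\to\cdots\to N$, has depth $\Theta(N)$. There is no ``depth-$D$ ancestor cone'' of size $d^{D}=\polylog n$ to query; the full ancestor cone of a typical vertex is $\Theta(N)$ nodes, and even if you somehow had constant indegree and $D=\polylog n$, $d^{D}$ would be at least $n^{\Omega(1)}$, not $\polylog n$. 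The paper sidesteps this entirely: its decoder never traces a cone, but instead runs a \emph{randomized} $\alpha$-goodness test (\lemref{lem:cgood:check}) that, for each geometric scale $r=2^{j}$, samples $\polylog n$ nodes in $[v-r+1,v]$ and $[v,v+r-1]$ and checks the green/red color of each --- total cost $\O{\polylog n}$ checks, each costing $\O{\ell(\sec)\log n}$ bits.

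Second, the attack surface is wrong. Checking the ancestor cone of $v(i)$ cannot detect tampering at $v(i)$: an attacker who changes $\tilde{x}_{v(i)}$ can simply set $L_{v(i)}=H(s,v(i),\tilde{x}_{v(i)},L_{u_1},\dots)$ so that the hash equation at $v(i)$ and all of its ancestors remain satisfied; tampering only propagates \emph{downstream}. You gesture at ``a bounded number of downstream witnesses,'' but this is where the actual construction lives and you do not specify it. Moreover, any deterministic query set $Q(i)$ of polylog size is known to the attacker, who has a budget of $\tau n$ bits --- enough to corrupt every record in $Q(i)$ --- so there need not be any boundary vertex of $T$ inside $Q(i)$ at all. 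The paper's construction supplies two missing ingredients: (a) a trusted anchor, namely $k'$ copies of $\ECC(\ell_{k',s})$ appended to the codeword, so the decoder can recover the true final label by sampling and majority even against an adaptive $\tau$-budget adversary; and (b) the $\alpha$-good test run on both the anchor node $k'$ and the target node, so that \lemref{lem:cgood:connected} guarantees a green path from the target to the anchor whenever both pass, after which \lemref{lem:cgood} converts any inconsistency along that path into a hash collision. Without the anchor and without the scale-sampling test, neither the polylog locality nor the collision-extraction argument goes through.
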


We note that in the above constructions the codes withstand a constant error rate. 
The classical \RLCCs of \cite{GurRR18} achieve $(\log n)^{\O{\log \log n}}$ query complexity, constant information rate, but subconstant error rate, in the Hamming channel. 
In order to obtain Strong \CRLCC we need further technical insights, that builds upon the Weak \CRLCCs. 
Our Strong \CRLCCs have the same parameters, with only a $\polylog n$ loss in query complexity.

\begin{theorem} \thmlab{strongRLCC}
 Assuming the existence of a collision-resistant hash function $ (\GenH,H)$   with length $\ell(\lambda)$, there exists a constant $0<\tau' < 1$ and negligible functions $\mu(\cdot),\gamma(\cdot)$ such that for all $\tau \leq \tau'$ there exist constants $0< r(\tau), \rho(\tau)<1$ such that there exists a $(\ell(\lambda)\cdot{\polylog n}, \tau, \rho(\tau), \gamma(\cdot),\mu(\cdot))$-Strong \CRLCC of blocklength $n$ over the binary alphabet with rate $r(\tau)$ where $r(.)$ and $\rho(.)$ have the property that $\lim_{\tau \rightarrow 0} r(\tau) = 1$ and $\lim_{\tau \rightarrow 0} \rho(\tau) = 1$. In particular, if $\ell(\lambda) =\polylog \sec$ and $\sec \in \Theta(n)$ then the code is a $({\polylog n}, \tau, \rho(\tau), \gamma(\cdot),\mu(\cdot))$-Strong \CRLCC.
\end{theorem}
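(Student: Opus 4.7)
The plan is to bootstrap the Strong \CRLCC from the Weak \CRLCC of \thmref{rlcc}, by opening up the latter's construction and establishing condition~(3) of the \CRLCC definition via a charging argument on the attacker's corruption budget. Conditions~(1) and~(2) will be inherited (up to minor structural tweaks) directly from the Weak \CRLCC, so the whole game is to control the ``blocked'' indices.

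First I would observe the following reduction. By condition~(2) applied to each fixed $i$, for every \PPT attacker $\A$ the probability that $\Dec^w(s,i)\notin\{\bot, c_i\}$ is already negligible. Consequently, an index $i$ can fail to lie in $\Goody_{\A,s}$ only if $\Pr[\Dec^w(s,i)=\bot]\ge 1/3 - \negl(\sec)$; it therefore suffices to show that the set of such blocked indices has size at most $(1-\rho(\tau))\,n$ except with negligible probability over $s \leftarrow \Gen(1^\sec)$ and the attacker's coins.

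Next I would open up the internal structure of the Weak \CRLCC decoder, which (building on the cryptographic memory-hard-function machinery alluded to in the abstract) authenticates the codeword via a Merkle-style hash tree whose shape is driven by a local expander graph. The decoder on input $i$ queries a small witness set $Q_i$ of size $\polylog n$ and outputs $\bot$ exactly when it detects a hash inconsistency; by collision resistance of $(\GenH,H)$, except with negligible probability such an inconsistency arises only when $Q_i$ actually intersects the attacker's corruption set $C\subseteq[n]$ of size at most $\tau n$. The key counting step is then to bound how many $i$ have high-probability overlap with $C$: exploiting the local expansion property, I would argue that each corrupted coordinate can influence the witness sets of only $\polylog n$ indices, so
\[
\sum_{i\in[n]} \Pr\!\left[Q_i\cap C\ne\emptyset\right] \;\le\; |C|\cdot\polylog n \;\le\; \tau n\cdot \polylog n ,
\]
and a Markov/averaging argument then yields that only $O(\tau)\cdot n$ indices are blocked with probability at least $1/3$. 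This produces $\rho(\tau)=1-O(\tau)$, so $\lim_{\tau\to 0}\rho(\tau)=1$. The rate $r(\tau)$ with $\lim_{\tau\to 0} r(\tau)=1$ is obtained by tuning the construction's internal parameters (Merkle-block arity and expander degree) as a function of $\tau$: a smaller adversarial budget permits a thinner authentication layer.

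I expect the main obstacle to be the charging step, since a clever channel may try to concentrate its $\tau n$ corruptions on ``hub'' coordinates that are touched by many witness sets simultaneously, and may also adaptively choose the index $i$ after inspecting $s$. Overcoming this requires carefully exploiting the local-expansion property to cap each corrupted symbol's aggregate influence at $\polylog n$ indices, and then amplifying the decoder's correctness probability from the Weak \CRLCC's inherent $1-\negl$ guarantee up to the $2/3$ threshold demanded by $\Goody_{\A,s}$; this amplification (a standard majority of independent re-runs) is the source of the extra $\polylog n$ factor in query complexity relative to \thmref{rlcc}.
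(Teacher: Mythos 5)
Your proposal treats the Strong \CRLCC as the Weak \CRLCC with a sharper analysis of the blocked-index set, but the paper explicitly shows that this route cannot work: with the Weak \CRLCC's encoding an attacker who flips only $\O{k\ell(\sec)/\log k}$ bits (a vanishing fraction of $n$) can turn \emph{every} node red, because a single corrupted label makes all $\O{\log k}$ of that node's children inconsistent, and the Weak decoder outputs $\bot$ the moment it suspects any red node in its sampled neighborhoods. Thus for any constant $\tau$ the attacker can force $\left|\Goody_{\A,s}\right|$ to be essentially $0$, so no constant $\rho(\tau)$ exists for that construction. Your modeling of the decoder is also off: it is not a deterministic check of a size-$\polylog n$ witness set $Q_i$ that rejects iff $Q_i\cap C\neq\emptyset$, but a randomized estimate of whether node $i'$ is $\alpha$-good with respect to the global set of red nodes, a property that depends on the fraction of red nodes in every dyadic neighborhood of $i'$. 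Redness propagates (one corrupted label reddens $\O{\log n}$ children, and each red node degrades the $\alpha$-goodness of \emph{all} nodes in whose neighborhoods it sits), so there is no $\polylog n$ cap on the ``influence'' of a corrupted coordinate on blocked indices. Finally, even if your charging bound held, $\sum_i \Pr[Q_i\cap C\ne\emptyset]\le\tau n\cdot\polylog n$ only yields that $\O{\tau n\cdot\polylog n}$ indices are blocked with constant probability; this is $\omega(n)$ for constant $\tau$ and growing $n$, so it does not produce a constant $\rho(\tau)$, let alone one tending to $1$.

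The paper instead builds a genuinely new code, and this is where the real work lies. Three ingredients are needed. First, a degree-reduction gadget $\ReduceDegree$ turns the $\delta$-local expander $G_0$ on $t$ meta-nodes into a graph $G$ on $k'=tm$ nodes in which almost every node has in/out-degree $2$; this caps how much damage a single altered label can do. Second, the labels and data values are grouped $m$ at a time before applying $\ECC$, so tampering with even one label in a group requires flipping $\Omega(m\ell(\sec))$ bits; this is what makes the number of tampered meta-nodes $|T|$ proportional to the attacker's budget rather than to the budget times $\log n$. Third, the decoder is replaced: rather than rejecting upon detecting any red node, it tests whether the green meta-subgraph $G_g$ still has $4\delta$-local expansion around the queried meta-node (via sampled edge queries in each dyadic annulus), and the correctness argument runs through \lemref{lem:alpha:expansion}, \lemref{lem:greenconnected}, and \lemref{lem:connected:correct} to show that $\alpha$-goodness with respect to $R\cup T$ implies $2\delta$-local expansion of $G_g$, that $4\delta$-local expansion around two meta-nodes yields a green path between them, and that such a path transfers correctness by collision resistance. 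The $\rho(\tau)\to 1$ and $r(\tau)\to 1$ bounds then come from tuning $\delta$, $\alpha$, $\beta$, and the inner-$\ECC$ rate $R$ as functions of $\tau$, not from an amplification of the Weak decoder.
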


\begin{remark}
We remark that any Weak \CRLCC can be trivially converted into a Strong \CRLCC e.g., by repeating the last bit of the codeword $cn$ times to obtain a new codeword of length $(c+1)n$. If we select the constant $c>0$ large enough then we can guarantee that, say $\tau (c+1)n \leq 4cn/10$, so that a majority of these repeated bits in the codeword are not corrupted and can be corrected with a simply majority vote. Thus, we can artificially ensure that $\rho(\tau) > \frac{c}{c+1}$ since at least $cn$ bits (out of $(c+1)n$ bits) of the codeword can be locally recovered by majority vote. However, to ensure that $\rho(\tau) > \frac{2}{3}$ we would need to set $c=2$ which {\em necessarily}  reduces the rate  to $r(\tau) < \frac{1}{c+1} = \frac{1}{3}$. By contrast, our  Strong \CRLCC has rate $r(\tau)$ which approaches $1$ as $\tau$ approaches $0$. Furthermore, we have the property that $\rho(\tau)$ approaches $1$ as $\tau$ approaches $0$ which means that {\em almost all bits} of the original codeword can be locally recovered. We also remark that our constructions are systematic and can be tweaked to yield the existence of a Strong \CRLDC with the same parameters. See additional discussion in \appref{crldc}.
\end{remark}

%  !TEX root=main.tex
\subsection{Technical Ingredients}
At a technical level our construction uses two main building blocks: local expander graphs and collision resistant hash functions. 

\paragraph*{Local Expander Graphs} Intuitively, given a graph $G=(V, E)$ and distinguished subsets $A,B \subseteq V$ of nodes such that $A$ and $B$ are disjoint and $|A|\leq |B|$ we say that the pair $(A,B)$ contains a $\delta$-expander if for all $X\subseteq A$ and $Y \subseteq B$ with $|X| > \delta |B|$ and $ |Y| > \delta|B|$ there is an edge connecting $X$ and $Y$ \emph{i.e.}, $(X \times Y) \cap E \neq \emptyset$. We say that a DAG $G$ is a $\delta$-local expander around a node $v$ if  for {\em any} radius $r>0$ and {\em any} node $v \geq 2r$ the pair $A = \{v-2r+1,\ldots, v-r\}$ and $B = \{v-r+1,\ldots, v\}$  contain a $\delta$-expander {\em and} the pair $C=\{v,\ldots,v+r-1\}$ and $D=\{v+r,\ldots,v+2r-1\}$ contain a $\delta$-expander. When this property holds for {\em every} node $v \in V(G)= \{1,\ldots,n\}$ we simply say that the DAG $G$ is a $\delta$-local expander. For any constant $\delta>0$ it is possible to (explicitly) construct a $\delta$-local expander with the property that $\indeg(G) \in \O{\log n}$ and $\outdeg(G) \in \O{\log n}$ i.e. no node  has more than $\O{\log n}$ incoming or outgoing edges~\cite{ErdosGS75,EC:AlwBloPie18}. 

Local expanders have several nice properties that have been recently exploited in the design and analysis of secure data-independent memory hard functions (iMHFs)~\cite{CCS:AlwBloHar17,EC:AlwBloPie17,TCC:BloZho17,EC:AlwBloPie18}. 
For example, these graphs are maximally depth-robust~\cite{EC:AlwBloPie18}. Even if we delete a {\em large} number of nodes $S \subseteq V$ the graph still contains a directed path of length $n-(1+\epsilon)\left|S\right|$ for some small constant $\epsilon$ dependent on $\delta$ --- the constant $\epsilon$ can approach $0$ as  $\delta$ approaches $0$~\cite{EC:AlwBloPie18}. 
More specifically, if we delete a large number of nodes $S \subseteq V$ at least $n-(1+\epsilon)\left|S\right|$  of the nodes have the property that they are $\alpha$-good (with $\epsilon = \left(\frac{2-\alpha}{\alpha}\right)$)  with respect to the deleted set $S$ and {\em any} pair of $\alpha$-good nodes $u$ and $v$ are connected by a directed path (provided that $\delta$ is sufficiently small) --- a node $v$ is $\alpha$-good with respect to $S$ if for any radius $r < v$ we have at most $\alpha r $ nodes in $S \cap [v-r+1,v]$ {\em and} for any radius $r \leq n-v+1$ we have at most  $\alpha r $ nodes in $S \cap[v,v+r-1]$. For more formal statements we point the reader to~\appref{app:prelims}.
In the context of memory hard functions each node in the graph corresponds to an intermediate data value generated during computation of the iMHFs and edges represent data-dependencies. It is known that an iMHF has high cumulative memory complexity~\cite{STOC:AlwSer15} if and only if the underlying DAG $G$  is depth-robust~\cite{C:AlwBlo16,EC:AlwBloPie17}~\footnote{Oversimplifying a bit, if an attacker attempts to reduce memory usage during computation then the attacker's running time will increase dramatically since, by depth-robustness, there will be a long chain of dependent data-values that that the attacker needs to recompute in the near future.}.

Suppose that each node is colored red or green and that we are only allowed to query each node to obtain its color. 
If we let $S$ denote the set of red nodes then we can develop an {\em efficient} randomized testing algorithm to check if a node $v$ is $\alpha$-good or not. 
The tester will make $\O{\polylog n}$ queries and, with high probability, will {\em accept} any node $v$ that is $\alpha_1$-good and will {\em reject} any node $w$ that is {\em not} $\alpha_2$-good for {\em any} constants $\alpha_2 > 4\cdot\alpha_1$. 
Intuitively, for each $r \in \{2^1,2^2,\ldots,2^{\log n}\}$ the tester will sample $\O{\polylog n}$ nodes in the intervals $[v, r-1]$ and $[v-r+1,v]$ to make sure that the fraction of red nodes is at most $2 \alpha_1$. 
If the tester determines that a node $v$ is {\em at least} $\alpha_2$-good for an appropriately small constant $\alpha_2$ then we can be (almost) certain that the long green path which contains {\em all} $n-(1+\epsilon)|S|$ of the  $\alpha_2$-good nodes also includes $v$. Furthermore, if $v < 3n/4$ then $v$ has at least $n/4- (1+\epsilon)|S|$ descendants in this directed green path. Intuitively, in our construction any such node $v$ must correspond to an {\em uncorrupted} portion of the codeword.

\paragraph{Collision Resistant Hash Functions} Our constructions employ {\em collision resistant hash functions} as a building block. 
While most of the recent progress on memory hard functions in cryptography combines local expanders (depth-robust graphs) with random oracles (e.g., see \cite{STOC:AlwSer15,TCC:AlwTac17,EC:AlwBloPie18}), we stress that we do not need to work in the random oracle model\footnote{The random oracle model is a source of some controversy among cryptographers~\cite{JC:KobMen07,EC:Menezes12,EPRINT:Goldreich06b,EPRINT:KobMen15} with some arguing that the framework can be used to develop cryptographic protocols that are {\em efficient and secure}~\cite{CCS:BelRog93}, while others argue that the methodology is flawed e.g.~\cite{AC:BerLan13,EPRINT:KobMen15}.}. 
Indeed, our constructions only assume the existence of collision resistant hash functions. 

\paragraph{Weak \CRLCCs} We first explain our construction of Weak \CRLCCs, which involves labeling a $\delta$-local expander with $k$ nodes. 
In particular, given an input word $x= (x_1\circ \ldots \circ x_k)$ (broken up into bit strings of length ${\ell(\sec)}$) and a $k$ node local expander graph $G$, the label of node $v$ is computed as $\ell_{v,s} = H\left(s, x_v \circ \ell_{v_1,s}\circ\ldots\circ \ell_{v_d,s}\right) \in\{0,1\}^{\ell(\sec)},$ where $\ell_{v_1,s}, \ldots ,  \ell_{v_d,s}$ are the labels of the parents $v_1,\ldots, v_d$ of node $v$, and $\circ$ denotes string concatenation.  
When $\ell(\sec) \in \O{\polylog \sec}$ we will select $\sec \in \O{n}$ to ensure that $\ell(\sec) \in \O{\polylog n}$.
We use the notation $\Enc$ and $\Dec$ for the encoding and decoding of our construction, while we use $\ECC$ and $\ECCD$ to denote the efficient encoding and decoding algorithms for a good binary code with constant rate and relative distance (e.g., \cite{Justesen72,SipserS96}) which can decode efficiently from some constant fractions of errors.

We first apply 
$\ECC$ to $x_1,\ldots, x_k$ to obtain codewords $c_1,\ldots,c_k \in \{0,1\}^{\O{\ell(\sec)}}$ where $c_i = \ECC(x_i)$. Also, for $v \in [k]$ we let $c_{v+k} = \ECC(\ell_{v,s})$ which is the encoding of the label corresponding to the node $v$ in $G$. 
The final output is $c=\left(c_1 \circ \ldots \circ c_{2k-1} \circ c_{2k} \circ c_{2k+1} \circ \ldots \circ c_{3k}\right)$ where $c_{2k+1}= \ldots = c_{3k}=c_{2k}$ consists of $k$ copies of the last codeword $c_{2k}$. 
The final word is an $n$ bit message with $n = \O{k\ell\left( \sec\right)}$. 
By repeating this last codeword $k$ times we ensure that it is {\em not possible} for the attacker to irreparably corrupt the final label $\ell_{k,s}$.  
 
Given a (possibly corrupted) codeword $c'$ produced by a \PPT attacker $\A$ 
we let $x' = (x_1' \circ \ldots \circ x_k')$ with $x_i' = \ECCD(c_i')$ (possibly $\bot$) and we let $\ell_{v,s}' = \ECCD(c_{v+k}')$ for $v\in[k]$ and $\ell_{k,s,j}' = \ECCD(c_{2k+j}')$ for each $j \in [k]$. 
We say that a node $v$ is {\em green} if it is locally consistent i.e.,
\[\ell_{v,s}' = H\left(s, x_v' \circ \ell_{v_1,s}' \circ \ldots \circ  \ell_{v_d,s}'\right),\]
otherwise, we say that the node is {\em red}. 
We first show that if a green node has the correct label $\ell_{v,s}' = \ell_{v,s}$ then it {\em must} be the case that $x_v' = x_v$ and $\ell_{v_i,s}' = \ell_{v_i,s}$ for each $i \leq d$ --- otherwise $\A$ would have found a hash collision! 
If a graph contains too many red nodes then this is easily detectable by random sampling and our {\em weak} local decoder is allowed to output $\bot$ if it detects {\em any} red nodes. 
Our local decoder will first obtain the final label $\ell_{k,s}$ by random sampling some labels from $\ell_{k,s,1}', \ldots, \ell_{k,s,k}'$ and checking to make sure each of these labels is equal to $\ell_{k,s}'$. 
If this check passes then with high probability we must have $\ell_{k,s}' = \ell_{k,s}$ since the attacker cannot corrupt too many of these labels. 
Second, our local decoder will test to make sure that the last node $k$ is {\em at least} $\alpha_2$-good. If this node is not $\alpha_2$-good then we must have found a red node and our {\em weak} decoder may output $\bot$; otherwise the last node serves as an {\em anchor point}. 
In particular, since label $\ell_{k,s} = \ell_{k,s}'$ in this case collision resistance now implies that for {\em any}  $\alpha_2$-good node $v$  then we must have $x_v' = x_v$ and $\ell_{v,s} = \ell_{v,s}'$ since $v$ {\em must} be connected to the node $k$ by a green path.

\paragraph{Strong \CRLCCs} The reason that the previous construction fails to yield a high rate Strong \CRLCCs is that it is possible for an attacker to {\em change} every node to a red-node by tampering with at most $\O{k\cdot \ell(\sec)/ \log k}$ bits. In particular, since the $\delta$-local expander $G$ has outdegree $\O{\log k}$ an attacker who tampers with just $\O{\ell(\sec)}$ bits could tamper with one of the labels so that $x_v' \neq x_v$. 
Now for {\em every} $w \in [k]$ s.t. $G$ contains the directed edge $(v,w)$ the node $w$ will be red and there are up to $\O{\log k}= \O{\log n}$ such nodes $w$. 

We address this issue by first applying a degree reduction gadget to our family $\{G_t\}_{t=1}^\infty$ of $\delta$-local expanders, where $t=\O{\frac{k}{\ell(\sec) \cdot \log k}}$, to obtain a new family of DAGs as follows: 
First, we replace every node $u \in [t]$ from $G_t$ with a chain $u_1,\ldots,u_m$ of $m=\O{\log t}$ nodes --- we will refer to node $u \in [t]$ as the meta-node for this group. 
The result is a new graph  $G$ with $k= m\cdot t$ nodes.  
Each of the nodes $u_1,\ldots,u_{m-1}$ will have constant indegree and constant outdegree. However, for $i<m$ we include the directed edges $(u_i,u_{i+1})$ and $(u_i, u_m)$ --- the node $u_m$ will have $\indeg(u_m) \in \Theta(m)$. Furthermore, if we have an edge $(u,v)$ in $G_t$ then we will add an edge of the $(u_m,v_j)$ for some $j < m$ --- this will be done in such a way that maintains $\indeg(v_j) \leq 2$ for each $j < m$. 
Therefore, the node $u_m$ will have $\outdeg(u_m) \in \Theta(\log t)$. 
We now note that if the label $\ell_{u_m,s}' = \ell_{u_m,s}$ and the node $u_m$ is green then, by collision resistance, it must be the case that $\ell_{u_i,s}' = \ell_{u_i,s}$ and $x_{u_i}' = x_{u_i}$ for {\em every} $u_i$ with $i \leq m$. 

With this observation in mind we tweak our last construction by grouping the values $\ell_{v_i,s}$ and $x_{v_i}$ into groups of size $m \in \O{\log t}$ before applying the error correcting code. 
In particular, given our input word $x$ divided into $k = m\cdot t$ distinct $\ell(\sec)$-bit strings $x_{v_i}$ for $i \leq m$ and $v \leq t$, our final output will consist of $c=\left(c_1,\ldots,c_{3t}\right)$ where $c_i = \ECC(x_{(i-1)m+1},\ldots,x_{im})$ for $i \leq t$ and $c_{v+t} = \ECC\left( \ell_{v_1, s} \circ \ldots \circ \ell_{v_m, s}\right)$ and $c_{2t+1} = \ldots = c_{3t} = c_{2t}$ consists of $t$ copies of $c_{2t}$. The final codeword $c$ will have length $n \in \O{tm \cdot \ell(\sec)} = \O{k \cdot \ell(\sec)} $ bits. 
By grouping blocks together we get the property that an attacker that wants to alter an individual label $\ell_{v_i, s}$ {\em must} pay to flip {\em at least} $\Omega\left( \ell(\sec) \cdot  m \right)$ bits of $c_{v+t}$. 
Thus, the attacker is {\em significantly} restricted in the way he can tamper with the labels e.g., the attacker cannot tamper with one label in every group. 

We say that the {\em meta-node} $u \in [t]=V(G_t)$ containing nodes $u_1, \ldots, u_m \in V(G)$ is  green if (1) node $u_m$ is green, and (2) at least $2m/3$ of the nodes $u_1,\ldots,u_m$ are green. 
We say that an edge $(u,v) \in E(G_t)$ in the meta-graph is red if the corresponding edge $(u_m,v_j) \in E(G)$ is incident to a red node; otherwise we say that the edge $(u,v)$ is green. We can then define the green meta-graph $G_g$ by dropping all red edges from $G_t$. 
We remark that, even if we flip a (sufficiently small) constant fraction of the bits in $c$ we can show that {\em most} of the meta-nodes must be green --- in fact, most of these meta-nodes must also be $\alpha_2$-good with respect to the set $R$ of red-nodes.  
In particular, there are only two ways to turn a meta-node $u$ red: (1) by paying  to flip $\O{\ell(\sec) \cdot m}$ bits in $c_u$ or $c_{u+t}$, and (2) by corrupting at least $m/3$ other meta-nodes $w$ such that the directed edge $(w,u)$ is in $G_t$. 

Furthermore, we can introduce the (unknown) set $T$ of tampered meta-nodes where for $i \leq 3t$ we say that $\tilde{c}_i$ is tampered if $\ECC\left(\ECCD(\tilde{c}_i)\right) \neq c_i$. In this case the hamming distance between $c_i$ and $\tilde{c}_i$ must be at least $\O{\ell(\sec) \cdot m}$ so the number $|T|$ of tampered meta-nodes cannot be too large. While the set $T$ is potentially unknown we can still show that {\em most} meta-nodes are $\alpha$-good with respect to the set $R \cup T$. 
In particular, there must exist some meta-node $\frac{3t}{4} \leq v \leq t$ s.t. $v$ is $\alpha$-good with respect to the set $R \cup T$. By collision resistance, it follows that for any meta-node $u < v$ which is connected to $v$ via a green path in $G_g$ we have $u \notin T$ e.g., $u$ is correct; otherwise we would have found a hash collision. 

While $G_t$ is a $\delta$-local expander the DAG $G_g$ may not be. 
However, we can show that, for suitable constants $\alpha$ and $\delta$, if a node $u$ is $\alpha$-good with respect to the set $R \cup T$ then $G_g$ has $2\delta$-local expansion around $u$. Furthermore, if $u < v$ has  $4\delta$-local expansion in $G_g$ then we can prove that $G_g$ contains a path to the meta-node $u$. 
Our local decoder will test whether or not $G_g$ has local expansion by strategically sampling edges to see if they are green or red. 
Our decoder will output $\bot$ with high probability if $u$ does {\em not} have $4\delta$-local expansion in $G_g$ which is the desired behavior since we are not sure if there is a green path connecting $u$ to $v$ in this case. 
Whenever $u$ does have $2\delta$-local expansion in $G_g$ the decoder will output $\bot$ with negligible probability which is again the desired behavior since there is a green path from $u$ to $v$ in this case (and, hence, $u \not T$ has not been tampered).  

\begin{remark}
Recall that in \thmref{thm:rlcc} and \thmref{strongRLCC} we construct a $({\polylog n}, \tau, \rho, \gamma,\mu(\cdot))$-Weak \CRLCC and a $({\polylog n}, \tau, \rho(\tau), \gamma(\cdot),\mu(\cdot))$-Strong \CRLCC where $\mu(\sec)$ represents the probability that the adversarial channel outputs a codeword $\tilde{c}$ and challenge index $i \in [k]$ that $\gamma(\sec)$-fools the local decoder. 
Intuitively, we can think of $\mu(\sec)$ as the probability that the codeword generated by the channel yields a hash collision and we can think of $\gamma(\sec)$ as the probability that we mis-identify a node (or meta-node) as being $\alpha$-good. 
If there are no hash collisions then the {\em only} way that the local decoding algorithm can be fooled is when it mis-identifies a node (or meta-node) as being $\alpha$-good. We show that {\em both} $\mu(\cdot)$ and $\gamma(\cdot)$ are negligible functions.
\end{remark}

%  !TEX root=main.tex

\section{Construction of Weak-\CRLCC}
\seclab{sec:lcc}
In this section we overview the construction of a constant rate weak-\CRLCC scheme. 
In order to show the existence of a \CRLCC scheme, we need  to construct the \PPT algorithms $\Gen$, $\Enc$ and $\Dec$. 
Our construction will use a \CRHF $\Pi = (\GenH, H)$. 
In particular, $\Gen\left(1^\sec\right)$ simply runs $\GenH\left( 1^\sec\right)$ to output the public seed $s$. 

The $\Enc$ algorithm uses the \CRHF to create labels for the vertices of a $\delta$-local expander. 
We now define the recursive labeling process for the vertices of any given DAG $G$ using $H$. 
\begin{definition}[Labeling]
\deflab{def:lab}
Let $\Sigma = \{0,1\}^{\ell(\sec)}$. Given a directed acyclic graph $G=( [k],E)$, a seed $s \leftarrow \GenH\left(1^\sec\right)$ for a collision resistant hash function $H:\{0,1\}^* \rightarrow \{0,1\}^{\ell(\sec)}$,  and a message $x = (x_1, x_2, \ldots,  x_k) \in \Sigma^k$  
we define the labeling of graph $G$ with $x$, $\lab_{G,s}:\Sigma^k \rightarrow\Sigma^{k}$  as
$\lab_{G, s}(x) = (\ell_{1,s}, \ell_{2,s}, \ldots, \ell_{k,s})$, where 
\[\ell_{v,s}=
\begin{cases}
H(s, x_v),&\indeg(v)=0 \\
H(s, x_v\circ\ell_{v_1,s} \circ \ldots \circ \ell_{v_d,s}),& 0< \indeg(v)=d,
\end{cases}\]
where $v_1,\ldots,v_d$ are the parents of vertex $v$ in $G$, according to some predetermined topological order. 
We omit the subscripts $G,s$ when the dependency on the graph $G$ and public hash seed $H$ is clear from context.
\end{definition}

\subsection{$\Enc$ Algorithm }\seclab{sec:lccencode}
In this section we describe the $\Enc$ algorithm which takes as input a seed $s \leftarrow \Gen\left(1^\sec \right)$ and a message $x \in \{0,1\}^k$, and returns a codeword $c \in \{0,1\}^n$. 
For a security parameter $\sec$, let $H: \{0,1\}^* \rightarrow \{0,1\}^{\ell(\sec)}$ be a \CRHF (see \defref{def:crhf}). 
We will assume that $\Enc$ has access to $H$. 
In this section we also let $\ECC:\{0,1\}^{\ell(\sec)} \rightarrow \{0,1\}^{4 \cdot \ell(\sec)}$ be a standard binary code with message length $\ell(\sec)$ and block length $4 \cdot \ell(\sec)$ (such as Justesen code \cite{Justesen72}) and let $\ECCD:\{0,1\}^{4 \cdot \ell(\sec)} \rightarrow \{0,1\}^{\ell(\sec)}$ be the decoder that efficiently decodes from $\Delta_J$ fraction of errors.  

\begin{mdframed}
Let $s \leftarrow \Gen(1^\sec)$.  Let $k' = k/\ell(\sec)$ and for $\delta = 1/100$, let $G = ([k'], E)$ be a $\delta$-local expander graph with indegree $\O{\log k'}$.\\
\textbf{\Enc}$(s, x)$: \\
\underline{Input:}  $x = (x_1 \circ \ldots \circ x_{k'}) \in \{0,1\}^{k}$ where each $x_i \in \{0,1\}^{\ell(\sec)}$ and $k =k' \cdot \ell(\sec)$. \\
\underline{Output:} $c = (c_1 \circ \ldots \circ c_{3k'}) \in \{0,1\}^n$  where each $c_i \in \{0,1\}^{4\cdot \ell(\sec)}$ and $n = 4 \cdot 3k' \cdot \ell(\sec)$
\begin{itemize}
\item 
Let $\lab_{G,s}(x)= \left(\ell_{1,s} , \ldots, \ell_{k', s} \right) $ be the labeling of the graph $G$ with input $x$ and seed $s$.
\item  
The codeword $c$ consists of the block encoding of message $x$, followed by the encoded labeling of the graph using $\ECC$, followed by $k'$ copies of encoded $\ell_{k'}$. In particular, for each $1 \leq i \leq k'$ we have $c_i = \ECC(x_i)$, $c_{i+k'} = \ECC(\ell_{i,s})$ and $c_{i+2k'} = \ECC(\ell_{k',s}) = c_{2k'}$. 
\end{itemize}

\end{mdframed}
 
\begin{center}
\begin{figure*}[htb]
\centering
\begin{tikzpicture}[scale=1]
\draw[decorate,decoration={brace}](-3,-0.2) -- (-1,-0.2);
\draw[decorate,decoration={brace,mirror}](-3,-0.8) -- (-1,-0.8);
\node at (-2,-1.2){$4k'\ell(\sec)$ bits};
\node at (-2,0.6){Block encoding of};
\node at (-2,0.25){original message $x$};
\node at (-3,-0.5){\tiny{$\ECC({x_{1}})$}};
\node at (-2,-0.5){$\ldots$};
\node at (-1.2,-0.5){\tiny{$\ECC({x_{k'}})$}};
\node at (10,-0.5){\tiny{$\ECC(\ell_{k',s})$}};
\draw (0,0) circle (0.2);
\draw[->, black] (0.3,0) -- (0.7,0);
\node at (0.2,-0.5){\tiny{$\ECC(\ell_{1,s})$}};
\draw (1,0) circle (0.2);
\draw[->, black] (1.3,0) -- (1.7,0);
\node at (1,-0.5){\tiny{$\ldots$}};
\draw (2,0) circle (0.2);
\draw[->, black] (2.3,0) -- (2.7,0);
\node at (2,-0.5){\tiny{$\ldots$}};
\draw (3,0) circle (0.2);
\draw[->, black] (3.3,0) -- (3.7,0);
\node at (3,-0.5){\tiny{$\ldots$}};
\draw (4,0) circle (0.2);
\draw[->, black] (4.3,0) -- (4.7,0);
\node at (4,-0.5){\tiny{$\ECC(\ell_{5,s})$}};
\node at (5,0){$\ldots$};
\draw[->, black] (5.3,0) -- (5.7,0);
\filldraw[shading=radial,inner color=white, outer color=black!75, opacity=0.8] (6,0) circle (0.2);
\draw (6,0) circle (0.2);
\node at (5.7,-0.5){\tiny{$\ECC(\ell_{k',s})$}};
\filldraw[shading=radial,inner color=white, outer color=black!75, opacity=0.8] (7,0) circle (0.2);
\draw (7,0) circle (0.2);
\node at (7,-0.5){\tiny{$\ECC(\ell_{k',s})$}};
\draw[decorate,decoration={brace,mirror}](0,-0.8) -- (6,-0.8);
\node at (3.5,-1.2){$4k'\ell(\sec)$ bits};
\filldraw[shading=radial,inner color=white, outer color=black!75, opacity=0.8] (8,0) circle (0.2);
\draw (8,0) circle (0.2);
\node at (8.2,-0.5){\tiny{$\ECC(\ell_{k',s})$}};
\node at (9,0){$\ldots$};
\filldraw[shading=radial,inner color=white, outer color=black!75, opacity=0.8] (10,0) circle (0.2);
\draw (10,0) circle (0.2);
\node at (10,-0.5){\tiny{$\ECC(\ell_{k',s})$}};
\draw[decorate,decoration={brace}](7,0.4) -- (10,0.4);
\node at (8.5,0.7){$k'$ copies};
\draw[decorate,decoration={brace,mirror}](7,-0.8) -- (10,-0.8);
\node at (8.5,-1.2){$4k'\ell(\sec)$ bits};
\draw [->] (0,0.3) to [out=30,in=150] (2,0.3);
\draw [->] (0,0.3) to [out=45,in=135] (3,0.3);
\draw [->] (1,0.3) to [out=45,in=135] (4,0.3);
\draw [->] (3,0.3) to [out=45,in=135] (6,0.3);
\end{tikzpicture}
\caption{An example of an encoding with an underlying graph with $k' = k/\ell(\sec)$ nodes}
\end{figure*}
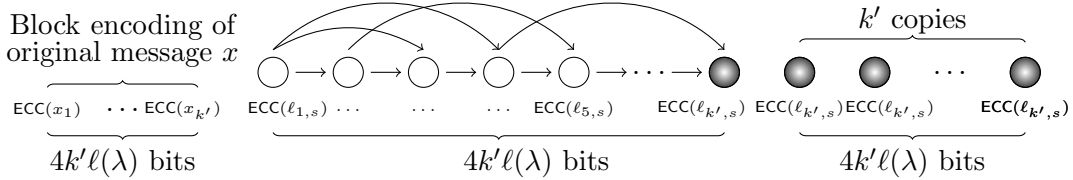
\end{center}
The parameter $\delta=1/100$ is an arbitrary simply chosen to satisfy \lemref{lem:cgood:connected}. 
Also, note that the codeword length is $n= 4 \cdot 3k' \cdot \ell(\sec)$ over the binary alphabet $\{0,1\}$, where $\ell(\sec)$ is the length of the output of the \CRHF $H$ and the original message had length $k = k' \cdot \ell(\sec)$. 
Therefore, the rate $\frac{k}{n}$ of \CRLCC scheme obtained from $\Enc$ is $\Theta(1)$.
%  !TEX root=main.tex

\subsection{$\Dec$ Algorithm}
\seclab{sec:lccdecode}

In this section, we detail a randomized algorithm $\Dec: \{ 0,1\}^{n} \times [n] \rightarrow  \{\bot \} \cup \{0,1\}$ for $\Enc: \{ 0,1\}^{k} \rightarrow \{ 0,1\}^{n}$ described in \secref{sec:lccencode}. 
We assume $\Dec$ has access to the same public seed $s$ used by $\Enc$ as well as the $\delta$-local expander used by $\Enc$. 
We focus on the key ideas first and then provide the formal description of $\Dec$. 
The notion of a green node is central to $\Dec$. 
 
\begin{definition}[Green/red node] Given a (possibly corrupted) codeword $w=(w_1 \circ \ldots \circ w_{3k'}) \in (\{ 0,1\}^{4 \cdot \ell(\sec)})^{3k'}$ we define $x_i' = \ECCD(w_i) \in \{0,1\}^{\ell(\sec)}$ for $i \leq k'$ and $\ell_{v,s}' = \ECCD(w_{k'+v})$ for $v \leq k'$ and $\ell_{k',s,i}' = \ECCD(w_{2k'+i})$ for $i \leq k'$. 
 We say that a node $v \in [k']$ with parents $v_1,\ldots, v_d$ is {\em green} if the label $\ell_{v,s}'$ is consistent with the hash of its parent labels i.e., $\ell_{v,s}' = H(s, x_v' \circ \ell_{v_1, s}' \circ \ell_{v_2,s}' \circ \ldots \circ \ell_{v_d, s}')$. A node that is not green is a {\em red} node.
\end{definition}

We say that a node $v$ is {\em correct} if $\ell_{v,s}' = \ell_{v,s}$.
\noindent
\lemref{lem:cgood} highlights one of the key reasons why green nodes are significant in our construction. 

\begin{lemma}{\lemlab{lem:cgood}}
Suppose that node $v$ is {\em green} and {\em correct} (i.e., $\ell_{v,s}' = \ell_{v,s}$) and suppose that there is a directed path from node $u$ to node $v$ consisting {\em entirely} of green nodes. 
Then either the node $u$ is also correct (i.e., $\ell_{u,s}' = \ell_{u,s}$) or the \PPT adversary has produced a hash collision.
\end{lemma}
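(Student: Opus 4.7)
The plan is to argue by induction walking backwards along the directed green path from $v$ to $u$. The inductive invariant is that every node on the path is correct, assuming no hash collision has been produced.

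First I would establish the one-step claim: if $w$ is green and correct, then every parent $w_j$ of $w$ is correct, unless the adversary has found a hash collision. Since $w$ is green, the definition gives
\[\ell_{w,s}' = H\bigl(s,\, x_w' \circ \ell_{w_1,s}' \circ \cdots \circ \ell_{w_d,s}'\bigr).\]
Since $w$ is correct, $\ell_{w,s}' = \ell_{w,s}$, and by \defref{def:lab} the true label satisfies
\[\ell_{w,s} = H\bigl(s,\, x_w \circ \ell_{w_1,s} \circ \cdots \circ \ell_{w_d,s}\bigr).\]
So the two strings $x_w' \circ \ell_{w_1,s}' \circ \cdots \circ \ell_{w_d,s}'$ and $x_w \circ \ell_{w_1,s} \circ \cdots \circ \ell_{w_d,s}$ hash to the same value under $H(s,\cdot)$. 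Either they are identical, in which case $x_w' = x_w$ and $\ell_{w_j,s}' = \ell_{w_j,s}$ for every parent $w_j$ (so every parent is correct), or they differ, in which case we have exhibited a collision for $H(s,\cdot)$ which a \PPT adversary can produce only with negligible probability.

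Next I would iterate this observation along the given green path $u = u_0, u_1, \ldots, u_t = v$ from $u$ to $v$. Starting from the base case that $u_t = v$ is green and correct by hypothesis, the one-step claim applied at $u_t$ forces the parent $u_{t-1}$ (which lies on the green path and is therefore green) to be correct, barring a collision. Repeating the argument at $u_{t-1}, u_{t-2}, \ldots, u_1$, each of which is green by assumption on the path, propagates correctness backwards, so eventually $u = u_0$ is correct, i.e.\ $\ell_{u,s}' = \ell_{u,s}$. The only way for this chain of reasoning to fail at any step is for the adversary to have found two distinct preimages mapping to the same hash value at that step, which by the collision resistance of $\Pi$ occurs with only negligible probability.

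The argument is almost entirely formal bookkeeping; the only subtle point to be careful about is that in the one-step claim, the parent $w_j$ of $w$ on the green path need not be the only parent of $w$, but since we recover equality of the whole concatenation $\ell_{w_1,s}' \circ \cdots \circ \ell_{w_d,s}'$ with $\ell_{w_1,s} \circ \cdots \circ \ell_{w_d,s}$ (after fixing the same topological order used in the definition of $\lab_{G,s}$), the relevant parent on the path is indeed correct. There is no genuine obstacle here: the lemma is essentially a direct unwinding of the definitions of ``green'' and of the labeling $\lab_{G,s}$, combined with collision resistance of $\Pi$.
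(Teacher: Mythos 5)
Your proof is correct and matches the paper's argument essentially step for step: both isolate the one-step claim that a green and correct node forces its parents to be correct (barring a hash collision, since the two preimages of $H(s,\cdot)$ must coincide), and both then iterate that claim backwards along the green path from $v$ to $u$. The only cosmetic difference is that you make the induction structure explicit, whereas the paper phrases the iteration informally.
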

\begin{proof}
If node $v$ is green, then by definition $\ell_{v,s}' = H(s,x_v' \circ \ell_{v_1,s}' \circ \ldots \circ \ell_{v_d,s}')$ where $v_1,\ldots,v_d$ are the parents of node $v$. Moreover, if $\ell_{v,s}'$ is unaltered, then $\ell_{v,s}' = \ell_{v,s} $, \emph{i.e.}, 
\[H(s,x_v \circ \ell_{v_1,s} \circ \ell_{v_2,s} \circ \ldots \circ \ell_{v_d,s}) = H(s,x_v' \circ \ell_{v_1,s}' \circ \ldots \circ \ell_{v_d,s}').\]
So, either the adversary has successfully found a hash collision, or  $x_v = x_v'$ and $\ell_{v_j,s} = \ell_{v_j,s}'$ for each $j \in [d]$. 
Let $v_j$ be the parent node of $v$ that lies on the green path from $u$ to $v$. Assuming we have no hash collision, we can conclude that the node $v_j$ is {\em both} correct and green.  Extending the same argument iteratively along the green path from $u$ to $v_j$, we get that either node $u$ must be correct, or the adversary would have found a hash collision. 
\end{proof}

Our local decoder will output $\bot$ if it ever detects any red nodes (Note that when the codeword is not corrupted we will have $0$ red nodes). 

We now consider two cases.
{\bf Case 1:} The input index $i \ge 8k-4\ell(\sec)$. 
This corresponds to a bit query within the last $k'$ blocks. 
From the construction of the code, we know that the last $k'$ blocks of the encoding are the same, i.e. $c_{2k' +j} = c_{2k'}$ for all $1 \le j \le  k'$. 
In this case, $\Dec^{w}$ simply queries some blocks in $[2k', 3k']$, decodes them to the nearest codeword and returns the corresponding bit of the majority codeword. Since the adversary cannot corrupt many blocks beyond the decoding radius, the majority will return the correct codeword with high probability. See \appref{app:i-ge-8k} for the formal proof.

{\bf Case 2:} If the input index $i< 8k - 4\ell(\sec)$, $\Dec^{w}$ uses the properties of the $\delta$-local expander graph to detect whether the corresponding label has been tampered or not. 
If the label is tampered, then the decoder returns a $\bot$ else it returns the $i$-th of the received vector. 
Let $i'$ denote the index of the node in $G$ associated with the queried bit $i$ \emph{i.e.}, $i' = \left\lceil \frac{i}{4 \cdot \ell(\sec)}\right \rceil  \mod{k'}$. 
We check to make sure that (1) the node $k'$ is {\em at least} $\alpha$-good (w.r.t. the set of red nodes), and (2) the node $i'$ is also $\alpha$-good (for some $\alpha < 1-2\delta$). 
If either check fails then we will output $\bot$. 
Otherwise, by \lemref{lem:cgood} the node $i'$ must be correct (with high probability) since (1) there is a green path connecting any two $\alpha$-good nodes in a $\delta$-local expander, and (2) the node $k'$ must be correct (with high probability). 

The core part of the $\Dec^{w}$ then is the probabilistic procedure to verify that  a node $v$ is $\alpha$-good. 
First, it is clear that there is a deterministic procedure $\Green^{w}(v)$ that checks whether a given node $v \in [k']$ is green or not using $\O{\log n}$ coordinate queries to $w$ (see \lemref{lem:green} for the formal statement). 
Now in order to verify if a node $v$ is $\alpha$-good, we need to estimate the fraction of green nodes in any neighborhood of it. 
This is achieved by designing a tester that \emph{accepts} $v$ if it is $\alpha/4$-good with respect to the set $S$ of red nodes and rejects with high probability if $v$ is not $\alpha$-good (formalized in \lemref{lem:cgood:check}).

The key observation behind  \lemref{lem:cgood:check} is that it suffices to check that $\left|[v,v+2^j-1] \cap S \right| \leq \alpha \cdot  2^j/4$ for each $j \leq \log k'$ and that $\left|[v-2^j+1,v] \cap S \right| \leq \alpha \cdot  2^j/4$ for each $j \leq \log k'$. 
For each $j$ we can sample  $r=\O{\polylog k'}$ random nodes (with replacement) in each of the intervals $[v-2^j+1,v] $ and $[v,v+2^j-1] $ and use the subroutine $\Green$ to count the number of red (resp. green) nodes in each interval. 
If for every $j \leq \log k'$ the number of red nodes is both intervals {\em at most} $\alpha \cdot 2^j/2$ we accept. 
Otherwise, we reject.  
%  !TEX root=main.tex
\section{Strong-\CRLCC}\seclab{sec:strong}
In this section, we give an improved construction that locally corrects a majority of input coordinates. 

{\bf The Barrier:} The challenge in correcting a constant fraction of the nodes in the previous construction is that an adversary that is permitted  to change  $\O{\frac{n}{\log n}}$ symbols can turn all nodes red in a graph that has outdegree $\O{\log n}$. 

{\bf Key Idea:} We overcome the above mentioned barrier using three ideas. 
First, we run the original $\delta$-local expander $G_t$ through an indegree reduction gadget to obtain a new graph $G$ with $k'=t\cdot m$ nodes in which {\em almost every} node has at most $2$ incoming/outgoing edges and each node $u$ in $G_t$ (called the \emph{meta-node}) corresponds to $m$ nodes $u_1,\ldots,u_m$ in $G$. 
During the encoding process we group the labels. 

Second, we group the labels $\ell_{u,s} = \left( \ell_{u_1,s},\ldots, \ell_{u_m,s}\right)$ and data values $x_u = \left( x_{u_1},\ldots, x_{u_m}\right)$ and include the encodings $c_u= \ECC( x_u)$ and $c_{u+t} = \ECC(\ell_{u,s})$ in our final codeword. 
Notice that if we receive a the codeword $c' \neq c$ we can first preprocess each block $c_i'$ to obtain $\tilde{c}_i = \ECC\left(\ECCD(c_i')\right)$. 
Thus, if $\ECC\left(\ECCD(c_i')\right) = c_i$ we will say that the block $i$ is untampered --- even if $c_i' \neq c_i$. 
We say that a meta-node $u$ is tampered if either of the blocks $c_u$ or $c_{u+t}$ were tampered and we use $T$ to denote the set of tampered meta-nodes. 
We can show that most meta-nodes are not contained in the set $T$ since the attacker must alter a constant fraction of the bits in $c_u$ or $c_{u+t}$.   
In particular, the attacker has to flip {\em at least} $\Omega\left( m \cdot \ell(\sec)\right)$ bits to tamper {\em any} of the meta-nodes. 
In this way we can ensure that an attacker must tamper with {\em at least} $\Omega\left( t\cdot m \cdot \ell(\sec)\right)$ bits to make $\Omega(t)$ meta-nodes red.

Third, we introduce the notion of the {\em green} subgraph $G_g$ of the meta-graph $G_t$ (see \defref{def:greengraph}). 
In particular, $G_g$ is obtained from $G_t$ by discarding all red-edges (an edge $(u,v) \in E(G_t)$ is red if the corresponding edge $(u_m,v_j) \in E(G)$ in $G$ is incident to a red-node).  
As before a node $v_i$ in $G$ is green if its hash value is locally consistent e.g., $\ell_{v_i,s}' = H(s \circ x_{v_i}' \circ \ell_{v_i^1}' \circ \ldots \circ \ell_{v_i^d}' )$ where $v_i^1,\ldots, v_i^d$ are the parents of node $v_i$ in $G$ and for each meta-node $u$ in $G_t$ we define $\ell_u'=\left(\ell_{u_1,s}',\ldots, \ell_{u_m,s}' \right)=\ECCD(c_{u+t}')$ and $x_u'=\left(x_{u_1}',\ldots, x_{u_m}' \right) =\ECCD(c_u')$. 
We say that a meta-node $u$ in $G_t$ is green if at least $\frac{2}{3}$ of the corresponding nodes $\{u_1,\ldots,u_m\}$ in $G$ are green {\em and} the last  node $u_m$ is also green (\defref{def:meta:green}). 
We show that the same key properties we used for the construction of Weak-\CRLCC still hold with respect to the meta-nodes i.e., if there is a green path connecting meta-node $u$ to meta-node $v$ and the meta-node $v$ is untampered (i.e., $c_v = \ECC\left(\ECCD(c_v')\right)$ and $c_{v+t} = \ECC\left(\ECCD(c_{v+t}')\right)$) then the meta-node $u$ is also untampered (i.e., $c_u = \ECC\left(\ECCD(c_u')\right)$ and $c_{u+t} = \ECC\left(\ECCD(c_{u+t}')\right)$) ; otherwise we would have found a hash collision (\lemref{lem:connected:correct}). 
If we let $R$ denote the subset of red meta-nodes and if we let $T$ denote the subset of (unknown) tampered meta-nodes we have $G_g$ contains the graph $G_t-(R\cup T)$. 
One can prove that most meta-nodes are $\alpha$-good with respect to the set $R \cup T$ and that if a meta-node $u$ is $\alpha$-good with respect to the set $R \cup T$ then the green graph $G_g$ also has $2\delta$-local expansion around the meta-node $u$ (\lemref{lem:alpha:expansion}). 
Finally, if $G_g$ has $\delta'$-local expansion around both of the meta-nodes $u$ and $v$ with $\delta' \leq 4\delta$  then there is a green path connecting $u$ and $v$ (\lemref{lem:greenconnected}).

The remaining challenge is that the set $T$ is unknown to the local decoder\footnote{By contrast, it is easy to check if a node (resp. meta-node) is red/green by checking if the hash value(s) are locally consistent.}. 
Thus, we cannot directly test if a node is $\alpha$-good with respect to $R \cup T$. 
What we can do is develop a test that accepts all meta-nodes that are $\alpha$-good with respect to $R \cup T$ with high probability and rejects all meta-nodes $u$ that do not have $4\delta$-local expansion in $G_g$ by sampling the number of red/green edges in between $A_i=[u,u+2^i-1]$  and $B_i = [u+2^i,u+2^{i+1}-1]$ for $i \geq 1$. 
In particular, we restrict our attention to edges in the $\delta$-expander graph $H_{i,\delta}$ connecting $A_i$ and $B_i$ in $G_t$, where $H_{i,\delta}$ has maximum indegree $d_\delta$ for some constant $d_\delta$ which depends on $\delta$. 
If $G_g$ is not a $4\delta$-local expansion around $u$ then for some interval $i$ a large number of red edges between $A_i$ and $B_i$ {\em must} have been removed --- at least $c_2 \delta 2^{i}$ red edges out of at most $2^i\times d_\delta$ edges in $H_{i,\delta}$. 
By contrast, if $u$ is $\alpha$-good with respect to $R \cup T$ then we can show that at most $c_1 \alpha 2^{i} d_\delta$ edges are deleted. For a suitably small value of $\alpha < \frac{\delta}{d_\delta}$ we can ensure that $c_1 \alpha 2^{i} d_\delta < 4 c_2 \delta 2^i$. 
Thus, our tester can simply sample random edges in $H_{i,\delta}$ and accept if and only the observed fraction $f_{r,i}$ of red edges is at most, say, $2 c_1 \alpha 2^i d_\delta$ (\lemref{lem:green:expander}). 

The full description of the decoder is in \appref{app:strong:dec}.
 
\subsubsection*{Graph Degree Reduction}
We now describe our procedure \ReduceDegree that takes as input a $t$ node $\delta$-local expander DAG $G_0$ with indegree $m= O(\log t)$ and outputs a new graph $G$ with $mt$ nodes. 
$G$ has the property that {\em most} nodes have indegree two.
\begin{mdframed}
\textit{\ReduceDegree}: (Let $G_0$ be a $\delta$-local expander.) \\
\underline{Input: } Graph $G_0$ with $t$ vertices and $m = \max\{\indeg(G_0), \outdeg(G_0)\}+1$ \\
\underline{Output: } Graph $G$ with $m\cdot t$ vertices.
\begin{itemize}
\item For each node $u$ in $G_0$ we add $m$ nodes $u_1,\ldots,u_m$ to $G$. Along with each of the directed edges $(u_i,u_{i+1})$ and $(u_i, u_m)$ for $i < m$.
\item For each directed edge $(u,v)$ in $G_0$ we connect the final node $u_m$ to the first node $v_j$ that currently has indegree {\em at most} $1$.  
\end{itemize}
\end{mdframed}
We call $G_0$ the {\em meta-graph} of $G$ and we refer to nodes in $G_0$ as {\em meta-nodes}. 
In particular, the meta-node $u \in V(G_0)$ corresponds to the {\em simple} nodes $u_1,\ldots,u_m \in V(G)$. 
Notice that for two meta-nodes $u < v \in V(G_0)$ there is an edge $(u,v) \in E(G_0)$ in the meta-graph if and only if $G$ has an edge of the form $(u_m,v_i)$ for some $i \leq m$. 
While our encoding algorithm will use the graph $G$ to generate the labeling, it will be helpful to reason about meta-nodes in our analysis since the meta-graph $G_0$ is a $\delta$-local expander. 

\begin{figure*}[htb]
\centering
\begin{tikzpicture}[scale=0.9]
%top line
\node at (0, 0){$\cdots$};
\draw[->, black] (1,0) -- (2,0);
\draw (3,0) circle (0.4);
\node at (3,0){$u$};
\draw[->, black] (4,0) -- (5,0);
\node at (6, 0){$\cdots$};
\draw[->, black] (7,0) -- (8,0);
\draw (9,0) circle (0.4);
\node at (9,0){\tiny{$v-1$}};
\draw[->, black] (10,0) -- (11,0);
\draw (12,0) circle (0.4);
\node at (12,0){$v$};
\draw [->] (3.3,0.4) to [out=20,in=160] (11.7,0.4);

%left meta-node
\draw (2.5,-0.3) -- (2, -1);
\draw (3.5,-0.3) -- (4, -1);
\draw (3,-2) ellipse (1.4 and 1);
\draw (2.7,-2) circle (0.2);
\node at (2.7,-2){\tiny{$u_2$}};
\draw (3.8,-2) circle (0.2);
\node at (3.8,-2){\tiny{$u_m$}};
\draw (2,-2) circle (0.2);
\draw [->] (2.2,-2) -- (2.5,-2);
\node at (2,-2){\tiny{$u_1$}};
\node at (3.3, -2){$\cdots$};
\draw [->] (2,-1.7) to [out=20,in=160] (3.7,-1.7);
\draw [->] (2.7,-2.3) to [out=-20,in=-160] (3.7,-2.3);

%center meta-node
\draw (8.5,-0.3) -- (8, -1);
\draw (9.5,-0.3) -- (10, -1);
\draw (9,-2) ellipse (1.4 and 1);
\draw (8.7,-2) circle (0.2);
\draw (9.8,-2) circle (0.2);
\draw [->] (8.2,-2) -- (8.5,-2);
\draw (8,-2) circle (0.2);
\node at (9.3, -2){$\cdots$};
\draw [->] (8,-1.7) to [out=20,in=160] (9.7,-1.7);
\draw [->] (8.7,-2.3) to [out=-20,in=-160] (9.7,-2.3);

%right meta-node
\draw (11.5,-0.3) -- (11, -1);
\draw (12.5,-0.3) -- (13, -1);
\draw (12,-2) ellipse (1.4 and 1);
\draw (11.7,-2) circle (0.2);
\node at (11.7,-2){\tiny{$v_2$}};
\draw (12.8,-2) circle (0.2);
\node at (12.8,-2){\tiny{$v_m$}};
\draw (11,-2) circle (0.2);
\draw [->] (11.2,-2) -- (11.5,-2);
\node at (11,-2){\tiny{$v_1$}};
\node at (12.3, -2){$\cdots$};
\draw [->] (11,-1.7) to [out=20,in=160] (12.7,-1.7);
\draw [->] (11.7,-2.3) to [out=-20,in=-160] (12.7,-2.3);

%lines at the bottom
\draw[->, black] (4.7,-2) -- (5.7,-2);
\node at (6, -2){$\cdots$};
\draw[->, black] (6.3,-2) -- (7.3,-2);

%extra arrows
\draw [->] (10,-2) -- (10.8,-2);
\draw [->] (4,-2.3) to [out=-20,in=-160] (10.8,-2.3);

\draw (-0.5,-0.5) rectangle +(14.3,1.9);
\node at (0, 0){$G_0$};
\draw (-0.5,-3.3) rectangle +(14.3,2.5);
\node at (0, -2){$G$};
\end{tikzpicture}
\caption{An example of a degree reduction graph}
\end{figure*}
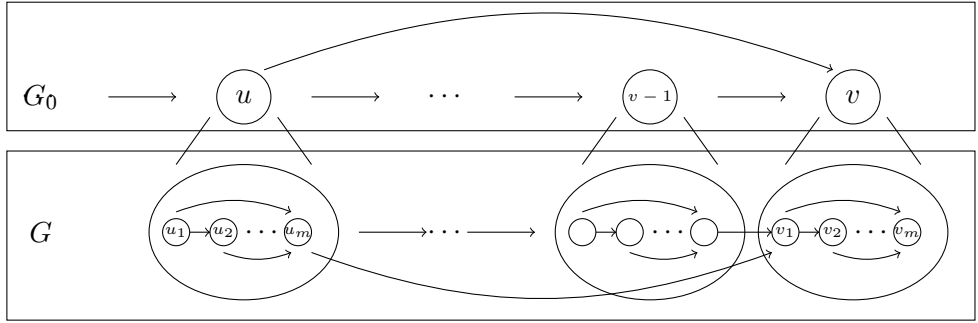

\subsection{$\Enc$ Algorithm }
\seclab{sec:maj:lccencode}
In this section we describe the $\Enc$ algorithm which takes as input a seed $s \leftarrow \Gen\left(1^\sec \right)$ and a message $x \in \{0,1\}^k$ and returns a codeword $c \in \{0,1\}^n$. 
For a security parameter $\sec$, let $H: \{0,1\}^* \rightarrow \{0,1\}^{\ell(\sec)}$ be a \CRHF (see \defref{def:crhf}). 
Recall that, $\ECC$ and $\ECCD$ denote the encoding and decoding algorithms of a good code of rate $R=R(\tau)$ %$1/4$ 
that can decode efficiently from some constant $\Delta_J = \Delta_J(\tau)$ fraction of errors (Note: as $\tau \rightarrow 0$ we have $\Delta_J(\tau) \rightarrow 0$ and $R(\tau) \rightarrow 1$). 

Let $\beta= \beta(\tau) > 0$ be a parameter which can be tuned. 
For $t=\O{\frac{k}{\beta \ell(\sec) \cdot \log k}}$, let $G_0 = ([t], E)$ be a $\delta$-local expander graph on $t$ vertices and degree (indegree and outdegree) $m=\O{\log t}$ where $\delta=\delta(\tau)< \frac{1}{4}$ is a parameter than can be tuned as needed\footnote{As $\delta$ grows smaller the degree of $G_0$ increases. 
In particular, we have $\indeg(G_0) = d_\delta \log n$ where $d_\delta$ is some constant which depends on $\delta$ --- as $\delta$ decreases $d_\delta$ increases. We will also have $m = d_\delta \log n$  nodes in each metanode of $G$ so $m$ increases as $\delta$ decreases. 
This will increase the {\em locality} of our decoding algorithm, but only by a constant factor.}. 
Let $G:=\ReduceDegree(G_0)$ be the graph with $k' := \frac{k}{ \beta \ell(\sec)}$ nodes output by the degree reducing procedure applied to $G_0$. 
Crucially, the graph has the property that for $\alpha = \alpha(\tau) < 1-2\delta(\tau)$ and any red/green coloring of the nodes of the graph that any pair of $\alpha$-good meta-nodes with respect to the set $T$ of tampered nodes are connected by a {\em green} path.  

\begin{mdframed}
Let $s \leftarrow \Gen(1^\sec)$ \\
\textbf{\Enc}$(s, x)$: \\
\underline{Input:}  $x = (x_1\circ \ldots \circ x_{k'}) \in \left( \{0,1\}^{ \beta\ell(\sec)} \right)^{k'}$, where $k = k' \cdot \beta \ell(\sec)$
\\
\underline{Output:} $c = (c_1 \circ \ldots \circ c_{3t}) \in \{0,1\}^{n}$, where $t=k'/m$ and $n = \frac{tm \ell(\sec)}{R} \cdot (\beta+2)$.
\begin{itemize}
\item Let $\left( \ell_{1_1,s},\ldots,\ell_{1_m,s},\ldots,\ell_{t_1,s}\ldots,\ell_{t_m, s}\right) = \lab_{G, s}(x)$ be the labeling of the graph $G$ with input $x$ using the \CRHF, H (see \defref{def:lab}). 
\item Let $U_1 :=(\ell_{1_1,s} \circ \ell_{1_2,s} \circ \ldots \circ \ell_{1_m,s}),  \ldots, U_t :=(\ell_{t_1,s} \circ \ell_{t_2,s} \circ \ldots \circ \ell_{t_m,s})$.
\item Let $T_1:=(x_1 \circ x_2 \circ \ldots \circ x_m), \ldots, T_t:=(x_{(t-1)m+1} \circ x_{(t-1)m+2} \circ \ldots \circ x_{tm})$.
\item The codeword $c$ consists of the encoding of groups of message bits, followed by  encoding using $\ECC$ of groups of labels, followed by $t=\frac{k'}{m}$ copies of the last encoding. 
\begin{align*}
 c &= (\ECC(T_1) \circ \ECC(T_2) \circ  \ldots \circ \ECC(T_{t}) \circ \ECC(U_1) \circ \ECC(U_2) \circ \ldots \circ \ECC(U_{t}) \circ \\
&\underbrace{\ECC(U_{t}) \circ \ldots \circ \ECC(U_{t})}_{t \text{ times}}), \\
&\text{ where } \ECC(U_i)=\ECC( \ell_{i_1,s} \circ \ldots \circ \ell_{i_m,s} ) \text{ for all } i \in [t].
\end{align*}
\end{itemize}
\end{mdframed}

The codeword $c = \Enc(x)$ consists of 3 parts. 
The first $k/R$ bits correspond to the message symbols $x$ passed $m \cdot \beta \ell(\sec)$ bits at a time through $\ECC$ of rate $R$. 
The next $k/(\beta R)$ bits correspond to $t$ codewords $\ECC(U_j)$, and finally the last $k/(\beta R)$ bits correspond to the repetitions of the final \ECC codeword, $\ECC(U_{t})$.
 Therefore the length of any codeword produced by $\Enc$ is $n= \frac{k}{R} \cdot \frac{\beta+2}{\beta}$. The information rate of the constructed code is therefore $R \cdot \left(\frac{\beta}{\beta+2}\right)$.  Therefore appropriately choosing the parameters $\beta$ and $R$, we get the rate  of the \CRLCC approaching $1$.\\ 
 
{\noindent \bf Strong \CRLCC Decoder:} We can show that for any $w\in \{0,1\}^n$ with $\HAM(w,c)\le\frac{\Delta_J \cdot k}{4}$  the number of tampered meta-nodes $T \subset V(G_0)$ is very small. 
If let $t(\tau)$ denote the maximum number of meta-nodes $v$ that the attacker can either corrupt or ensure that $v$ is not $\alpha$-good with respect  to $T$ then it can be shown that $t(\tau)/t \rightarrow 0$ as $\tau \rightarrow 0$ i.e., almost all of the meta-nodes $u$ are $\alpha$-good with respect to the set $T$ of tampered meta-nodes and that any such node $u$ that has $2\delta$-local expansion in the green subgraph $G_g$. 
We also show that {\em any} meta-node $v < t - t(\tau)-1$ that has $\delta'$-local expansion with $\delta' \leq 4\delta$ must also be correct because at least one of the last $t(\tau)+1$ nodes must be $\alpha$-good and correct and $v$ will be connected to this node via an all green path since we have $4\delta$-local expansion around both nodes in the green subgraph $G_g$. 
While the set $T$ is potentially unknown it is possible to test which (with high probability) accepts all $\alpha$-good meta-nodes and (with high probability) rejects any meta-node $u$ with the property that the green subgraph $G_g$ does not have $4\delta$-local expansion around $u$. 
We can distinguish between the two cases by strategically sampling edges in each of the intervals $A_i=[u,u+2^i-1]$ and $B_i=[u+2^i,u+2^{i+1}-1]$. 
The original meta-graph $G_0$ contains a $\delta$-bipartite expander $H_i$ connecting $A_i$ and $B_i$. 
If $u$ does not have $4\delta$-local expansion in $G_g$ then for some $i$ a large fraction of the edges in $H_i$ must be red. 
By contrast, if $u$ is $\alpha$-good with respect to $T$ then we can show that the number of red edges in $H_i$ will be much smaller. 
Thus, we can ensure that we can decode at least $\rho(\tau) \geq \left(1-2\times t(\tau)/t\right)$ fraction of the bits in the original codeword so that $\rho(\tau) \rightarrow 1$ as $\tau \rightarrow 0$. 
With this observation it is relatively straightforward  to extend the ideas from the previous section to build a strong $\Dec$ with locality $\O{\polylog n}$. 
A full description of the strong \CRLCC decoder can be found in \appref{app:strong:dec}.

\def\shortbib{0}
\bibliographystyle{alpha}
\bibliography{references,abbrev3,crypto}
\appendix
%  !TEX root=main-lncs.tex

\section{Expanded Discussion of Related Work}\label{sec:relwork}

\paragraph{Classical \LDCs/\LCCs and Relaxed Variants in the Hamming channel} 
The current best constructions for \LDCs/\LCCs can achieve any constant rate $R<1$, any constant relative distance (i.e., minimum Hamming distance between codewords)  $\delta<1-R$, and query complexity $\O{2^{\sqrt{\log n \log \log n}}}$ \cite{KoppartyMRS17}.  
In the constant query complexity regime, for $q\geq 3$, the best codes achieve blocklength  subexponential in the message length \cite{Yekhanin08,Efremenko12,DvirGY11}. For $q=2$, Kerenidis and deWolf \cite{KerenidisW04} show  
an exponential in $k$ lower bound for the blocklength of any \LDC. 

A notion similar to \RLDCs, called Locally Decode/Reject code, was studied by Moshkovitz and Raz \cite{MoshkovitzR10} in the context of building better \PCPs. The notion concerns decoding batches of coordinates jointly, and it allows as output a list of possible messages. 
In the context of simplifying the proofs in \cite{MoshkovitzR10}, a related notion of decodable \PCPs was studied in \cite{DinurH09}.

\paragraph{Non-Local Codes in Computationally Bounded Channels} In their initial works in  the computationally bounded channel, Lipton \cite{Lipton94} and Ding, Gopalan and Lipton \cite{GopalanLD04} obtain error-correcting codes uniquely decodable (in the classical, global setting) with error rates beyond what is possible in the adversarial  Hamming channel. 
Their model requires that the sender and receiver share a {\em secret}  random string, unknown to the channel. 
This is a strong cryptographic assumption, as the model is unsuitable to common settings such as message broadcast. Micali \etal \cite{MicaliPSW05} 
address this drawback by proposing public-key error-correcting codes against computationally bounded channels. 
Their model is based on the observation that if one starts with a code that is {\em list-decodable}, by encoding  messages using a secret key and digitally signing them, one can prevent a \PPT channel from producing valid signatures, while allowing the receiver to pick the unique message from the list with a valid signature. 
In follow-up work,  Guruswami and Smith \cite{GuruswamiS16} removes the public-key setup, and obtains optimal rate error-correcting codes for channels that can be described by simple circuits. 
Their channels are either ``oblivious'', namely the error is independent of the actual message being sent, or they are describable by polynomial size circuits. 
Their results are based on the idea that the sender can choose a permutation and a ``blinding factor'' that are then embedded into the codeword together with the message. 
The channel cannot distinguish the hidden information  since  it operates with low complexity, while the receiver can.

\paragraph{\LDCs in Computationally Bounded Channels} The notion of \LDCs over computationally bounded channels was introduced in \cite{OstrovskyPS07}, where the authors define private \LDCs. 
In these constructions the sender and the receiver share a private key. 
They obtain codes of constant rate and  $\O{\polylog \lambda}$ query complexity, where $\lambda$ is the security parameter.  
Hemenway and Ostrovsky \cite{HemenwayO08} build \LDCs in the public-key model, and obtain codes of constant rate, and $\O{\lambda^2}$ locality. 
Hemenway \etal \cite{HemenwayOSW11} improve on these results and obtain public-key \LDCs with constant rate, constant error rate and $\O{\lambda}$ locality, which work under a weaker cryptographic assumption than that of \cite{HemenwayO08}. 

\paragraph{Other Applications of Depth-Robust Graphs} Depth-robust graphs have found many applications in cryptography. 
These applications includes the construction of memory hard functions with {\em provably high} cumulative memory complexity~\cite{STOC:AlwSer15,C:AlwBlo16,EC:AlwBloPie17,CCS:AlwBloHar17,TCC:BloZho17}, sustained space complexity~\cite{EC:AlwBloPie18} and bandwidth hardness~\cite{EPRINT:BloRenZho18,TCC:RenDev17}. 
Depth robust graphs have also been used to construct {\em proofs of space}~\cite{C:DFKP15} and they were also used in the first construction of a {\em proof of sequential work}~\cite{ITCS:MahMorVad13} although Cohen and Pietrzak~\cite{EC:CohPie18} found an alternate construction of a Proofs-Of-Sequential Work protocol that does not involve depth-robust graphs. 
Finally, depth-robust graphs have also been used to derive cumulative space in the black-white pebble game \cite{ITCS:AdNV17,EC:AlwBloPie18} which is of interest in the study of proof complexity. 
To the best of our knowledge we are the first to apply depth-robust graphs (more specifically $\delta$-local expanders) in the area of coding theory to construct (relaxed) locally correctable codes. 
%  !TEX root=main.tex
\section{Additional Background: Justesen Codes, CRHFs and Local Expanders}
\applab{app:prelims}
We use the notation $[n]$ to represent the set $\{1,2,\ldots,n\}$. 
For any $x, y \in \Sigma^n$, let $\HAM(x)$ denote the Hamming weight of $x$, i.e. the number of non-zero coordinates of $x$. 
Let $\HAM(x, y) = \HAM(x-y)$ denote the Hamming distance between the vectors $x$ and $y$. 
For any vector $x \in \Sigma^n$, let $x[i]$ be the $i$\th coordinate of $x$. 
We also let $x\circ y$ denote the concatenation of $x$ with $y$. 

We denote a directed acyclic graph $G$ with $n$ vertices labelled in topological order by  $G=([n],E)$. 
A node $v\in V$ has indegree $\d = \indeg(v)$ if there exist $\d$ incoming edges $\d = |(V \times \{v\}) \cap E|$. 
Thus, we say that graph $G$ has indegree $\d=\indeg(G)$ if the maximum indegree of any node of $G$ is $\d$. 
A node with indegree $0$ is called a source node and a node with no outgoing edges is called a sink. 
A node $v\in V$ has outdegree $\d = \outdeg(v)$ if there exist $\d$ outgoing edges $\d = |(V \times \{v\}) \cap E|$. 
Thus, we say that graph $G$ has outdegree $\d=\outdeg(G)$ if the maximum outdegree of any node of $G$ is $\d$. 
Finally, we say that graph $G$ has degree $\d=\deg(G)$ if the maximum degree of any node of $G$ is $\d$, or equivalently $\max_{v\in V}\outdeg(v)+\indeg(v)=\d$. 
We use $\parents_G(v) = \{u \in V: (u,v) \in E\}$ to denote the parents of a node $v \in V$. 
We will often consider graphs obtained from other graphs by removing subsets of nodes. 
Therefore if $S \subset V$, then we denote by $G-S$ the DAG obtained from $G$ by removing nodes $S$ and incident edges. 

Let $\mathcal{C}$ be a $(n,k)$ code that maps any $k$-length \emph{message} over alphabet $\Sigma$ to a unique $n$-length \emph{codeword} over alphabet $\Sigma$. 
We say $n$ is the \emph{block length} of the code, and $k/n$ is the {\em information rate}. 
Let $\Enc: \Sigma^k \rightarrow \Sigma^n$ denote the encoding map of $\mathcal{C}$. 
We define the {\em minimum distance} of $\mathcal{C}$ to be the quantity $\min_{c_1, c_2 \in \mathcal{C}} \HAM(c_1, c_2)$ and the {\em  relative minimum distance} of $\mathcal{C}$ to be the quantity $\min_{c_1, c_2 \in \mathcal{C}}\frac{\HAM(c_1, c_2)}{n}$.   

We shall use $\Enc$ and $\Dec$ to refer to the encoding and decoding algorithms of our construction, and $\ECC$ and $\ECCD$ to refer to the encoding and decoding algorithms for a binary code $C_J$ with constant rate and relative distance. 
We will use Justesen codes  \cite{Justesen72} in what follows.

\begin{theorem}\cite{Justesen72}
For any $0< R < 1$, there exist binary linear codes of rate $R$, that are efficiently decodable from $\Delta_J(R)$ fraction of errors, where $\Delta_J(R)$ is a function that only depends on $R$. 
\end{theorem}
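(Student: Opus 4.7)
The plan is to follow Justesen's original approach by concatenating a Reed-Solomon outer code with an inner code family drawn from the Wozencraft ensemble. Fix a parameter $m$ and take the outer code to be a Reed-Solomon code of rate $R_{\mathrm{out}}$ over $\mathbb{F}_{2^m}$ with block length $N = 2^m$. Each of the $N$ outer symbols will be encoded by a \emph{distinct} inner binary linear code $E_i \colon \{0,1\}^m \to \{0,1\}^{2m}$ of rate $1/2$, yielding a binary linear code of rate $R = R_{\mathrm{out}}/2$ and block length $n = 2mN$. Given $R$ we would set $R_{\mathrm{out}} = 2R$.

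First I would instantiate the inner codes via the Wozencraft ensemble: for each nonzero $\alpha \in \mathbb{F}_{2^m}$, let $E_\alpha(x) = (x,\, \alpha \cdot x)$, where the product is taken in $\mathbb{F}_{2^m}$ and the result is viewed as a bit string under a fixed $\mathbb{F}_2$-basis. Using the $N-1$ nonzero values of $\alpha$ gives $N-1$ distinct $\mathbb{F}_2$-linear inner codes (any remaining position can be handled by a fixed good code). The crucial property, provable by a standard pigeonhole counting argument, is that for any $\delta < H_2^{-1}(1/2)$ all but an $o(1)$ fraction of these inner codes have relative minimum distance at least $\delta$; call the corresponding inner positions \emph{good}.

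Next I would specify the decoder. Partition the received word into $N$ blocks of length $2m$, and decode each block $i$ to the nearest codeword of $E_i$ by brute force over the $2^m$ candidates, which costs $2^{O(m)} = \mathrm{poly}(n)$ time per block. This produces $N$ outer symbols over $\mathbb{F}_{2^m}$, to which I apply the polynomial time unique decoder for the outer Reed-Solomon code (e.g., Berlekamp-Massey), which corrects up to $(1 - R_{\mathrm{out}}) N / 2$ symbol errors. To bound the tolerable bit error fraction, I would argue that if the total number of bit errors is at most $\Delta_J(R)\cdot n$ for a suitably small $\Delta_J(R) > 0$, then by a Markov-style averaging the number of good inner blocks whose bit error fraction exceeds $\delta/2$ is strictly less than $(1 - R_{\mathrm{out}}) N / 2 - o(N)$. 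Each such good block is decoded correctly to the sent outer symbol, so the outer Reed-Solomon decoder succeeds even after accounting for miscorrections at the $o(N)$ bad positions.

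The main obstacle is the quantitative bookkeeping needed to pin down an explicit positive $\Delta_J(R)$ depending only on $R$. Concretely, one needs to choose $\delta < H_2^{-1}(1/2)$ so that a good inner block tolerates roughly $\delta m / 2$ bit errors; the Markov argument then forces something like $\Delta_J(R) \le (1 - 2R)\delta/4 - o(1)$, and the asymptotic parameter $m = \Theta(\log n)$ must be chosen so that the $o(1)$ terms vanish while both the inner brute-force decoder and the outer Reed-Solomon decoder remain polynomial in $n$. Verifying that bad inner blocks contribute few enough outer errors to fit into the Reed-Solomon decoding budget, together with the miscorrected good blocks, is the technical heart of the argument.
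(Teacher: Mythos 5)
Your proposal correctly reconstructs the classical Justesen concatenation scheme that the paper is citing (the paper does not reprove the theorem; it invokes \cite{Justesen72} as a black box), and the counting argument for the Wozencraft ensemble, the brute-force inner decoder with $m=\Theta(\log n)$, the outer Reed--Solomon unique decoder, and the Markov-style budget $\Delta_J(R)\lesssim \delta(1-R_{\mathrm{out}})/4$ are all the right ingredients and are assembled in the right order.

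There is, however, a genuine gap in the rate range. By fixing the inner codes $E_\alpha(x)=(x,\alpha x)$ to have rate exactly $1/2$ and setting $R_{\mathrm{out}}=2R$, your construction can only achieve overall rates $R<1/2$: once $R\ge 1/2$ you would need $R_{\mathrm{out}}\ge 1$, which is impossible, and in fact no code built by concatenating any outer code with a rate-$1/2$ inner code can have rate above $1/2$. The theorem as stated promises codes for every $R\in(0,1)$. Justesen's own paper handles the high-rate regime with a modified (truncated) Wozencraft ensemble: for a parameter $\ell<m$ one sets $E_\alpha(x) = (x,\,\text{first }\ell\text{ bits of }\alpha x)$, obtaining $\mathbb{F}_2$-linear inner codes of rate $m/(m+\ell)$ which can be pushed arbitrarily close to $1$, while a counting argument still shows that most of them meet the Gilbert--Varshamov bound for that rate. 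Choosing $\ell$ (equivalently the inner rate) and $R_{\mathrm{out}}$ appropriately as functions of the target $R$ then covers the full range, with $\Delta_J(R)$ correspondingly depending only on $R$. Without this modification your argument establishes the theorem only for $R<1/2$; you should either add the truncated ensemble or explicitly restrict the claim. A minor secondary slip, harmless because it errs conservatively: a good inner block of length $2m$ with relative distance $\ge\delta$ tolerates roughly $\delta m$ bit errors (bit error fraction $\delta/2$), not $\delta m/2$; your Markov step already uses the correct threshold $\delta/2$, so only the side remark needs fixing.
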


\subsection*{Collision Resistant Hash Functions (\CRHF)}
Our code constructions involve the use of collision-resistant hash functions. 
We use the following definitions from Katz and Lindell \cite{KatzL14}.
\begin{definition}
\cite[Definition 4.11]{KatzL14}
A \emph{hash function} with alphabet $\Sigma$ and blocklength $\ell(.)$ is a pair $\Pi= (\GenH,H)$ of probabilistic polynomial time algorithms  satisfying:
\begin{itemize}
\item
$\GenH$ is a probabilistic algorithm which takes as input a security parameter $1^\sec$ and outputs a public seed $s \in \{0,1\}^*$, where the security parameter $1^\sec$ is implicit in the string $s$.
\item $H$ is a deterministic algorithm that takes as input a seed $s$ and a string $\Sigma^*$ and outputs a string $H(s,x)\in \Sigma^{\ell(\sec)}$.
\end{itemize}
\end{definition}
A collision resistant hash function can be defined using the following experiment for a hash function $\Pi=(\GenH,H)$, an adversary $\A$, and a security parameter $\sec$:\\
\begin{mdframed}
\textbf{The collision-finding experiment} $\hashexperiment$:
\begin{enumerate}
\item
$s \leftarrow \GenH(1^\sec)$
\item
$(x, x') \leftarrow \A(s)$
\item
The output of the experiment is $1$ if and only if $\A$ successfully finds a collision, i.e.
\[ x\neq x' \mbox{ and } H(s, x)=H(s, x'). \] 
\end{enumerate}
\end{mdframed}
Then this experiment implicitly defines a collision resistant hash function.
\begin{definition}
\deflab{def:crhf}
\cite[Definition 4.12]{KatzL14}
A hash function $\Pi=(\GenH,H)$ is \emph{collision resistant} if for all probabilistic polynomial time adversaries $\A$ there exists a negligible function $\negl$ such that
\[\PPr{\hashexperiment=1}\le\negl(\sec).\]
\end{definition}

\subsection*{Background on $\delta$-Local Expander Graphs}
We now begin to describe the underlying DAGs in our code construction. 
We first define the class of graphs $\delta$-expanders and $\delta$-local expanders. 
\begin{definition}
Let $\delta > 0$. A bipartite graph $G=(U,V, E)$ is called a $\delta$-expander if for all $M_1\subseteq U$, $M_2 \subseteq V$ such that $|M_1|\ge\delta|U|$ and $|M_2|\ge\delta|V|$, there exists an edge $e\in M_1\times M_2$. 
\end{definition}

\begin{definition}
Let $\delta >0 $ and $G =([n], E)$ be a directed acyclic graph. 
$G$ is called a $\delta$-local expander if for every vertex $x \in [n]$ and every $r \le  \min\{ x, n-x\}$,  the bipartite graph induced by $U=[x,\ldots,x+r-1]$ and $V=[x+r,\ldots,x+2r-1]$ is a $\delta$-expander. 
\end{definition}
\thmref{thm:local-expander-exist}, due to Alwen \etal \cite{EC:AlwBloPie18}, states that $\delta$-local expanders exist with degree $\O{\log n}$.
\begin{theorem}\cite{EC:AlwBloPie18}
\thmlab{thm:local-expander-exist}
For any $n>0, \delta > 0$, there exists a $\delta$-local expander $G= ([n], E)$ with indegree $\O{\log n}$ and outdegree $\O{\log n}$.
\end{theorem}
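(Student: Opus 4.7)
\begin{proofof}{Theorem~\ref{thm:local-expander-exist} (sketch of approach)}
The plan is to use a probabilistic construction combined with a union bound over all scales, starting positions, and candidate ``bad'' subsets. Fix a constant $\delta > 0$ and a degree parameter $d_\delta$ (to be chosen later as a function of $\delta$). For each node $v\in [n]$ and each dyadic scale $i \in \{0,1,\ldots,\lceil\log_2 n\rceil\}$, independently add $d_\delta$ directed edges $(u,v)$ where $u$ is sampled uniformly at random from the interval $[\max(1,v-2^{i+1}+1),\, v-2^i]$. This gives each node total indegree at most $d_\delta \cdot (\lceil\log_2 n\rceil+1) \in \O{\log n}$, and by a symmetric argument (or by bounding expected outdegrees with a Chernoff bound) each node also has outdegree $\O{\log n}$ with high probability; one can truncate to enforce this deterministically.

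Next I would argue that with positive probability (indeed, overwhelming probability) this random graph is a $\delta$-local expander. Fix any position $x$ and any radius $r$ and let $i$ be the unique index with $2^i \leq r < 2^{i+1}$. The goal is to show that for every pair $X \subseteq [x,x+r-1]$ and $Y \subseteq [x+r, x+2r-1]$ with $|X|, |Y| \geq \delta r$, some edge at scale $i$ connects $X$ to $Y$. By construction, each $v \in Y$ has $d_\delta$ parents sampled uniformly from an interval of length $\Theta(2^i) = \Theta(r)$ that contains a constant fraction of $[x,x+r-1]$; hence each parent lies in $X$ with probability at least $c\delta$ for a universal constant $c>0$. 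The probability that no edge from $Y$ lands in $X$ is therefore at most $(1 - c\delta)^{d_\delta |Y|} \leq \exp(-c\delta d_\delta \cdot \delta r)$.

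The third step is the union bound. For a fixed scale $i$ and position $x$, the number of ``bad'' pairs $(X,Y)$ is at most $\binom{r}{\delta r}^2 \leq 2^{2 H(\delta) r}$, where $H(\cdot)$ is the binary entropy. Summing over the $\O{\log n}$ scales and $\O{n}$ starting positions, the total failure probability is bounded by
\[
\O{n \log n} \cdot 2^{2 H(\delta) r} \cdot \exp(-c \delta^2 d_\delta r).
\]
Choosing $d_\delta = C \cdot H(\delta)/\delta^2$ for a sufficiently large absolute constant $C$ makes this quantity $o(1)$, so some instantiation of the random graph succeeds at every scale and position simultaneously. Since $d_\delta$ depends only on $\delta$, the resulting indegree and outdegree are $\O{\log n}$, as required.

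The main obstacle, and the piece that requires the most care, is coupling the dyadic scales of the random construction to \emph{arbitrary} radii $r$ in the definition of a $\delta$-local expander: one must verify that parents sampled from $[v - 2^{i+1}+1, v-2^i]$ actually land inside an arbitrary interval $[x, x+r-1]$ with probability $\Omega(\delta)$ whenever $2^i \leq r < 2^{i+1}$ and $v \in [x+r, x+2r-1]$. This is a routine interval-intersection calculation, but it forces the sampling window at scale $i$ to be chosen just right (roughly $[v-2^{i+1}+1, v-2^i]$ rather than $[v-2^i, v-1]$) so that the ``lookback'' window straddles the left half of the pair. Once that geometric alignment is pinned down, the probability and union-bound calculations above go through unchanged, giving the claimed explicit/existential $\delta$-local expander of degree $\O{\log n}$.
\end{proofof}
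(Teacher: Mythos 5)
Your sketch is in the same probabilistic spirit as the construction the paper cites (Alwen, Blocki and Pietrzak \cite{EC:AlwBloPie18}, following Erd\H{o}s, Graham and Szemer\'edi \cite{ErdosGS75}), but two of its steps do not actually go through. First, the union bound does not close. The displayed expression $\O{n \log n} \cdot 2^{2 H(\delta) r} \cdot \exp(-c \delta^2 d_\delta r)$ still carries $r$ as a free variable, and once you sum honestly over all radii $r$ and all $\O{n}$ positions $x$, the total is $\Omega(n)$ for \emph{every} constant $d_\delta$: for $r = \O{1}$ the per-pair quantity $2^{2H(\delta)r}\exp(-c\delta^2 d_\delta r)$ is a positive constant independent of $n$, and there are $\Omega(n)$ choices of $x$. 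No choice of $d_\delta = d_\delta(\delta)$ repairs this; forcing the bound to $o(1)$ requires $d_\delta = \Omega(\log n)$ and hence total degree $\Omega(\log^2 n)$, overshooting the target. Second, the interval-intersection step that you set aside as routine is actually false for your window. Take $r = 2^i$ and $v = x+r$ (the leftmost vertex of $[x+r,x+2r-1]$); the lookback window $[v - 2^{i+1}+1,\, v - 2^i]$ is then $[x-r+1,\, x]$, which meets $[x,x+r-1]$ in a \emph{single} point. Concretely, for any $\delta < 1/2$, let $X$ be the rightmost $\delta r$ vertices of $[x,x+r-1]$ and $Y$ the leftmost $\delta r$ vertices of $[x+r,x+2r-1]$: for every $v \in Y$ the scale-$i$ window lies entirely to the left of $X$, so \emph{no} scale-$i$ edge out of $Y$ can land in $X$, and the claim that each sampled parent hits $X$ with probability $\Omega(\delta)$ fails outright.

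The construction the paper actually has in mind avoids both difficulties by overlaying, at each dyadic scale $i$, a \emph{fixed} constant-degree bipartite $\delta$-expander between consecutive length-$2^i$ dyadic intervals; the existence of each such expander is a standalone classical fact, not something derived by a global union bound over all $(x,r)$, and each vertex participates in only $\O{\log n}$ of them, giving degree $\O{\log n}$ automatically. Passing from expansion across these dyadic tiles to $\delta$-local expansion for \emph{arbitrary} $(x,r)$ is then a separate combinatorial lemma, with a controlled loss in the expansion constant. That lemma is precisely the alignment argument you compress into a one-line remark: it simultaneously dictates the correct windows and collapses the event space from $\Theta(n^2)$ pairs $(x,r)$ to $\O{n \log n}$ dyadic tiles, each of which is guaranteed by a single fixed expander rather than by a fragile union bound.
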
 
The construction of Alwen \etal \cite{EC:AlwBloPie18} is {\em probabilistic}. 
In particular, they show that there is a {\em random distribution} over DAGs such that any sample from this distribution is a $\delta$-local expander with high probability. 
The randomized construction of Alwen \etal \cite{EC:AlwBloPie18} closely follows an earlier construction of Erdos \etal \cite{ErdosGS75} \footnote{While Alwen \etal \cite{EC:AlwBloPie18} only analyze the indegree of this construction it is trivial to see that the outdegree is also $\O{\log n}$ since the construction of Erdos \etal \cite{ErdosGS75} overlays multiple bipartite expander graphs on top of the nodes $V=[n]$. Each bipartite expander graph has $\indeg, \outdeg \in\O{1}$ and each node is associated with $\O{\log n}$ expanders.}.  

We now list some properties of the $\delta$-local expander graphs shown in \cite{EC:AlwBloPie18}. 
We will use these properties to construct the $\Dec$ algorithm for the \CRLCC scheme.

\begin{definition}[$\alpha$-good node]
\deflab{def:cgood}
Let $G = ([n], E) $ be a DAG and let $S\subseteq [n]$ be a subset of vertices. Let $0 < c < 1$. 
We say $v \in [n]-S$ is $\alpha$-good under $S$ if for all $r > 0$:
\[ \left| [v-r+1,v]\cap S \right|\le \alpha r,\qquad \left| [v,v+r-1]\cap S \right|\le \alpha r.\]
\end{definition}
Intuitively, we can view the subset $S$ to be a set of ``deleted'' vertices. 
A vertex in the remaining graph is called $\alpha$-good if at most $\alpha$ fraction of vertices in any of its neighborhood are contained in the deleted set. 
In our case, we will ultimately define $S$ to be the nodes with the inconsistent labels.

The following result of \cite{EC:AlwBloPie18} shows that in any DAG $G$, even if we deleted a constant fraction of vertices, there still remain a constant fraction of vertices that are $\alpha$-good. 
 
\begin{lemma}[Lemma 4.4 in \cite{EC:AlwBloPie18}]
\lemlab{lem:cgood:most}
Let $G=([n], E)$ be a directed acyclic graph. For any $S \subset[n]$ and any $ 0< \alpha < 1$, 
at least $n-|S|\left(\frac{2-\alpha}{\alpha}\right)$ nodes in $G-S$ are $\alpha$-good under $S$.
\end{lemma}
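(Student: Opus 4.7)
The plan is to decompose the notion of not being $\alpha$-good into two symmetric pieces and bound each separately. Call a vertex $v \in V\setminus S$ \emph{left-bad} if there exists $r>0$ with $|[v-r+1,v]\cap S| > \alpha r$, and \emph{right-bad} analogously. A vertex fails to be $\alpha$-good exactly when it is left-bad or right-bad. By reversing the vertex ordering the two counts are symmetric, so it suffices to show that there are at most $|S|(1-\alpha)/\alpha$ left-bad vertices in $V \setminus S$; summing gives $(n-|S|) - 2|S|(1-\alpha)/\alpha = n - |S|\cdot (2-\alpha)/\alpha$ good vertices, as required.

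To count left-bad vertices I would set up a one-line potential function. Let $f(v) = |S \cap [1,v]|$ and $g(v) = f(v) - \alpha v$, with $g(0)=0$. Writing $u = v - r$, the condition $|[v-r+1,v]\cap S| > \alpha r$ is equivalent to $g(v) > g(u)$. Hence $v \in V\setminus S$ is left-bad iff $g(v) > M(v-1)$, where $M(v) = \min_{0\leq u \leq v} g(u)$, and $v \in V\setminus S$ is \emph{left-good} iff $g(v) = M(v)$, i.e.\ $v$ achieves a new running minimum of $g$.

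Now I would read off the dynamics. The increments are $g(v)-g(v-1) = 1-\alpha$ when $v \in S$ (so $M$ does not change) and $g(v)-g(v-1) = -\alpha$ when $v \notin S$. In the second case, either $g(v)$ beats $M(v-1)$, in which case $v$ is left-good and $M$ drops by at most $\alpha$ (since $M(v-1)-M(v) \leq M(v-1)-g(v) \leq \alpha$ because $g(v-1) \geq M(v-1)$), or $g(v) > M(v-1)$ and $M$ is unchanged. Thus if $G_L$ is the number of left-good vertices in $V\setminus S$, then $M(n) \geq -\alpha G_L$. On the other hand, $M(n) \leq g(n) = |S| - \alpha n$, so combining yields $G_L \geq n - |S|/\alpha$, and therefore the number of left-bad vertices in $V \setminus S$ is at most $(n-|S|) - G_L \leq |S|(1-\alpha)/\alpha$.

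There is no real obstacle here; the only subtlety worth double-checking is the ``at most $\alpha$'' drop of $M$ at a left-good step, which I would verify carefully since it is the step that converts a potential bound into a count. The analogous bound for right-bad vertices follows by applying the same argument to the reversed graph (i.e.\ replacing $f$ by the suffix counting function), and adding the two bounds completes the proof.
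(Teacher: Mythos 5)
Your proof is correct. Note that the paper does not actually prove this statement — it cites it verbatim as Lemma~4.4 of Alwen, Blocki and Pietrzak \cite{EC:AlwBloPie18} — so there is no in-paper argument to compare against; the evaluation below is of the proposal's correctness on its own terms.

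The decomposition into left-bad and right-bad vertices is valid, and the two cases are symmetric under reversal of $[n]$. The key equivalence is right: for $u=v-r$, the condition $|[v-r+1,v]\cap S|>\alpha r$ is exactly $g(v)>g(u)$ where $g(v)=|S\cap[1,v]|-\alpha v$, so $v\in V\setminus S$ is left-bad iff $g(v)>M(v-1)$ (only $0\le u\le v-1$ need be considered, since for $r>v$ the witness interval truncates to $[1,v]$, which is subsumed by $u=0$). The crucial step you flag — that $M$ drops by at most $\alpha$ at each left-good $v\notin S$ — holds because $g(v)=g(v-1)-\alpha$ and $g(v-1)\ge M(v-1)$, giving $M(v-1)-M(v)=M(v-1)-g(v)\le g(v-1)-g(v)=\alpha$; and $M$ cannot drop when $v\in S$ since then $g$ strictly increases. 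Hence $-\alpha G_L\le M(n)\le g(n)=|S|-\alpha n$, giving $G_L\ge n-|S|/\alpha$ and at most $|S|(1-\alpha)/\alpha$ left-bad vertices in $V\setminus S$. Doubling and subtracting from $n-|S|$ yields $n-|S|(2-\alpha)/\alpha$, as required. This running-minimum/amortization argument is the standard way to establish the cited bound and would serve as a complete, self-contained proof.
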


\claimref{numReachable} rephrases a claim from \cite{EC:AlwBloPie18}. 
The only difference is that \cite{EC:AlwBloPie18} did not define the phrase ``G is a $\delta$-local expander around node $x$'' since the DAG $G$ they consider is assumed to satisfy this property for all nodes $x$. 
However, their proof only relies on the fact that  G is a $\delta$-local expander around node $x$.

\begin{claim}[\cite{EC:AlwBloPie18}] \claimlab{numReachable}
Let $G = (V=[n],E)$ be a $\delta$-local expander around $x$ and suppose that $x \in [n]$ is a $\alpha$-good node under $S \subseteq [n]$. Let $r >0$ be given. If $\alpha < \gamma/2$ then all but $2\delta r$ of the nodes in $I_r^*(x)\backslash S$ are reachable from $x$ in $G-S$. 
Similarly, $x$ can be reached from all but $2\delta r$ of the nodes in $I_r(x)\backslash S$. 
In particular, if $\delta < 1/4$ then more than half of the nodes in $I_r^*(x)\backslash S$ (resp. in $I_r(x)\setminus S$) are reachable from $x$ (resp. $x$ is reachable from) in $G-S$.  
\end{claim}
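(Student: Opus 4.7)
The plan is to prove the forward direction (reachability from $x$) by a dyadic induction on the interval size, and then obtain the backward direction by a symmetric argument. I will use the notation $A_j = [x, x+2^j - 1]$ and $B_j = [x+2^j, x+2^{j+1} - 1]$; by the $\delta$-local expansion of $G$ around $x$ there exists, for each $j$, a $\delta$-bipartite expander $H_j \subseteq E$ between $A_j$ and $B_j$ (with all edges directed from $A_j$ to $B_j$, since $G$ is topologically ordered). Let $R_j \subseteq A_j$ denote the set of nodes reachable from $x$ in $G - S$, and let $V_j = A_j \setminus (S \cup R_j)$ denote the ``bad'' nodes in $A_j$.

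The base case $j = 0$ is immediate: since $x$ is $\alpha$-good we have $x \notin S$, so $R_0 = \{x\}$ and $V_0 = \emptyset$. For the inductive step, assume $|V_j| \leq \delta \cdot 2^j$. The newly unreachable nodes arise only in $B_j$; call this set $Y_j = B_j \setminus (S \cup R_{j+1})$. By definition, no node in $Y_j$ has an in-edge from $R_j$ in the expander $H_j$ (otherwise it would be reachable in $G-S$ since $Y_j \cap S = \emptyset$). Hence no edge of $H_j$ goes from $R_j$ to $Y_j$, so by the $\delta$-expansion of $H_j$, either $|R_j| < \delta \cdot 2^j$ or $|Y_j| < \delta \cdot 2^j$. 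Using $\alpha$-goodness and the inductive hypothesis,
\[
|R_j| \;=\; 2^j - |S \cap A_j| - |V_j| \;\geq\; (1 - \alpha - \delta)\, 2^j,
\]
so as long as $\alpha + 2\delta < 1$ (which is what the condition $\alpha < \delta/2$ and the eventual restriction $\delta < 1/4$ guarantee), we get $|R_j| > \delta \cdot 2^j$, forcing $|Y_j| < \delta \cdot 2^j$. Therefore $|V_{j+1}| \leq |V_j| + \delta \cdot 2^j \leq \delta \cdot 2^{j+1}$, completing the induction.

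For an arbitrary radius $r$, choose $J$ so that $2^{J-1} < r \leq 2^J$ and observe that every unreachable node of $I_r^*(x) \setminus S$ lies in $V_J$, so the number of unreachable nodes is at most $\delta \cdot 2^J \leq 2 \delta r$, as claimed. The backward direction is completely symmetric: the $\delta$-local expander definition around $x$ also provides, for every radius $r' > 0$, a $\delta$-bipartite expander between $[x - 2r' + 1, x - r']$ and $[x - r' + 1, x]$ (directed forward), so the analogous dyadic induction, now growing the set of nodes \emph{from which $x$ is reachable} in $G - S$, yields at most $2\delta r$ exceptional nodes in $I_r(x) \setminus S$. Finally, for the ``in particular'' statement, if $\delta < 1/4$ then $|I_r^*(x) \setminus S| \geq (1-\alpha) r$, and subtracting the at most $2\delta r < r/2$ unreachable nodes leaves strictly more than half of $I_r^*(x) \setminus S$ reachable.

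The main obstacle is purely bookkeeping: ensuring the inductive hypothesis $|V_j| \leq \delta \cdot 2^j$ combined with $\alpha$-goodness is enough to trigger the expansion property at \emph{every} scale, which requires $\alpha$ small relative to $\delta$; and handling non-dyadic $r$ without losing more than the allowed factor of $2$. Once the dyadic framework is set up correctly, each step is just an application of the expander mixing-style argument to the explicit bipartite expanders $H_j$ guaranteed by $\delta$-local expansion around $x$.
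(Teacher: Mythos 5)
The paper does not actually re-prove this claim; it is quoted from \cite{EC:AlwBloPie18}, with the surrounding text noting only the cosmetic weakening of ``$\delta$-local expander'' to ``$\delta$-local expander around $x$.'' Your dyadic induction is the same argument used in the cited source: track the reachable set $R_j$ and the unreachable-but-surviving set $V_j$ inside the dyadic prefix $A_j = [x, x+2^j-1]$, observe that no edge can leave $R_j$ for $Y_j = B_j\setminus(S\cup R_{j+1})$, invoke the $\delta$-expander between $A_j$ and $B_j$, and use $\alpha$-goodness together with the inductive bound $|V_j|\le \delta 2^j$ to guarantee $|R_j|\ge \delta 2^j$, so that $|Y_j|<\delta 2^j$ and the bound propagates. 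The handling of non-dyadic $r$ and the symmetric backward direction are likewise what the cited proof does, so the core is correct and matches the intended argument.

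One small point of looseness to flag in the ``in particular'' step: you conclude ``more than half of $I_r^*(x)\setminus S$'' is reachable by subtracting $2\delta r < r/2$ from $|I_r^*(x)\setminus S|\ge(1-\alpha)r$, but ``more than half of $I_r^*(x)\setminus S$'' in fact requires $|I_r^*(x)\setminus S| > 4\delta r$, which needs $1-\alpha > 4\delta$. The hypothesis $\delta < 1/4$ alone does not give this when $\delta$ is close to $1/4$ unless $\alpha$ is taken sufficiently small. In fairness, the claim's hypothesis reads ``$\alpha < \gamma/2$'' where $\gamma$ is left undefined in this paper (an inherited notational slip from the source, whose own parameter names differ), so the exact threshold on $\alpha$ is not pinned down here; the substantive content — the $2\delta r$ bound at every scale and its backward analogue — is exactly right.
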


Alwen \etal \cite{EC:AlwBloPie18} also show that in a $\delta$-local expander graph $G$ any two $\alpha$-good vertices in $G-S$ are connected. 
As was the case for \claimref{numReachable} the only difference between our statement of \lemref{lem:cgood:connected} and \cite{EC:AlwBloPie18} is that \cite{EC:AlwBloPie18} did not define the phrase ``G is a $\delta$-local expander around node $x$'' since the DAG $G$ they consider is assumed to satisfy this property for all nodes $x$. 
However, their proof only relies on the fact that  G is a $\delta$-local expander around nodes $u$ and $v$.

\begin{lemma}[Lemma 4.3 in \cite{EC:AlwBloPie18}]
\lemlab{lem:cgood:connected}
Let $G=([n], E)$ be a $\delta$-local expander around nodes $u$ and $v$, $S \subseteq [n]$ and $0\le\alpha\le 1$. 
If $\delta<\min( (1-\alpha)/2,1/4)$, and nodes $u$ and $v$ are $\alpha$-good with respect to $S$ then  $u$ and $v$ are connected in $G-S$.
\end{lemma}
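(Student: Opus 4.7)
The plan is to exhibit a directed path from $u$ to $v$ in $G-S$ by combining local reachability (\claimref{numReachable}) with the $\delta$-expansion of $G$ around each endpoint. Assume WLOG that $u<v$ and set $r := v-u$.

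First, I invoke \claimref{numReachable} at $u$ with radius $r$: since $u$ is $\alpha$-good under $S$, $G$ is a $\delta$-local expander around $u$, and $\delta<1/4$, the set
\[R_u := \{\,w \in I_r^*(u)\setminus S : u \text{ reaches } w \text{ in } G-S\,\}\]
has size $|R_u| \ge |I_r^*(u)\setminus S| - 2\delta r \ge (1-\alpha-2\delta)\,r$, which is strictly positive by the hypothesis $\delta < (1-\alpha)/2$. Dually, applying the backward direction of \claimref{numReachable} at $v$ with radius $r$, the set
\[R_v := \{\,w \in I_r(v)\setminus S : w \text{ reaches } v \text{ in } G-S\,\}\]
has $|R_v| \ge (1-\alpha-2\delta)\,r$.

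Next, I search inside $[u,v]$ for a witness. Both $R_u \subseteq [u,v-1]\setminus S$ and $R_v \subseteq [u+1,v]\setminus S$ sit inside $[u,v]\setminus S$, which by $\alpha$-goodness of $u$ together with $v \notin S$ has size at least $(1-\alpha)r+1$. A crude pigeonhole argument already yields $R_u \cap R_v \neq \emptyset$ whenever $2(1-\alpha-2\delta) > (1-\alpha)$, i.e.\ for $\delta < (1-\alpha)/4$, and any common node $w$ produces the path $u \to w \to v$. To cover the remaining range $(1-\alpha)/4 \le \delta < (1-\alpha)/2$, I split $[u,v]$ at its midpoint $m=u+\lfloor r/2\rfloor$ and appeal to the bipartite $\delta$-expander that the $\delta$-local expansion of $G$ around $u$ (with radius $\lfloor r/2 \rfloor$) supplies between the halves $[u,m-1]$ and $[m,v-1]$. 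Reapplying \claimref{numReachable} with radius $r/2$ to $u$ and to $v$ separately, the portion of $R_u$ in the left half and the portion of $R_v$ in the right half each exceed the $\delta$-fraction threshold required by the expander. The expander therefore furnishes an edge $(a,b) \in E(G)$ with $a \in R_u$, $b \in R_v$, and $a,b \notin S$, and the concatenation of the $u \to a$ path, the edge $(a,b)$, and the $b \to v$ path is a directed path from $u$ to $v$ in $G-S$.

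The main obstacle is the tightness of the second step: a pure set-intersection on $[u,v]\setminus S$ loses a factor of two in the allowable $\delta$, so the crossing-edge argument through the bipartite expander is genuinely needed to attain the full range $\delta < (1-\alpha)/2$ rather than just $\delta < (1-\alpha)/4$. Secondary technicalities are the parity of $r$ (so the half-intervals align cleanly with the expander's two sides), the boundary constraint $x \ge 2r$ in the definition of $\delta$-local expansion around $x$ (which dictates whether $u$, $v$, or a midpoint is the more convenient expansion center near the endpoints of $[n]$), and the degenerate case of very small $r$, for which a direct edge, or a single application of the forward $\delta$-expander from $u$, already connects $u$ to $v$ in $G-S$.
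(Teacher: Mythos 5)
The paper does not actually prove this statement: it simply cites it as Lemma~4.3 of \cite{EC:AlwBloPie18}, noting only that the phrasing ``$\delta$-local expander around nodes $u$ and $v$'' (rather than around all nodes) is the sole change from the original. So there is no in-paper proof to compare against, and the task is to assess your argument on its own terms.

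Your overall architecture --- use \claimref{numReachable} to get a large forward-reachable set $R_u$ and a large backward-reaching set $R_v$, and when pigeonhole fails, manufacture a crossing edge via the bipartite $\delta$-expander between the two halves of $[u,v]$ --- is the right shape, and you correctly diagnose that the pure intersection argument only covers $\delta<(1-\alpha)/4$. The gap is in the quantitative step of Case~2. After splitting at $m$ and applying \claimref{numReachable} with radius $\lfloor r/2\rfloor$, the bound you have (and the one \claimref{numReachable} as stated supplies) is $|R_u\cap A|\ge |A\setminus S|-2\delta|A|\ge(1-\alpha-2\delta)|A|$, and symmetrically for $R_v\cap B$. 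For the $\delta$-expander between $A$ and $B$ to produce a crossing edge you need each of these to be at least $\delta|A|$, i.e.\ $1-\alpha-2\delta\ge\delta$, which forces $\delta\le(1-\alpha)/3$. So your Case~2 argument certifies only the range $\delta<(1-\alpha)/3$, not the claimed $\delta<\min\bigl((1-\alpha)/2,\,1/4\bigr)$. The uncovered interval $[(1-\alpha)/3,\ \min((1-\alpha)/2,1/4))$ is non-empty precisely when $\alpha>1/4$, so this is a genuine gap, not a boundary technicality.

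The natural way you might have hoped to close it --- appealing to the ``more than half of $I_r^*(x)\setminus S$'' phrasing of \claimref{numReachable}, which would give $|R_u\cap A|>\tfrac{1}{2}|A\setminus S|\ge\tfrac{1-\alpha}{2}|A|>\delta|A|$ exactly when $\delta<(1-\alpha)/2$ --- does not rescue the argument either, because with the ``all but $2\delta r$'' constant the ``more than half'' conclusion itself only holds when $|A\setminus S|>4\delta|A|$, i.e.\ when $\delta<(1-\alpha)/4$; the unconditional ``$\delta<1/4\Rightarrow$ more than half'' sentence in the paper's restatement of \claimref{numReachable} is imprecise for $\alpha>0$. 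To actually reach $\delta<(1-\alpha)/2$ you would need a tighter version of the reachability claim (e.g.\ losing only $\delta r$ rather than $2\delta r$ nodes), or a different decomposition of $[u,v]$; as written, the last third of the parameter range is not justified.
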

%  !TEX root=main.tex
\section{Weak \CRLCC Decoder}
\subsection{Query Procedure for $i\ge8k - 4\ell(\sec)$}
\applab{app:i-ge-8k}
In this section, we describe the procedure for recovering a coordinate for input $i\ge8k-4\ell(\sec)$.  
We show that regardless of the underlying graph, any adversary that can change at most $\frac{\Delta_J k}{4}$ coordinates of a codeword obtained from $\Enc$, cannot prevent outputting $\bot$ or the correct bit for a query on the last $k' +1$ blocks. 

Consider the following algorithm $\Dec_{1}^w(s, i)$ for any $i \ge 8k-4\ell(\sec)$:  

\begin{mdframed}
$\Dec_{1}^w$: \underline{Input} : Index $i \in [n]$ such that $i\ge 8k-4\ell(\sec)$.
\begin{enumerate}
\item
Sample $\Theta\left(\frac{\log^3 k}{\ell(\sec)}\right)$ blocks $\{ w_{t} \}$ 
of $w$ for $t$ between $2k'$ and $3k'$ uniformly at random.
\item 
Decode each of the queried blocks $w_t$ to the corrected codeword,  $c_t := \ECC(\ECCD(w_t))$. (could possibly be a $\bot$ if $\ECCD$ fails to decode).
\item 
Let $c_{maj} = \text{majority} \{ c_t \}$ be the codeword of $\ECC$ which occurs majority of the times in Step (2) above. 
\item
Output symbol $(i \mod 4\ell(\sec))$ of $c_{maj}$.
\end{enumerate}
\end{mdframed}

We show in \lemref{lem:back} that $\PPr{\Dec_{1}^w(s, i) = c[i] } \ge 1-\negl(n)$ and uses at most $\O{\log^3 n}$ queries.

\begin{lemma}
\lemlab{lem:back}
Let $\Enc$ be as described in \secref{sec:lccencode}.
For any $ 8k-4\ell(\sec) \le i \le n$, $\Dec_{1}^w$ does the following: 
\begin{enumerate}
\item
For any $x \in \{0,1\}^k$ and $c =\Enc(s, x)$,  it holds that $\Dec_{1}^c(s, i) = c[i]$.
\item
For any $x \in \{0,1\}^k$, $c =\Enc(s, x)$  and $w \in \{0,1\}^n$ generated by any \PPT adversary such that $\HAM(c, w)\le\frac{\Delta_J k}{4}$, it holds that 
\[\PPr{\Dec_{1}^w(s, i) = c[i] } \ge 1-\frac{1}{n^{\log^2 n}}.\]
\end{enumerate}
Moreover, $\Dec_{1}^w$ makes at most $\O{\log^3 n}$ queries to $w$.
\end{lemma}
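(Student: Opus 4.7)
The proof has two parts corresponding to the two claims of the lemma, plus a query-count check, and I would handle them in that order since the first is immediate and the second reduces cleanly to a Chernoff bound.

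For claim (1), when $w = c$ every block $w_t$ for $t \in [2k'+1, 3k']$ equals $c_t = \ECC(\ell_{k',s})$ by construction of $\Enc$ (the final $k'$ blocks are identical copies of the encoding of the last label), so in Step~2 every $\ECCD(w_t) = \ell_{k',s}$ and thus every $\ECC(\ECCD(w_t)) = c_{2k'}$. The majority codeword is therefore $c_{2k'} = c_t$ for every $t$ in the sampled range, and reading off coordinate $(i \bmod 4\ell(\sec))$ returns exactly $c[i]$, since by the assumption $i \geq 8k - 4\ell(\sec)$ the index $i$ lies in one of these identical trailing blocks.

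For claim (2), the key step is a budget argument bounding the number of ``bad'' blocks, where I call block $t \in [2k'+1, 3k']$ bad if $\ECC(\ECCD(w_t)) \neq c_{2k'}$. Since $\ECC$ corrects a $\Delta_J$ fraction of errors, any bad block requires the adversary to flip at least $\Delta_J \cdot 4\ell(\sec)$ bits inside that block. The adversary's total budget of $\Delta_J k/4$ bit flips therefore supports at most $\frac{\Delta_J k/4}{4 \Delta_J \ell(\sec)} = \frac{k}{16\ell(\sec)} = k'/16$ bad blocks out of the $k'+1$ candidate blocks. Hence the fraction of bad blocks is at most, say, $1/15$, well below $1/2$.

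Next I would apply a Chernoff bound to the sampling in Step~1. With $r$ independent uniform samples from a population with bad fraction $\leq 1/15$, the probability that more than $r/2$ samples are bad is at most $e^{-\Omega(r)}$. Choosing the hidden constant in $\Theta(\log^3 k/\ell(\sec))$ appropriately (noting that $r$ bit queries amount to $4\ell(\sec) r$ bits, so $r$ of order $\log^3 k /\ell(\sec)$ gives total query complexity $\O{\log^3 k} = \O{\log^3 n}$), the failure probability is bounded by $n^{-\log^2 n}$; conditioned on a majority of the samples being good, Step~3 returns $c_{2k'}$ and Step~4 outputs $c[i]$. I expect the main (mild) obstacle to be calibrating the constant so that the Chernoff tail matches the claimed $n^{-\log^2 n}$ bound while keeping the block-sample count at $\Theta(\log^3 k/\ell(\sec))$; this is a routine trade-off given the exponentially small tail of Chernoff. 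Notably, this entire argument is information-theoretic and does not invoke the PPT assumption or collision resistance — the repetition structure together with the error-tolerance of $\ECC$ suffices.
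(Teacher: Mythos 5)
Your proposal is correct and follows essentially the same route as the paper's proof: part (1) is immediate from the block repetition structure, part (2) combines a budget argument bounding the number of irreparably corrupted blocks with a Chernoff bound on the sampled blocks, and the query count is $\O{\log^3 k / \ell(\sec)}$ blocks times $4\ell(\sec)$ bits per block. Your budget accounting, giving at most $k'/16$ bad blocks, is in fact tighter than the paper's stated "$1/4$ fraction" (which is loose by a factor of four); both bounds are well under $1/2$, so the Chernoff step goes through identically, and both your writeup and the paper leave the exact calibration of the sample count against the claimed $n^{-\log^2 n}$ tail somewhat informal.
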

\begin{proof} First, note that for any codeword obtained from $\Enc$, $\Dec_{1}$ will always output the correct symbol. 
Let $w$ be the received word which is obtained by altering at most $\frac{\Delta_J k}{4}$ coordinates of some codeword $c$ obtained from $\Enc$, i.e., $0< \HAM(c,w) \le\frac{\Delta_J k}{4}$.
 From $\Enc$, we know that the last $k'$ blocks of any codeword are exactly the same.  
Since $\HAM(c,w) \le\frac{\Delta_J k}{4}$,  at most $\frac{1}{4}$ fraction of these blocks are modified in more than $\Delta_J$ fraction of their coordinates. 
Therefore, at least $\frac{3}{4}$ fraction of the last $k'$ blocks can be corrected to a unique codeword closest to them. 
Thus, each query finds an correct codeword block with probability at least $\frac{3}{4}$. 
Then by standard Chernoff bounds, the probability that the majority of the $\Theta\left(\frac{\log^3 k}{\ell(\sec)}\right)$ queries are correct is at least $1 - e^{-\Omega(n_s)}$, where $n_s = \Theta\left(\frac{\log^3 k}{\ell(\sec)}\right)$ is the number of sampled blocks.

Since $\Dec_{1}$ queries $\O{ \frac{\log^3 k}{\ell(\sec)}}$ blocks each of length $4 \ell(\sec)$ bits, the complexity of $\Dec_{1}$ is at most $\O{\log^3 k} = \O{\log^3 n}$. 
\end{proof}

\subsection{Query Procedure for $i<8k-4\ell(\sec)$}
\applab{app:i-le-8k}

In this section, we describe the algorithm for recovering a coordinate for input $i<8k-4\ell(\sec)$. 
Before we describe the algorithm, we need a few definitions and properties of the code we constructed in \secref{sec:lccencode}. 

Recall that for a codeword obtained from $\Enc$, the first $k'$ blocks of length $4\ell(\sec)$ each, corresponds to the encoding of the message symbols, and the next $k'$ blocks correspond to the encoding of labels of the nodes of a fixed $\delta$-local expander graph $G=([k'], E)$ generated using the $k'$ message blocks. 

Recall that we say that a node $v$ is {\em green} if it is locally consistent with the hash of its parents. 
In the next lemma, we describe an algorithm which verifies whether a given node of $G$ is green with respect to the labels obtained from the received vector $w \in \{0,1\}^n$. From the definition of a green node it follows that we need to query only the labels of the parents of a given vertex $v$ to verify if it is green. 
Since the indegree of $G$ is $\O{\log k'}$, we can check if a particular node is green by making at most $\O{\log k'}$ block queries to $w$. 
We now formalize the verification procedure.

\begin{lemma}
\lemlab{lem:green}
Let $G=([k'], E)$ be  $\delta$-local expander with indegree $\O{\log k'}$ used by $\Enc$ and let $w =(w_1, \ldots, w_{3k'}) \in \{0,1\}^{n}$ be the corrupted word obtained from the $\PPT$ adversary. 
There exists an algorithm \Green that uses $\O{\ell(\sec)  \log \left(n/\ell(\sec) \right)}$ coordinate queries to $w$ and outputs whether a given node $v \in [k']$ of $G$ is green or not. 
\end{lemma}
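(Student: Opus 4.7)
The plan is to give a straightforward procedure that directly queries the symbols needed to evaluate the local-consistency predicate defining a green node, and then count the queries. There is essentially no technical obstacle: the lemma is a query-complexity statement about a predicate whose dependency set on $w$ is explicitly described by the definition of ``green.''

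First, I would have the verifier look up the parents $v_1,\ldots,v_d$ of $v$ in $G$; this requires no queries to $w$ because $G$ is public and fixed by $\Enc$. Since $G$ is a $\delta$-local expander with $\indeg(G) = \O{\log k'}$, we have $d \leq c\log k'$ for some absolute constant $c$.

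Next, the algorithm \Green~ reads in full exactly the following $d+2$ blocks of $w$, each of $4\ell(\sec)$ bits: (i) the message block $w_v$, which is the (possibly corrupted) $\ECC(x_v)$; (ii) the label block $w_{k'+v}$, which is the (possibly corrupted) $\ECC(\ell_{v,s})$; and (iii) for each parent $v_i$, the label block $w_{k'+v_i}$. Applying $\ECCD$ to each block yields $x_v'$, $\ell_{v,s}'$, and $\ell_{v_i,s}'$ for $i \in [d]$ (if any $\ECCD$ call returns $\bot$, treat it as a distinct sentinel string; alternatively one may declare $v$ red immediately on decoding failure of the input side, which is consistent with the green/red definition). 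The algorithm then uses the public seed $s$ to compute $H\bigl(s,\, x_v' \circ \ell_{v_1,s}' \circ \cdots \circ \ell_{v_d,s}'\bigr)$ and outputs ``green'' if and only if this value equals $\ell_{v,s}'$. By the definition of a green node this is exactly the correct output, so correctness is immediate.

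For the query bound, the algorithm touches $d+2 = \O{\log k'}$ distinct blocks and reads all $4\ell(\sec)$ coordinates of each, so the total number of coordinate queries is $4(d+2)\ell(\sec) = \O{\ell(\sec)\log k'}$. Since $\Enc$ produces codewords of length $n = 12 k' \ell(\sec)$, we have $k' = \Theta\!\bigl(n/\ell(\sec)\bigr)$ and hence $\log k' = \Theta\!\bigl(\log(n/\ell(\sec))\bigr)$, giving the claimed bound $\O{\ell(\sec)\log(n/\ell(\sec))}$. The only subtlety worth flagging, rather than an obstacle, is that $\Dec$ is assumed to know $G$ and $s$ ``for free,'' which the model of $\CRLCC$ allows since both are part of the public setup; accessing them does not incur queries to $w$.
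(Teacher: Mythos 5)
Your proposal is correct and matches the paper's proof essentially verbatim: query the $d+2$ relevant blocks ($w_v$, $w_{k'+v}$, and $w_{k'+v_j}$ for each parent), decode each with $\ECCD$, recompute the hash, and compare; the query count is $\O{\log k'}$ blocks times $4\ell(\sec)$ bits per block, with $k' = \Theta(n/\ell(\sec))$. If anything, your version is slightly cleaner than the paper's write-up, which appears to contain a typo by feeding the raw blocks $w_{k'+v_j}$ into $H$ rather than their decodings $\ell_{v_j,s}'$ as the green-node definition requires.
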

\begin{proof}
We claim the following algorithm achieves the desired properties:
\begin{mdframed}
\Green: \underline{Input}: Node $v \in [k']$ of $G$ with indegree $d=\O{\log{k'}}$.
\begin{enumerate}
\item
Query and decode the blocks $w_{k' + v_j}$ for all  $v_j \in \parents_{G}(v)$. Let $\ell_{k' + v_j}' :=\ECCD(w_{k' + v_j})$. Return `Red' if $\ECCD$ fails on any input. 
\item
Query and decode the block $w_{v}$. Let $x_{v}' :=\ECCD(w_{v})$. Return `Red' if $\ECCD$ fails. 
\item
If $\ECCD(w_{k' + v})= H(s, x_v' \circ w_{k' + v_1} \circ w_{k' + v_2}\circ\cdots\circ w_{k' + v_d})$, where $v_j \in \parents_{G}(v)$ for $j \in [d]$, then output `Green' 

\item Else output `Red'. 
\end{enumerate}
\end{mdframed}

$\Green$ outputs whether or not the label of node $v$, \emph{i.e.} $\ECCD(w_{k' + v})$, is green by querying at most $\O{\log k'}$ blocks of $w$. Since each block is $4 \ell(\sec)$ bits long, and $k' = k/\ell(\sec) = \O{n/ \ell(\sec)}$, the number of bits queried is equivalently $\O{\ell(\sec)  \log \left(n/\ell(\sec) \right)}$.
\end{proof}

Let $w \in \{0,1\}^n$ be the received word obtained from the $\PPT$ adversary. 
Let $S \subseteq [k']$ be the set of red nodes of $G$ with respect to its labeling obtained from $w$. 
Let $0 < \alpha < 1$. Recall from \defref{def:cgood} that we call a node $i \in [k']$ of $G$ to be $\alpha$-good under the set $S \subseteq [k']$ if for all $r > 0$, the number of nodes of $S$ in any $r$-sized neighborhood of $i$ is at most $\alpha\cdot r$. 
We now give an algorithm to locally verify if a given node of $G$ is $\alpha$-good with high probability. 

\begin{lemma}
\lemlab{lem:cgood:check}
For any $\alpha < \frac{3}{4}$, there exists a procedure that makes $\O{\ell(\sec)  \log^{3+\eps} \left(\frac{n}{\ell(\sec)} \right)}$ coordinate queries to $w$ and for any node $i \in [k']$ of $G$ does the following:
\begin{itemize}
\item Accepts if $i$ is $\frac{\alpha}{4}$-good under $S$ with probability  $1- \negl(n)$.
\item Rejects if $i$ is not $\alpha$-good under $S$ with probability  $1-\negl(n)$.
\end{itemize}
\end{lemma}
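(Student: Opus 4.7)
The plan is to design a randomized tester that, for each dyadic scale $r_j=2^j$ with $j=1,2,\dots,\lceil\log k'\rceil$, estimates the density of red nodes in the two intervals $A_j=[i-r_j+1,\,i]$ and $B_j=[i,\,i+r_j-1]$ around the challenge node $i$. Concretely, for each scale $j$ and each of the two intervals, the tester draws $N=\log^{1+\eps}(n/\ell(\sec))$ independent uniformly random nodes (with replacement) from the interval, invokes the \Green subroutine of \lemref{lem:green} on each sampled node, and records the empirical fraction of samples reported as red. It first checks that $i$ itself is green and then accepts if and only if all $2\lceil\log k'\rceil$ empirical fractions lie at or below a threshold $\theta$ chosen strictly between $\alpha/4$ and $\alpha/2$, e.g.\ $\theta=3\alpha/8$; otherwise it rejects.

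For completeness, suppose $i$ is $(\alpha/4)$-good under $S$. Then on every scale and in either interval the true fraction of red nodes is at most $\alpha/4$, which is bounded below $\theta=3\alpha/8$ by a constant gap. A Chernoff bound shows that the empirical fraction exceeds $\theta$ with probability at most $\exp(-\Omega(N))=n^{-\omega(1)}$, and a union bound over the $\O{\log k'}$ scales and the two intervals per scale keeps the total failure probability negligible.

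For soundness, suppose $i$ is not $\alpha$-good. Then there is some radius $r^{*}>0$ with, without loss of generality, $|[i-r^{*}+1,i]\cap S|>\alpha r^{*}$. Let $j^{*}$ satisfy $2^{j^{*}-1}<r^{*}\leq 2^{j^{*}}$; since $A_{j^{*}}$ contains $[i-r^{*}+1,i]$, the density of red nodes in $A_{j^{*}}$ satisfies
\[\frac{|A_{j^{*}}\cap S|}{|A_{j^{*}}|}\;\geq\;\frac{\alpha r^{*}}{2^{j^{*}}}\;>\;\frac{\alpha}{2}\;>\;\theta.\]
A second Chernoff bound, again combined with a union bound over the $\O{\log k'}$ scales, guarantees that the empirical fraction on scale $j^{*}$ falls below $\theta$ only with negligible probability, so the tester rejects except with negligible probability.

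The query cost is $1+2\lceil\log k'\rceil\cdot N$ invocations of \Green, each costing $\O{\ell(\sec)\log(n/\ell(\sec))}$ coordinate queries by \lemref{lem:green}, for a total of $\O{\ell(\sec)\log^{3+\eps}(n/\ell(\sec))}$ queries as claimed. The only nontrivial step is the dyadic reduction: an arbitrary bad radius $r^{*}$ is safely captured by the enclosing dyadic interval $A_{j^{*}}$ at a factor-of-two shrinkage in relative density, and this shrinkage is precisely absorbed by the constant gap between the soundness threshold $\alpha$ and the completeness threshold $\alpha/4$, which leaves a constant Chernoff margin on both sides and explains the hypothesis $\alpha<3/4$ as a comfortable headroom condition.
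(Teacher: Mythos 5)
Your proposal is correct and essentially identical to the paper's own proof: both use the dyadic decomposition into intervals $[i-2^p+1,i]$ and $[i,i+2^p-1]$, sample $\polylog$ nodes per interval with replacement, threshold the empirical red-node fraction at $3\alpha/8$, and argue completeness and soundness via Chernoff plus a union bound over scales, with the soundness case reducing an arbitrary bad radius to its enclosing dyadic interval at a cost of a factor-of-two dilution in density.
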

\begin{proof}
Consider the following algorithm $\Good$, where $S$ is defined to be the set of red nodes:
\begin{mdframed}
$\Good$: \underline{Input}:  Node $i \in [k']$ of $G$, $\alpha < \frac34 $, space parameter $\eps>0$. 
\begin{enumerate}
\item If $i$ is not \emph{green}, then Reject. %$i$ is not $\alpha$-good. 
\item For  every $p \in \{1, \ldots, \log k' \}$ do:
	\begin{itemize}
	\item Sample %$\log^{1+\eps}k'$  
	$\log^{1+\eps} k'$ nodes $U_{1p} \subseteq [i-2^p + 1, i]$ and $U_{2p} \subseteq [i, i+2^p-1]$ with replacement. 
	\item If the fraction of red nodes in $U_{1p}$ or $U_{2p}$ is larger than $\frac{3\alpha}{8}$, then Reject.
	\end{itemize} 
\item Accept otherwise. 
\end{enumerate}
\end{mdframed}

Note that $\Good$ samples at most $\O{\log^{2+\eps} k'}$ nodes of $G$ and for each of those nodes, we check if it is green. Using Algorithm $\Green$ described in  \lemref{lem:green}, we can verify if a node is green using only $\O{\log k'}$ block queries to $w$. 
Therefore the number of coordinates of $w$ queried by $\Good$ is at most $\O{\ell(\sec)  \log^{3+\eps} \left(n/\ell(\sec) \right)}$.

To prove the correctness, first observe that if the node $i$ is red, then $i \in S$ and therefore by definition it is not $\alpha$-good for any $\alpha$. $\Good$ always rejects such nodes. 

Suppose node $i$ is $\alpha/4$-good. Then any $r$-sized neighborhoods of $i$, i.e $[i-r+1, i]$ and $[i, i+r-1] $ have at most $\alpha\cdot r/4$ red nodes. In particular, for all $p \in [\log k']$, the neighborhoods $U_{1p} = [i-2^p + 1, i]$ and $U_{2p} = [i, i+2^p-1]$ contain at most $\alpha /4$ fraction of red nodes. So, on sampling uniformly random $\log^{1+\eps} k'$ vertices from these intervals, the probability that we see more than $3\alpha/8$ fraction of red nodes is at most $e^{-\frac{\alpha}{12}\log^{1+\eps} k'}$ which follows from Chernoff bounds. Taking a union bound over all $\log k'$ intervals, $\Good$ accepts any $\alpha/4$-good node with probability at least $1 - e^{-\frac{\alpha}{24}\log^{1+\eps}(k')}$. Therefore $\Good$ accepts any $\alpha/4$-good node with probability at least $ 1 - \negl(k') = 1- \negl(n)$ for $\ell(\sec) = \polylog n$.

In order to show that the algorithm rejects any node which is not $\alpha$-good with high probability, we first show that any node which is not $\alpha$-good under $S$ is not $\alpha/2$-good for some interval of size $2^p$, $p > 0$.  The rejection probability then follows from Chernoff bounds. 
If node $i$ is not $\alpha$-good under $S$, then there exists a neighborhood $I$ of $i$ such that the number of red nodes in $I$ is at least $\alpha \lvert I \rvert$. 
Let $R(X)$ denote the number of red nodes for some subset $X\subseteq V$. 
Let $U_1 \subseteq I \subseteq U_2$ where $\lvert U_1 \rvert = 2^{p^*}$ and $\lvert U_2 \rvert = 2^{p^*+1}$ for some integer $p^* > 0$. 
Then
\[
R(U_2) \ge R(I) \stackrel{\mathclap{*}}{\ge} \alpha |I| \ge 
\alpha |U_1| = \alpha \cdot 2^{p^*} = \frac{\alpha}{2} \cdot 2^{p^* +1}.
\]
where the inequality (*) results from $i$ not being $\alpha$-good. 
So, the probability that this interval goes undetected by $\Good$ is at most $e^{-\frac{\alpha}{32} \log^{1+\eps} k'}$. Therefore, $\Good$ rejects any node which is not $\alpha$-good with probability at least $1 - \negl(k')$. 
\end{proof}

Now, we describe the decoding procedure for any index $i <8k-4\ell(\sec) $ using the $\Good$ procedure.  Let $(w, i, x)$ be the challenge provided by any \PPT adversary such that $\HAM( c , w) \leq \tau$, where $c = \Enc(x)$. 

Recall from the encoding procedure in \seclab{sec:lcc}, the first $8k-4\ell(\sec)$ bits correspond to either the the encoding of the message bits using $\ECC$ (the first $4k$ bits) or the encoding of the labels of the nodes the $\delta$-local expander (the next $4k$ bits). For any index query corresponding to the encoding of the label bits ($i > 4k$), we check if the label is tampered with or not. If it is not tampered, then the decoder returns the received bit, else it returns a $\bot$.  For any $i \leq 4k$, the decoder first tests if its corresponding label is tampered. If it is not tampered, then the decoder returns the corresponding received bit, else it returns a $\bot$. %We now describe the process to test whether a given label is tampered with or not. 

Recall that using Algorithm $\Dec_{1}$ we can successfully correct any index $i \geq 8k-4\ell(\sec)$. Therefore, to design $\Dec_{2}$, we assume the correctness of node $k'$ using $\Dec_{1}$. 
The procedure $\Dec_{2}$ first checks whether nodes $i$ and node $k'$ are $\alpha$-good for some $ \alpha < 3/4$. 
If they are both $\alpha$-good, then by \lemref{lem:cgood:connected} there is a path of green nodes connecting $i$ and node $k'$. 
Node $k'$ now serves as an anchor point by being both $\alpha$-good and correct due to the repetition code, so then \lemref{lem:cgood} implies node $i$ is also unaltered. We now formalize this and describe $\Dec_{2}$ in \lemref{lem:front}. 

\begin{lemma}
\lemlab{lem:front}
Let $\Enc$ be as described in \secref{sec:lccencode}.
For any $ 1 \le i < 8k-4\ell(\sec) $, there exists a procedure $\Dec_{2}$ with the following properties: 
\begin{enumerate}
\item
For any $x \in \{0,1\}^k$ and $c =\Enc(s, x)$,  $\Dec_{2}^c(s, i) = c[i]$.
\item
For any $x \in \{0,1\}^k$, $c =\Enc(s, x)$  and $w \in \{0,1\}^n$ generated by any \PPT adversary such that $\HAM(c, w)\le\frac{\Delta_J k}{4}$,
\[\PPr{\Dec_{2}^w(s, i) \in \{c[i], \bot \} } \ge 1-\negl(n).\]
\end{enumerate}
Moreover, $\Dec_{2}$ makes at most $\O{ \ell(\sec) \log^{3+\eps}\left( n/ \ell(\sec) \right)}$ queries to input $w$ for any constant $\eps>0$.
\end{lemma}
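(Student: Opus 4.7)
The plan is to design $\Dec_2$ around two ideas already developed in the paper: (i) use the repetition-encoded tail to anchor a single \emph{correct} green node at position $k'$, and (ii) use the $\Good$ subroutine from \lemref{lem:cgood:check} to certify that the queried node is $\alpha$-good with respect to the set of red nodes, so that \lemref{lem:cgood:connected} yields a green path to the anchor and \lemref{lem:cgood} transports correctness along it.

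Concretely, I would first map $i$ to its node index $i'$ in $G$, where $i' = \lceil i/(4\ell(\sec))\rceil$ if $i \le 4k$ (the bit lies in a message block) and $i' = \lceil i/(4\ell(\sec))\rceil - k'$ if $4k < i < 8k-4\ell(\sec)$ (the bit lies in a label block). Pick a constant $\alpha < \min\{3/4,\, 1-2\delta\}$ with $\delta = 1/100$, and run $\Good(i')$ and $\Good(k')$; if either rejects, output $\bot$. Otherwise, adapt the sampling step of $\Dec_1$ from \lemref{lem:back} to sample $\Theta(\log^3 k/\ell(\sec))$ repetition blocks in positions $[2k'+1, 3k']$, decode each via $\ECCD$, and take the majority to produce a candidate $\ell^*$ for $\ell_{k',s}$. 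Finally, query and $\ECCD$-decode the single block of $w$ that contains bit $i$, re-encode the decoded value via $\ECC$, and return the appropriate position of this re-encoded block.

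For correctness, perfect completeness on codewords is immediate since no node is red and every node is trivially $\alpha$-good, so $\Good$ always accepts and the decoded values coincide with the originals. For soundness under $\HAM(c,w) \le \Delta_J k/4$ by a \PPT adversary, I would union-bound over two failure modes. First, by \lemref{lem:cgood:check} both $\Good$ calls behave correctly with overwhelming probability; second, by the same counting argument used in the proof of \lemref{lem:back}, at most a $1/4$ fraction of the repetition blocks are beyond the $\ECCD$ radius, so the majority vote recovers $\ell^* = \ell_{k',s}$ except with negligible probability. Conditioned on both events, node $k'$ is both green (since $\Good(k')$ first invokes $\Green$) and \emph{correct}. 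Then \lemref{lem:cgood:connected}, applied with the chosen $\delta$ and $\alpha$, produces a fully green path from $i'$ to $k'$ in $G-S$, and \lemref{lem:cgood} forces $\ell_{i',s}' = \ell_{i',s}$ unless the attacker has found a hash collision, which has negligible probability by \defref{def:crhf}. A second invocation of collision resistance on the single hash constraint defining $\ell_{i',s}$ further yields $x_{i'}' = x_{i'}$. Re-encoding then reproduces $c_{i'}$ (or $c_{i'+k'}$), whose queried bit is $c[i]$. The query cost is dominated by the two $\Good$ calls at $\O{\ell(\sec)\log^{3+\eps}(n/\ell(\sec))}$ bits each; the repetition sampling contributes $\O{\log^3 n}$ and the final block query contributes $\O{\ell(\sec)}$.

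The main obstacle is the interplay at the anchor: correctness of $\ell_{k',s}$ must come from the repetition sampling and cannot be inferred from $\Good(k')$, which only verifies local hash consistency, while greenness of $k'$ is what $\Good(k')$ provides; only with both in hand does \lemref{lem:cgood} apply along the green path promised by \lemref{lem:cgood:connected}. The parameter constraint $\alpha < 1-2\delta$ with $\delta = 1/100$ is precisely what is required to invoke \lemref{lem:cgood:connected} on $G-S$, so the numerology is tight but routine, and the overall failure probability is negligible after a final union bound over the constantly many subroutine invocations.
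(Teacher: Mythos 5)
You follow the same high-level strategy as the paper: anchor a correct green node at $k'$ via the repetition tail, test $\alpha$-goodness of $i'$ and $k'$ with $\Good$, then transport correctness back along a green path via \lemref{lem:cgood:connected} and \lemref{lem:cgood}. However, there is a genuine gap in the anchor step.

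Your procedure runs $\Good(k')$ and \emph{separately} extracts $\ell^*$ from the repetition tail, and then asserts that, conditioned on both succeeding, node $k'$ is ``both green and correct.'' That inference does not hold: greenness of $k'$ inside $\Good(k')$ is judged against $\ell_{k',s}' = \ECCD(w_{2k'})$, while $\ell^* = \ell_{k',s}$ only certifies that the repetition tail decodes to the true label; it says nothing about block $2k'$ itself. Concretely, an adversary can modify $w_{k'}$ so that $\ECCD(w_{k'}) = x_{k'}'' \neq x_{k'}$ and set $w_{2k'} = \ECC\bigl(H(s, x_{k'}'' \circ \ell_{k'_1,s} \circ \cdots \circ \ell_{k'_d,s})\bigr)$, i.e.\ the honest hash of its forged data against the \emph{unchanged} parent labels. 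Since $k'$ is the sink of $G$, this makes no node red: $S = \emptyset$, every node is $\alpha$-good, both $\Good$ calls accept, and $\ell^*$ decodes correctly from the tail — yet $\ell_{k',s}' \neq \ell_{k',s}$, so $k'$ is not ``correct'' in the sense required by \lemref{lem:cgood}, and the transport argument does not start. If the challenge index lies in block $c_{k'}$, your decoder outputs a bit of $\ECC(x_{k'}'')$, violating soundness, at a cost of only $\O{\ell(\sec)}$ flipped bits. The paper's $\Dec_2$ avoids this precisely by \emph{substituting} the $\Dec_1$-reconstructed label $\ell_{k',s}$ for $\ECCD(w_{2k'})$ when testing whether node $k'$ is green and $\alpha$-good; under that substitution the attack above makes node $k'$ red (unless a hash collision was found), and the decoder correctly outputs $\bot$. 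The minimal fix to your proposal is to either perform the same substitution inside $\Good(k')$, or to additionally check $\ECCD(w_{2k'}) = \ell^*$ and output $\bot$ on mismatch.

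On the other hand, your final step — decoding the block containing bit $i$ with $\ECCD$, re-encoding with $\ECC$, and returning the appropriate bit — is more careful than what the paper literally writes (it ``returns $w_i$''), since the $\alpha$-goodness tests certify only that $\ECCD$ of block $i'$ is correct, not that every raw bit of that block is; your re-encoding step is the right move here. Apart from the anchor gap, the index mapping, the parameter choices ($\delta = 1/100$, $\alpha < 3/4$, satisfying $\delta < \min\{(1-\alpha)/2, 1/4\}$), the perfect completeness argument ($S = \emptyset$ on a codeword), and the query accounting all match the paper's proof.
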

\begin{proof}
We claim the following procedure has the desired properties.
\begin{mdframed}
$\Dec_{2}^w$: \underline{Input}: Index $i<8k - 4\ell(\sec)$. 
\begin{enumerate}
\item Let $S$ be the set of red nodes of $G$ with respect to the labels obtained from $w$.
\item Let $ \alpha < \frac{3}{4}$.
\item Reconstruct the label of $k'$-th node, $\ell_{k', s}$ using a call to $\Dec_1$. 
\item If node $k'$ of $G$ with label $\ell_{k',s}$ and node $ \ceil{ \frac{i}{4 \cdot \ell(\sec)}} \mod k'$ are $\alpha$-good under $S$, then return $w_i$,
\item Else return $\bot$. 
\end{enumerate}
\end{mdframed}	
First we show that the query complexity of $\Dec_{2}$ is at most $\O{ \ell(\sec) \log^{3+\eps}\left( n/ \ell(\sec) \right)}$. 
Observe that  $\Dec_{1}$ described in \lemref{lem:back} uses majority decoding to reconstruct the entire block $w_{2k' + j}$, for any $ 0 \leq j \leq k'-1$. Moreover, it uses 
at most $\O{\log^3 n}$ coordinate queries to $w$ in order to reconstruct any of the last $k'$ blocks with $1 - \negl(n)$. Therefore we can assume that after Step (3), the label $\ell_{k',s}$ of $G$ is correct with very high probability. 
Also, using the procedure described in \lemref{lem:cgood:check}, we can check if both the nodes $k'$ and $i $ are $\alpha$-good using at most $\O{ \ell(\sec) \log^{3+\eps}\left( n/ \ell(\sec) \right)}$ queries for any $\eps >0$. Therefore the query complexity of $\Dec_{2}$ is $\O{ \ell(\sec) \log^{3+\eps}\left( n/ \ell(\sec) \right)}$. 

To show $(1)$, observe that if $w$ is a codeword produced by $\Enc$, then the label $\ell_{ k', s}$ is reconstructed correctly with probability $1$ by $\Dec_{1}$ in the first step. 
Also since $S = \emptyset$, all the nodes of $G$ are $\alpha$-good. 
Therefore, $\Dec_2$ always returns the correct codeword symbol.

Now, if $w$ is any string produced by a $\PPT$ adversary such that $0<\HAM(c,w)\le\frac{\Delta_J k}{4}$, then from the analysis of $\Dec_{1}$ in \lemref{lem:back}, we know that $\ell_{k',s}$ is reconstructed correctly with probability at least $1 - \negl(n)$. 
If both nodes $i$ and $k'$ of $G$ are $\alpha$-good under $S$, then by \lemref{lem:cgood:connected}, we know that there exists a path in $G$ from node $i$ to node $k'$ in $G$ which consists of only green nodes. 
So, node $i$ is a green ancestor of node $k'$. The correctness of $w_i$ then follows from \lemref{lem:cgood}. 
\end{proof}

~\\
We now combine the two correctors described above to obtain our local corrector $\Dec$ as follows: 
\begin{mdframed}
$\Dec^w$: \underline{Input} ~:  Index $i$.
\begin{enumerate}
\item If $i \geq 8k - 4\ell(\sec)$, return $\Dec_{1}^w(s, i)$
\item Else return $\Dec_{2}^w(s, i)$.
\end{enumerate}
\end{mdframed}

\begin{lemma}
\lemlab{lem:frontback}
Let $\Enc$ be as described in \secref{sec:lccencode}.
For any $ i \in [n]$, $\Dec$ does the following:
\begin{enumerate}
\item
For any $x \in \{0,1\}^k$ and $c =\Enc(s, x)$,  $\Dec^c(s, i) = c[i]$.
\item
For any $x \in \{0,1\}^k$, $c =\Enc(s, x)$  and $w \in \{0,1\}^n$ generated by any \PPT adversary such that $\HAM(c, w)\le\frac{\Delta_J k}{4}$,
\[\PPr{\Dec^w(s, i) \in \{c[i], \bot \} } \ge 1-\negl(n).\]
\end{enumerate}
Moreover, $\Dec$ makes at most $\O{ \ell(\sec) \log^{3+\eps}\left( n/ \ell(\sec) \right)}$ queries to input $w$ for any constant $\eps > 0$.
\end{lemma}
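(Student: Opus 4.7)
The plan is to prove this by straightforward case analysis on the input index $i$, combining the guarantees already established for the two sub-procedures $\Dec_1$ and $\Dec_2$. Since $\Dec^w(s,i)$ is defined to invoke $\Dec_1^w(s,i)$ when $i \geq 8k - 4\ell(\sec)$ and $\Dec_2^w(s,i)$ otherwise, each of the three claims reduces to the corresponding guarantee proved in \lemref{lem:back} and \lemref{lem:front}.

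First I would handle perfect correctness on uncorrupted inputs. When $w = c = \Enc(s,x)$, part (1) of \lemref{lem:back} gives $\Dec_1^c(s,i) = c[i]$ for any $i \geq 8k - 4\ell(\sec)$, and part (1) of \lemref{lem:front} gives $\Dec_2^c(s,i) = c[i]$ for $i < 8k - 4\ell(\sec)$. Together, this yields $\Dec^c(s,i) = c[i]$ for all $i \in [n]$, establishing property (1).

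Next I would argue soundness under adversarial tampering. Fix any \PPT adversary producing $w$ with $\HAM(c,w) \le \Delta_J k / 4$. In the regime $i \geq 8k - 4\ell(\sec)$, part (2) of \lemref{lem:back} gives $\Pr[\Dec_1^w(s,i) = c[i]] \geq 1 - 1/n^{\log^2 n}$, which is in particular at least $1 - \negl(n)$, and so $\Dec^w(s,i) \in \{c[i], \bot\}$ with probability at least $1 - \negl(n)$. In the regime $i < 8k - 4\ell(\sec)$, part (2) of \lemref{lem:front} directly gives $\Pr[\Dec_2^w(s,i) \in \{c[i],\bot\}] \geq 1 - \negl(n)$. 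Combining both cases establishes property (2).

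Finally, for query complexity, \lemref{lem:back} bounds the queries of $\Dec_1$ by $\O{\log^3 n}$ while \lemref{lem:front} bounds the queries of $\Dec_2$ by $\O{\ell(\sec)\log^{3+\eps}(n/\ell(\sec))}$ for any constant $\eps > 0$. Since $\Dec$ invokes exactly one of these procedures on any input (and $\Dec_2$ itself already invokes $\Dec_1$ internally to reconstruct $\ell_{k',s}$, whose cost is subsumed by the $\Dec_2$ bound), the worst-case query complexity is $\O{\ell(\sec)\log^{3+\eps}(n/\ell(\sec))}$, as claimed. There is no real obstacle here: the heavy lifting was done in \lemref{lem:back} and \lemref{lem:front}, and this statement is simply their union over the partition $\{i \geq 8k - 4\ell(\sec)\} \cup \{i < 8k - 4\ell(\sec)\} = [n]$.
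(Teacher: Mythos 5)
Your proof is correct and follows the same approach as the paper: a straightforward case split on the index $i$ combined with the guarantees of \lemref{lem:back} and \lemref{lem:front}. Your write-up is somewhat more explicit than the paper's brief one-paragraph argument, but the decomposition, the appeals to the two sub-lemmas, and the query-complexity bound are all the same.
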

\begin{proof}
The proof follows from \lemref{lem:front} and \lemref{lem:back}. Since $n = 3k'\cdot 4\ell(\sec)$, any query on an index $i < 8k - 4\ell(\sec)$ is handled by \lemref{lem:front} while a query on an index $i \ge 8k - 4\ell(\sec)$ is handled by \lemref{lem:back}. Since each scenario uses at most $\O{ \ell(\sec) \log^{3+\eps}\left( n/ \ell(\sec) \right)}$ queries, the algorithm also uses $\O{ \ell(\sec) \log^{3+\eps}\left( n/ \ell(\sec) \right)}$ queries. Note that for $\ell(\sec) = \polylog n $, the query complexity of the decoder is $\polylog n$.
\end{proof}
\noindent
We now show that our construction $\Pi = (\Gen, \Enc, \Dec)$ is a \CRLCC scheme. 

\begin{proofof}{\thmref{thm:rlcc}}
$\Gen$ on input a security parameter  $\sec$ simulates the generator algorithm $\GenH$ of the \CRHF to output a random seed $s$. 

Consider $s \leftarrow \Gen(1^\sec)$, $\Enc$ described in \secref{sec:lccencode} and the decoder $\Dec$ described in \secref{sec:lccdecode} above. We claim that the triple $\Pi = (\Gen, \Enc, \Dec)$ is a \CRLCC scheme. 

From the construction of $\Enc$, we know that %$\Sigma = \{ 0,1\}^{\ell(n)} = \{0,1\}^{\log^{1+\eps}(n)}$, 
the block length of a codeword produced by $\Enc$ is $n = 3k'\cdot 4\cdot \ell(\sec) = 12k $.  Therefore, the information rate of the $\CRLCC$ is $\O{1}$.

From \lemref{lem:frontback} we know that on input $(w, i, x)$ generated by any \PPT adversary such that $\HAM(\Enc(x), w) \leq \frac{\Delta_J k}{4}$, $\Dec^w$ queries at most $\O{ \ell(\sec) \log^{3+\eps}\left( n/ \ell(\sec) \right)}$ coordinates of $w$ and returns $d \in \{ \Enc(x)[i] , \bot \}$ with probability at least $1 - \negl(n)$. 
Also, $\Dec$ on input any valid encoding $(\Enc(x), i, x)$ returns $\Enc(x)[i]$ with probability $1$. 
\end{proofof}
%  !TEX root=main.tex
\section{Strong $\CRLCC$ Decoder}
\applab{app:strong:dec}

Recall that the codeword $c = \Enc(x)$ consists of 3 parts of $t$ blocks each. 
The first $k/R$ bits (first $t$ blocks) correspond to the message symbols $x$ passed $m \cdot \beta \ell(\sec)$ bits at a time through $\ECC$ of rate $R$. 
The next $k/(\beta R)$ bits (second set of $t$ blocks) correspond to $t$ codewords $\ECC(U_j)$, and finally the last $k/(\beta R)$ bits (third set of $t$ blocks) correspond to the repetitions of the final \ECC codeword, $\ECC(U_{t})$.

 \noindent {\bf Note: } For simplicity of presentation, we let $\beta=1$ and $R=1/4$ for the discussion of the decoder in this section. The proof extends similarly for other values of the parameters $\beta$ and $R$.

Observe that at the end of the construction, each meta-node contains exactly one node ({\emph i.e.} the final node) with indegree more than $2$ (it has indegree $m$) and outdegree more than $2$ (it has outdegree $\O{m}$). 

Similar to the decoder $\Dec$ of \secref{sec:lccdecode},  the current decoding procedure has separate subroutine to handle indices corresponding to the first $2t-1$ blocks and a separate decoding procedure for the remaining indices. 
In the following, we define $\metanode{i}$ to be the meta-node corresponding to index $i$. 
That is,
\[\metanode{i} =
\begin{cases}
\ceil{\frac{i}{4m \ell(\sec)}},& i \leq 4k \\
\ceil{ \frac{i - 4k}{ 4m\ell(\sec)} },& 4k < i \le 8k,\\
t, & i > 8k
\end{cases}\]

Before we describe the decoder for the Strong \CRLCC, we need a few definitions and supporting lemmas.

The following claim follows from the construction of the metagraph $G_0$ which was shown in \cite{EC:AlwBloPie18}. 
The bounded outdegree allows us to test meta-nodes with a bounded (in this case logarithmic) number of queries.
\begin{claim}
$\outdeg(G_0)=\Theta(\log k)$.
\end{claim}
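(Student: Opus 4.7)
The plan is to establish the two bounds $\outdeg(G_0) = O(\log k)$ and $\outdeg(G_0) = \Omega(\log k)$ separately, each by direct inspection of the $\delta$-local expander construction guaranteed by \thmref{thm:local-expander-exist}. The claim is essentially a parameter translation from the $t$-vertex scale (on which $G_0$ is defined) back to the original message-length scale $k$.

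For the upper bound, I would invoke \thmref{thm:local-expander-exist} directly: the construction of Alwen et al.~\cite{EC:AlwBloPie18}, following Erd\H{o}s, Graham, and Szemer\'edi~\cite{ErdosGS75}, produces a $\delta$-local expander $G_0$ on $t$ nodes with $\outdeg(G_0) = O(\log t)$. Since $t = \Theta\bigl(k/(\beta\,\ell(\sec)\log k)\bigr)$ and $\ell(\sec) = \polylog(k)$ in the parameter regime of \thmref{strongRLCC} (with $\beta$ a constant depending only on $\tau$), we get $\log t = \log k - O(\log\log k + \log\ell(\sec)) = \Theta(\log k)$, which yields $\outdeg(G_0) = O(\log k)$.

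For the matching lower bound, I would unpack the explicit construction referenced in the footnote to \thmref{thm:local-expander-exist}. The construction overlays $\Theta(\log t)$ bipartite expander graphs on the vertex set $[t]$---one for each dyadic scale $r = 2^i$, $i \in \{0,\ldots,\lfloor \log t\rfloor\}$---and each such bipartite expander contributes $\Omega(1)$ to the outdegree of every participating node. The $\Omega(1)$ per-scale contribution holds by a short counting argument: in a bipartite $\delta$-expander with $\delta$ constant, every left subset of size $\delta r$ must have more than $(1-\delta) r$ neighbors on the right, which forces its average outdegree to be at least $(1-\delta)/\delta = \Omega(1)$. Summing these per-scale contributions over the $\Theta(\log t)$ scales yields $\outdeg(G_0) = \Omega(\log t) = \Omega(\log k)$.

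The main (and only) subtlety is the verification that $\log t = \Theta(\log k)$; this is immediate once one observes that in the paper's parameter regime, $\beta\,\ell(\sec)\log k$ is only polylogarithmic in $k$, so its logarithm is $o(\log k)$. Combining the upper and lower bounds then gives $\outdeg(G_0) = \Theta(\log k)$, as claimed. I do not expect this proof to present any real obstacle: both directions follow mechanically from properties of the explicit construction invoked in the excerpt, together with the parameter choices made in \thmref{strongRLCC}.
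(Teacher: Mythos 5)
Your upper bound is fine and matches what the paper actually relies on: the footnote to \thmref{thm:local-expander-exist} observes that the Erd\H{o}s--Graham--Szemer\'edi-style construction overlays $\O{\log t}$ constant-degree bipartite expanders, so $\outdeg(G_0)=\O{\log t}$, and your translation $\log t=\Theta(\log k)$ is correct in the parameter regime of \thmref{strongRLCC}. (The paper itself offers no proof of this claim at all---it simply says it ``follows from the construction,'' and only the upper bound is actually used, since $\outdeg(G_0)$ controls the query complexity of $\GreenMeta$ and $\GreenEdge$.)

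Your lower bound, however, has a genuine gap. You assert that ``each such bipartite expander contributes $\Omega(1)$ to the outdegree of \emph{every} participating node,'' justified by a counting argument showing that a left subset of size $\delta r$ has more than $(1-\delta)r$ right-neighbors. That argument only bounds the \emph{average} outdegree over that subset; it does not prevent individual left vertices from having outdegree zero. Indeed, a bipartite graph on $[r]\times[r]$ in which vertices $2,\ldots,r$ on the left are complete to the right but vertex $1$ is isolated is still a $\delta$-expander for any $\delta$ with $\delta r\geq 2$, yet vertex $1$ contributes nothing. So the step ``$\Omega(1)$ per node per scale'' does not follow from the expansion property alone, and summing it over scales is unjustified. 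Since $\outdeg(G_0)$ is the \emph{maximum} outdegree, you only need some node to accumulate $\Omega(\log t)$ out-edges, and two correct routes exist: (a) appeal to the fact that the specific overlaid bipartite expanders in \cite{ErdosGS75,EC:AlwBloPie18} are $d$-regular for a constant $d\geq 1$, so every node does pick up $\Omega(1)$ out-edges at each of the $\Theta(\log t)$ scales it participates in as a left endpoint; or (b) use the global edge lower bound that any DAG with the depth-robustness guaranteed by a $\delta$-local expander must have $\Omega(t\log t)$ edges, so the average (hence maximum) outdegree is $\Omega(\log t)$. Your current write-up proves neither.
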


Let $c  = \left(c_1 \circ \ldots \circ c_{3t} \right)= \Enc(s,x)$. Recall that for a meta-node $u \in [t] = V(G_0)$ we have $c_u = \ECC(x_{(u-1)m +1} \circ \ldots \circ x_{um})$ and $c_{u+t} = \ECC(\ell_{u_1,s} \circ \ldots \circ \ell_{u_m,s})$. Given a (possibly) corrupted $c'$ codeword we let $\tilde{c}_i = \ECC(\ECCD(c_i'))$. 
We note that if $c_i'$ and $c_i$ are sufficiently close then we will always have $\tilde{c}_i  = c_i$. If  $\tilde{c}_i  \neq c_i$ we say that the block $i$ was tampered and if $\tilde{c}_i  = c_i$ we say that the block is {\em correct}. 
We let $T \subseteq [t-1]$ denote the set of all nodes $u < t$ (excluding the last node $t$) with the property that either blocks $u$ or ${u+t}$ were {\em tampered}. 
We also let $\left(\ell_{u_1,s}' \circ \ldots \circ \ell_{u_m,s}'\right) = \ECCD(c_{u+t})$ and  $\left(x_{(u-1)m +1} \circ \ldots \circ x_{um}\right) = \ECCD(c_u)$. 
We say that a node $u_i \in V(G)$ with $\parents(u_i) = u_i^1,\ldots u_i^d \in V(G)$ is green if $\ell_{u_i,s}' = H\left(s , x_{(u-1)m+i} \circ \ell_{u_i^1,s}' \circ \ldots \circ \ell_{u_i^d,s}'\right)$ i.e., the given label of node $u_i$ is consistent with the labeling rule given the data value $x_{(u-1)m+i}$ and the given labels of $u_i$'s parents. 
We say that an edge $(u_m,v_i) \in E(G)$ is green if both $u_m$ and $v_i$ are green. 

\begin{definition}
\deflab{def:meta:green}

We call a meta-node $u \in V(G_0)$ \emph{green} if {\em both}:
\begin{itemize}
\item
At least $\frac{2}{3}$ fraction of the corresponding nodes $u_1,\ldots,u_m \in V(G)$ are are green. 
\item
The final node $u_m$ in the meta-node is also green.
\end{itemize}
We call a meta-node \emph{red} if the meta-node is not green. 
Similarly, we call a meta-node $u$ \emph{correct} if we are able to correctly decode the blocks $c_u$ and $c_{u+t}$. 

\end{definition}
We stress that a \emph{red} meta-node is not \emph{necessarily} tampered nor is a \emph{green} meta-node \emph{necessarily} correct. 

\subsection{Query Procedure for $\metanode{i}=t$}
\seclab{sec:strongdec-1}

In the case that $\metanode{i}=t$, we run a procedure $\Dec^w_{= t}(s, i)$ that takes advantage of the fact that any adversary that can change at most $\frac{\Delta_J k}{\error}$ coordinates of any codeword obtained from $\Enc$, 
cannot prevent local correction for a query on the last $t$ blocks corresponding to the repetition of $\ECC$ codeword. 
Similar to $\Dec_{1}^w$ in \secref{sec:lccdecode}, the algorithm $\Dec^w_{= t}(s, i)$ randomly samples a number of blocks, performs decoding using $\ECCD$ and takes the majority of the samples. 
\begin{lemma}
\lemlab{lem:maj:back}
Let $\Enc$ be as described in \secref{sec:maj:lccencode}.
For any $i \in [n]$ such that $ \metanode{i} = t$, there exists an algorithm $\Dec^w_{= t}$ that does the following: 
\begin{enumerate}
\item
For any $x \in \{0,1\}^k$ and $c =\Enc(s, x)$,  $\Dec^c_{= t}(s, i) = c[i]$.
\item
For any $x \in \{0,1\}^k$, $c =\Enc(s, x)$  and $w \in \{0,1\}^n$ generated by any \PPT adversary such that $\HAM(c, w)\le\frac{\Delta_J k}{\error}$,
\[\PPr{\Dec^w_{= t}(s, i) = c[i] } \ge 1-\frac{n}{n^{\log n}}.\]
\end{enumerate}
Moreover, $\Dec^w_{= t}$ makes at most $\O{\ell(\sec) \cdot \log^{2+\eps} n}$ coordinate queries to input $w$.
\end{lemma}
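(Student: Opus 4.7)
The plan is to adapt the analysis of $\Dec_1^w$ from the Weak \CRLCC construction (see \lemref{lem:back}), since the repetition structure of the last $t$ blocks $c_{2t+1} = \ldots = c_{3t} = \ECC(U_t)$ is entirely analogous to what was exploited there. The key observation is that each of these $t$ identical blocks has length $4m\ell(\sec)$ and $\ECCD$ can correct up to a $\Delta_J$ fraction of errors per block. An adversary with corruption budget $\frac{\Delta_J k}{\kappa} = \frac{\Delta_J \cdot tm\ell(\sec)}{\kappa}$ can therefore push at most $\frac{t}{4\kappa}$ of these $t$ blocks past the decoding radius of $\ECCD$. Choosing $\kappa$ to be a sufficiently large constant (e.g.\ $\kappa \geq 4$) ensures that strictly more than half of the last $t$ blocks decode back to $U_t$.

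The algorithm $\Dec^w_{=t}(s,i)$ that I propose is the following: sample $r = \Theta(\log^{1+\eps} n)$ indices $j_1,\ldots,j_r$ uniformly at random with replacement from $\{2t+1,\ldots,3t\}$, read the full block $w_{j_\ell}$ for each sampled index (each is $4m\ell(\sec)$ bits), compute $\tilde{c}_{j_\ell} = \ECC(\ECCD(w_{j_\ell}))$, take the majority value $c_{\text{maj}}$ among these decoded blocks (treating $\bot$ outputs as distinct tokens), and output the bit of $c_{\text{maj}}$ corresponding to position $((i - 8k) \bmod 4m\ell(\sec))$. Correctness on an uncorrupted codeword ($\HAM(c,w)=0$) is immediate since every sample gives $\ECC(U_t)$ exactly, so the majority is the correct block and the returned bit matches $c[i]$.

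For the adversarial case, each individual sample decodes correctly to $\ECC(U_t)$ with probability at least $1 - \frac{1}{4\kappa}$ over the choice of the random index, by the counting argument above. A standard Chernoff bound then implies that the fraction of samples equal to $\ECC(U_t)$ exceeds $1/2$ except with probability at most $\exp\bigl(-\Omega(r)\bigr) = \exp\bigl(-\Omega(\log^{1+\eps} n)\bigr)$, which is $n^{-\omega(1)}$ and in particular bounded by $n/n^{\log n}$ for sufficiently large $n$. Whenever this high-probability event holds the majority block equals $\ECC(U_t) = c_{2t+\text{anything}}$, so the output bit equals $c[i]$.

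Finally, the query complexity is dominated by reading $r$ blocks of $4m\ell(\sec) = \O{\ell(\sec) \log n}$ bits each, for a total of $\O{\ell(\sec) \log^{2+\eps} n}$ bit queries to $w$, matching the claimed bound. The only tiny subtlety worth flagging, but not a real obstacle, is that $\ECCD$ may output $\bot$ on badly corrupted blocks; this is harmless because in the majority vote such outliers simply dilute the minority class. Thus the proof reduces to a straightforward Chernoff estimate once the block-corruption counting step is in place, and no new ideas beyond those in \lemref{lem:back} are required.
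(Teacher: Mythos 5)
Your proof is essentially identical to the paper's: you use the same sampling-and-majority-vote algorithm, the same counting argument that the adversary's budget of $\frac{\Delta_J k}{\kappa}$ bit flips can push at most $\frac{t}{4\kappa}$ of the $t$ repeated blocks beyond the $\ECCD$ decoding radius, and the same concentration bound (you invoke Chernoff explicitly; the paper gives a directly equivalent tail estimate), with the same $\O{\ell(\sec)\log^{2+\eps} n}$ query accounting via blocks of size $4m\ell(\sec)$. No meaningful deviation from the paper's approach.
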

\begin{proof}
Consider the following algorithm $\Dec^w_{= t}(s, i)$ for any $i \in [n]$ with $\metanode{i}= t$: 
\begin{mdframed}
$\Dec^w_{= t}(s, i)$: \\
\underline{Input} ~: Index $i \in [n]$ with $\metanode{i}= t$.
\begin{enumerate}
\item Sample (with replacement) $s:= \O{\log^{1+\eps} n}$ blocks $\{ \tilde{y_1}, \ldots, \tilde{y_s} \}$ each of length $4m\ell(\sec)$ from the last $t$ blocks. 
\item
Let $y_j = \ECC(\ECCD(\tilde{y_j}))$ for every $j \in [s]$ to correct each of the sampled blocks. 
\item
Output majority of $ \{ y_{j, \hat{i}} \mid j \in [s] \}$, where $\hat{i}=i \pmod{4m \ell(\sec)}$ and $y_{j, \hat{i}}$ is the $\hat{i}$-th coordinate in $y_j$.
\end{enumerate}
\end{mdframed}
First, note that for any codeword obtained from $\Enc$, $\Dec_{= t}$ will always output the correct symbol.

Let $w$ be received word which is obtained by altering at most $\frac{\Delta_J k}{\error}$ coordinates of some codeword i.e, $0< \HAM(c,w) \le\frac{\Delta_J k}{\error}$ for some codeword $c$ obtained from $\Enc$. 

We know that the last $t$  blocks of any codeword are exactly the same, \ie, $c_{2t} = c_{2t+1} = \cdots = c_{3t}$.

Since $\HAM(c,w) \le\frac{\Delta_J k}{\error}$, at most $\frac{t}{4\error}$ blocks are corrupted in more than $\Delta_J$ fraction of coordinates. One can therefore decode all blocks among the last $t$ blocks correctly except for those $\frac{t}{4\error}$ blocks with large corruption. 

Since we query blocks uniformly at random, each query finds a block with large corruption with probability at most $\frac{1}{4\error}$. Therefore, the probability that majority of the queried blocks are corrupted beyond repair is at most $\left( \frac{1}{4\error} \right)^{s/2} = \negl(n)$, for $s=\O{\log^{1+\eps} n}$.

Since $\Dec_{= t}$ queries $\O{\log^{1+\eps} n}$ blocks each of length $4m \cdot \ell(\sec)$ and  %congruence classes 
$m=\O{\log n}$, then at most $\O{\ell(\sec) \cdot \log^{2+\eps} n}$ queries are made in total.
\end{proof}

%%%%%%%%%%%%%%%%%%%%%%%%%%%%%%%%%%%%%%%%%
\subsection{Query Procedure for $\metanode{i}<t$}
\seclab{sec:strongdec-2}

Recall from the construction that $G_0$ is the metagraph of $G$. 
We define the green subgraph of $G_0$ to be the subgraph consisting of exactly the edges whose endpoints in $G$ are both green nodes. 
We show there exists an efficient procedure to check whether an edge is in the green subgraph, which we shall use to determine if nodes in $G$ have $\delta$-local expansion in the green subgraph. 
If there is local expansion around nodes $u$ and $v$ in the green subgraph, then without loss of generality, there exists a path from $u$ to $v$ that only consists of green nodes by \lemref{lem:greenconnected}.
Hence if $v$ is correct, then we show in \lemref{lem:connected:correct} that either $u$ is also correct, or a hash collision has been found. 
Nevertheless, this procedure is vacuous if there is a low number of nodes $u,v$ with local expansion in the green subgraph. 
To address this concern, we show in \lemref{lem:alpha:expansion} that for a specific range of $\alpha$, if a node $u$ is $\alpha$-good with respect to a tampered set, then there must also be $2\delta$-local expansion around $u$ in the green subgraph. 
 
In this section, we design the corrector for indices $i \in [n]$ such that $\metanode{i}<t$.

Recall that we call a meta-node green if most of its nodes including the final node are green.
First, we claim the existence of a procedure to test whether a meta-node is green. 
\begin{lemma}
\lemlab{lem:meta:green}
Let $G_0$ be the meta-graph with $t$ vertices used by $\Enc$ and let 
$w = (w_1, \cdots, w_{3t})\in \{ 0,1\}^n$ be the corrupted word obtained from the $\PPT$ adversary. %\elena{close to the code?}
There exists a procedure $\GreenMeta$ that uses $\O{\ell(\sec) \cdot \log^2 n }$ coordinate  queries to $w$ and checks whether a meta-node $u$ of $G_0$ with labels corresponding to $w$ is green. %\elena{w.p. 1?}
\end{lemma}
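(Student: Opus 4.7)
The plan is to construct $\GreenMeta(u)$ by reading exactly those blocks of $w$ that are needed to evaluate the local hash-consistency check at each of the $m$ nodes $u_1,\ldots,u_m \in V(G)$ belonging to meta-node $u$, and then combine the resulting node-level flags according to \defref{def:meta:green}.

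First I will query both blocks $w_u$ and $w_{u+t}$ in full and run $\ECCD$ on them to recover $\bigl(x_{(u-1)m+1}' \circ \ldots \circ x_{um}'\bigr)$ and $\bigl(\ell_{u_1,s}' \circ \ldots \circ \ell_{u_m,s}'\bigr)$. Next, for every meta-node $v$ lying in $\parents_{G_0}(u)$, I will query and decode block $w_{v+t}$ to recover all $m$ labels of meta-node $v$, in particular $\ell_{v_m,s}'$ --- the only label of $v$ that can be a parent in $G$ of some $u_j$. If any invocation of $\ECCD$ fails, $\GreenMeta$ immediately outputs ``red'' (on a genuine codeword produced by $\Enc$ every call succeeds, so this does not affect correctness on uncorrupted inputs).

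Once these decoded data values and labels are in hand, no further queries to $w$ are needed. For each $i \in [m]$ the parents $u_i^1, \ldots, u_i^{d_i}$ of $u_i$ in $G$ are fixed by the deterministic \ReduceDegree construction, and every such parent is either some $u_j$ with $j < m$ (label already retrieved from $w_{u+t}$) or some $v_m$ with $v \in \parents_{G_0}(u)$ (label already retrieved from $w_{v+t}$). I compute $H\bigl(s, x_{(u-1)m+i}' \circ \ell_{u_i^1,s}' \circ \ldots \circ \ell_{u_i^{d_i},s}'\bigr)$ and compare it to $\ell_{u_i,s}'$ to mark $u_i$ as green or red. In particular, checking $u_m$ uses only labels $\ell_{u_1,s}',\ldots,\ell_{u_{m-1},s}'$ and the data value $x_{um}'$, all already in hand. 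Finally $\GreenMeta$ outputs ``green'' iff $u_m$ is green and at least $2m/3$ of $u_1,\ldots,u_m$ are green, matching \defref{def:meta:green} exactly.

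For the query bound, we read $2 + |\parents_{G_0}(u)|$ blocks, each of length $\O{m \cdot \ell(\sec)}$ since $\ECC$ has constant rate. Because $|\parents_{G_0}(u)| \leq \indeg(G_0) = \O{\log t} = \O{\log n}$ and $m = \O{\log n}$, the total coordinate-query count is $\O{(\log n) \cdot m \cdot \ell(\sec)} = \O{\ell(\sec) \cdot \log^2 n}$, as claimed. The only mildly subtle point, which I see as the main obstacle if any, is recognizing that although each external edge in $G$ emanates specifically from $v_m$, isolating the single label $\ell_{v_m,s}'$ forces us to decode the whole block $w_{v+t}$ that bundles all $m$ labels of meta-node $v$ together --- precisely the cost that produces one of the two $\log n$ factors (the other coming from $|\parents_{G_0}(u)| = \O{\log n}$).
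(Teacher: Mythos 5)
Your proposal is correct and is essentially the same algorithm the paper uses: read $w_u$ and $w_{u+t}$, read $w_{v+t}$ for each meta-parent $v$ of $u$, decode everything, evaluate the local hash consistency at each $u_i$, then combine according to \defref{def:meta:green}. The only cosmetic difference is bookkeeping: the paper iterates over the $m-1$ internal nodes $u_j$ and queries the external parent block as it goes, whereas you batch all $|\parents_{G_0}(u)| = \O{\log n}$ parent blocks up front --- both give $\O{m}$ blocks of $\O{m\ell(\sec)}$ bits each, i.e.\ $\O{\ell(\sec)\log^2 n}$ coordinate queries.
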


\begin{proof}
We claim the following procedure satisfies the desired properties:
\begin{mdframed}
$\GreenMeta(u,w)$: \\ \underline{Input} ~: Meta-node index $u \in [t]$, corrupted codeword $w \in \{0,1\}^n$ 
\begin{enumerate}
\item
Check if the final node in meta-node $u$ is green:
\begin{itemize}
\item
Query coordinates of $w_{t + u}$ and retrieve $(\ell_{u_1}', \ldots, \ell_{u_{m}}') := \ECCD(w_{t + u})$. 
\item
Query the coordinates of $w_{u}$ and retrieve $(x_{u_1}', \ldots, x_{u_m}') := \ECCD(w_{u})$.
\item
Check whether $\ell_{u_m}' = H(s, x_{u_m}' \circ \ell_{u_1}' \circ \cdots \circ \ell_{u_{m-1}}' )$ for the final node $u_m$ in meta-node $u$, where $u_1,\ldots, u_{m-1}$ are the parents of $u_m$, which are all in meta-node $u$.
\item
If $\ell_{u_m}'$ is not consistent, then output `Red'.
\end{itemize}
\item
For each node $u_j$ in meta-node $u$, check whether $u_j$ is green:
\begin{itemize}
\item
Let $u_{j-1}$ (in meta-node $u$) and $p_r$ ($r$-th node in some meta-node $p$) be the parents of $u_j$.
\item
Query $w_{t + p}$ and retrieve $\ell_{p_r}'$ using $\ECCD(w_{t + p})$. 
\item
Check whether $\ell_{u_j}'=H(s, x_{u_j}' \circ \ell_{u_{j-1}}' \circ \ell_{p_r}' )$.
\end{itemize}
\item
If at least $\frac{2}{3}$ fraction of the nodes in meta-node $u$ are green, then output `Green'.
\item
Else, output `Red'.
\end{enumerate}
\end{mdframed}
Recall from \defref{def:meta:green} that a green meta-node first requires that at least $\frac 23$ of the underlying nodes in meta-node $u$ are green. 
Since a node is green if its label is consistent with the labels of its parents, then the procedure $\GreenMeta$ properly recognizes whether at least $\frac{2}{3}$ of the underlying nodes in $u$ are green after decoding using $\ECCD$.
Secondly, a green meta-node requires that the final node in $u$ is green. 
In this case, the final node $u_m$ has $m-1$ parents, and again the procedure recognizes whether $\ell_{u_m}' = H(s, x_{u_m}' \circ \ell_{u_1}' \circ \cdots \circ \ell_{u_{m-1}}' )$  after obtaining the labels using $\ECCD$.

Observe that in Step (1), we query two blocks of $w$ each of length $4m\ell(\sec) $ bits.  Also, in Step (2), for each of the $m-1$ nodes $u_j$ in the meta-node $u$ we query one additional block of $w$ corresponding to the parent meta-node of $u_j$. Since each block is of length $4m\ell(\sec)$ bits long, the total query complexity is $\O{m^2 \ell(\sec)}  = \O{\ell(\sec) \cdot \log^2 n }$ for $m=\O{\log n}$. 
\end{proof}

We now define the notion of a green edge in the meta-graph $G_0$.
 
Let $u$ and $v$ be meta-nodes which are connected in $G_0$. From the construction of the degree reduced graph $G$, we know that there exists an edge from the $m$-th node of the meta-node $u$ to some node $v_u$ of the meta-node $v$.
We say that an edge from meta-node $u$ to meta-node $v$ is green if the corresponding end points in $G$ are green, \ie, the node $u_m$ of meta-node $u$ and the node $v_u$ in meta-node $v$ are green.  
We say an edge is \emph{red} if it is not green. 
Note that it is possible for the end node $v_u$ to be red even if the meta-node $v$ is green. 

Given any $w \in \{0,1\}^n$, we now describe a procedure $\GreenEdge(u, v, w)$ which locally verifies whether a given edge $(u,v)$ of $G_0$ is green or not.  

\begin{lemma}
\lemlab{lem:edge:green}
Let $G_0$ be the meta-graph with $t$ vertices used by $\Enc$ and let 
$w = (w_1, \cdots, w_{3t})\in \{ 0,1\}^n$ be the corrupted word obtained from the $\PPT$ adversary. 
There exists a procedure $\GreenEdge$ that makes $\O{ \ell(\sec) \log(n)}$ coordinate queries to $w$ and checks whether an edge $(u,v)$ of $G_0$ with labels corresponding to $w$ is green. 
\end{lemma}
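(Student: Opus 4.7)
The plan is to give an explicit $\GreenEdge(u,v,w)$ procedure that directly checks the definition: the edge $(u,v)\in E(G_0)$ is green iff the corresponding edge $(u_m,v_j)\in E(G)$ produced by \ReduceDegree has both endpoints green in $G$. First, using the (publicly fixed) description of $G_0$ and the deterministic bookkeeping of \ReduceDegree, $\GreenEdge$ identifies the index $j = j(u,v)$ of the unique node $v_j$ in meta-node $v$ that receives the edge $(u_m,v_j)$; this requires no queries to $w$. The parents of $u_m$ in $G$ are exactly the in-meta-node chain $u_1,\ldots,u_{m-1}$, and the parents of $v_j$ (assuming $j>1$; the boundary case $j=1$ is analogous) are $v_{j-1}$ and $u_m$.

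Next, $\GreenEdge$ reads the four blocks of $w$ that contain all the data it needs: $w_u$ (which yields $x_u' = \ECCD(w_u) = (x_{u_1}',\ldots,x_{u_m}')$), $w_{u+t}$ (which yields $(\ell_{u_1,s}',\ldots,\ell_{u_m,s}')$), $w_v$ (yielding $x_v'$), and $w_{v+t}$ (yielding $(\ell_{v_1,s}',\ldots,\ell_{v_m,s}')$). Having these in hand, $\GreenEdge$ performs the two local hash checks
\[
\ell_{u_m,s}' \stackrel{?}{=} H\!\left(s,\, x_{u_m}' \circ \ell_{u_1,s}' \circ \cdots \circ \ell_{u_{m-1},s}'\right),\qquad
\ell_{v_j,s}' \stackrel{?}{=} H\!\left(s,\, x_{v_j}' \circ \ell_{v_{j-1},s}' \circ \ell_{u_m,s}'\right),
\]
and outputs ``Green'' iff both checks pass (so that both endpoints of the edge in $G$ are green in the sense of \defref{def:meta:green}), and ``Red'' otherwise. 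Correctness is immediate from the definition of a green edge.

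For the query bound, each of the four blocks $w_u,w_{u+t},w_v,w_{v+t}$ has length exactly $4m\,\ell(\sec)$ bits, and we are using the constructed graph for which $m = \O{\log t} = \O{\log n}$. Hence the total number of coordinate queries is $\O{m \cdot \ell(\sec)} = \O{\ell(\sec)\log n}$, which matches the target bound. (The decoding $\ECCD$ on each block is computation only, not additional queries.)

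I do not expect any real obstacle here; the only mild subtlety is the case analysis for $v_j$ when $j=1$ (where $v_j$'s unique in-meta-node parent is absent and its only parent is $u_m$), and the symmetric edge-case $j=m$ (where the indegree of $v_j=v_m$ is larger but that node's greenness is already handled by the ``final node'' clause of \defref{def:meta:green} via calls to $\GreenMeta$ rather than $\GreenEdge$); in all these boundary situations the same four blocks suffice, so the $\O{\ell(\sec)\log n}$ bound is preserved.
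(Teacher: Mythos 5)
Your proposal is correct and follows essentially the same approach as the paper: the paper's $\GreenEdge$ procedure also queries exactly the four blocks $w_u, w_{u+t}, w_v, w_{v+t}$, decodes them with $\ECCD$, and performs the same two local hash checks on $u_m$ and the receiving node in meta-node $v$, yielding the identical $\O{m\ell(\sec)} = \O{\ell(\sec)\log n}$ query bound. Your explicit treatment of the boundary case $j=1$ is a small clarification not spelled out in the paper but does not constitute a different route.
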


\begin{proof}
\begin{mdframed}
$\GreenEdge(u,v,w)$: \\ \underline{Input} ~: Meta graph edge $(u,v)$, corrupted codeword $w \in \{0,1\}^n$ 
\begin{enumerate}
\item Check if the final node in meta-node $u$ is green:
\begin{itemize}
\item
Query coordinates of $w_{t + u}$ and retrieve $(\ell_{u_1}', \ldots, \ell_{u_{m}}') := \ECCD(w_{t + u})$. 
\item
Query the coordinates of $w_{u}$ and retrieve $(x_{u_1}', \ldots, x_{u_m}') := \ECCD(w_{u})$.
\item
Check whether $\ell_{u_m}' = H(s, x_{u_m}' \circ \ell_{u_1}' \circ \cdots \circ \ell_{u_{m-1}}' )$ for the final node $u_m$ in meta-node $u$, where $u_1,\ldots, u_{m-1}$ are the parents of $u_m$, which are all in meta-node $u$.
\item
If $\ell_{u_m}'$ is not consistent, then output `Red' edge.
\end{itemize}
\item Check if the node $v_u$ in meta-node $v$ is green:
\begin{itemize}
\item
Query coordinates of $w_{t + v}$ and retrieve $(\ell_{v_1}', \ldots, \ell_{v_{m}}') := \ECCD(w_{t + v})$. 
\item
Query the coordinates of $w_{v}$ and retrieve $(x_{v_1}', \ldots, x_{v_m}') := \ECCD(w_{v})$.
\item
Check whether $\ell_{v_u}' = H(s, x_{v_u}' \circ \ell_{v_{u-1}}' \circ \ell_{u_m}' )$ 
\item
If $\ell_{v_u}'$ is not consistent, then output `Red' edge.
\end{itemize}
\item Output 'Green' edge.
\end{enumerate}
\end{mdframed}

The procedure $\GreenEdge$ first checks if the node $u_m$ of meta-node $u$ is green, and then checks if the connecting node $v_u$ of meta-node $v$ is green. The correctness of $\GreenEdge$ then follows from the definition. 
Note that checks (1) and (2) both query $2$ blocks of $w$, each of length $4 m \ell(\sec)$. Hence the query complexity of the $\GreenEdge$ is $\O{m \ell(\sec)} = \O{\ell(\sec) \log(n)}$, for $m = \O{\log(n)}$. 

\end{proof}

Using the definition of a green edge, we now define a green subgraph of $G_0$ and the notion of local expansion which is an essential ingredient to describe the decoder. 

\begin{definition}\deflab{def:greengraph}
The green subgraph of $G_0$, denoted by $G_g$, is the subgraph of $G_0$ that contains all green edges of $G_0$. 
\end{definition}

\begin{definition}\deflab{def:localexpander}
We say that  a DAG $G$ has $\delta$-local expansion {\em around a node} $v\in V$ if for all $r$:
\begin{enumerate}
\item
$A=[v,v+r-1]$, $B=[v+r,v+2r-1]$ and $W\subseteq A$, $X\subseteq B$ with $|W|,|X|\ge\delta r$, then there exists an edge between $W$ and $X$
\item
$C=[v,v-r+1]$, $D=[v-r,v-2r+1]$ and $Y\subseteq C$, $Z\subseteq D$ with $|Y|,|Z|\ge\delta r$, then there exists an edge between $Y$ and $Z$
\end{enumerate}
\end{definition}
Recall that we say that a DAG $G$ is a $\delta$-local expander if all nodes $v\in V$ have $\delta$-local expansion.

A key step in designing the decoder is to verify whether $G_g$ has local expansion property around some given meta-node. Before describe the verification procedure and why it is necessary, we introduce  notations and list some important properties of the meta-graph $G_0$ and its green subgraph $G_g$. 

Let $u$ be any meta-node of $G_0$.  Let $A^{u, r} := [u, u+r-1]$ and $B^{u,r} := [u+r, u+2r-1]$  and $r < \frac 12 (t- u)$, then from the definition of a $\delta$-local expander, we know that $G_0$ contains $\delta$-expander between $A^{u, r}$ and $B^{u, r}$. 
Let us denote this subgraph by $H^{u, r}$, i.e., $H^{u,r} = (A^{u, r} \cup B^{u, r}, (A^{u, r} \times B^{u, r}) \cap E)$. We assume the following property of $G_0$.  

\begin{fact}\cite{EC:AlwBloPie18}
$H^{u, r}$ has a constant indegree denoted by $d_{\delta} \in \O{1}$.
\end{fact}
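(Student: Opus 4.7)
The plan is to trace the fact back to the structure of the underlying construction of $G_0$ due to Alwen, Block, and Pietrzak~\cite{EC:AlwBloPie18}, which follows the classical overlay recipe of Erdos, Graham, and Szemeredi~\cite{ErdosGS75}. In that recipe, for each dyadic scale $r' \in \{1,2,4,\ldots,t/2\}$ and each consecutive pair of intervals $[x,x+r'-1]$, $[x+r', x+2r'-1]$, the construction places an explicit constant-degree bipartite $\delta$-expander gadget $H_{x,r'}$ between the two intervals, and $G_0$ is the edge-union of all such gadgets. The overall indegree $\O{\log t}$ of $G_0$ arises precisely because there are $\O{\log t}$ dyadic scales, each contributing only a constant to the indegree of any given node.

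The first substantive step is to invoke the classical existence of a constant-degree bipartite $\delta$-expander on any two vertex sets of equal size, with degree depending only on $\delta$: a random bipartite graph of large enough constant degree (or an explicit algebraic construction via, e.g., Ramanujan graphs) achieves the bipartite $\delta$-expansion property with some degree $d_\delta = d_\delta(\delta)$. This $d_\delta$ is precisely the constant appearing in the statement of the fact.

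Next, for query intervals $A^{u,r} = [u,u+r-1]$ and $B^{u,r} = [u+r,u+2r-1]$ with arbitrary $r$, I would argue that only $\O{1}$ of the constituent gadgets $H_{x,r'}$ of the overlay can contribute edges whose endpoints both lie in $A^{u,r} \cup B^{u,r}$ and which cross the cut between $A^{u,r}$ and $B^{u,r}$. Gadgets at scales $r' \gg r$ either miss the intervals entirely or have both endpoints falling on the same side of the cut; gadgets at scales $r' \ll r$ sit strictly inside $A^{u,r}$ or strictly inside $B^{u,r}$. Only a constant number of dyadic scales $r'$ comparable to $r$, with a constant number of aligned offsets $x$ per scale, can produce crossing edges, so summing their constant-degree contributions yields the claimed indegree bound $d_\delta$ (up to absorbing a fixed multiplicative constant that still depends only on $\delta$). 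The main obstacle I expect is exactly this bookkeeping for arbitrary (not necessarily dyadic) $r$: one must verify that a generic cut at position $u+r$ only admits $\O{1}$ scale-alignments across it. Once this case analysis is complete, no further ingredients are needed --- the constant-degree property at each individual gadget is the only nontrivial fact being invoked.
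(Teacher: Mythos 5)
Your high-level plan (trace the claim to the dyadic overlay structure of the Erd\H{o}s--Graham--Szemer\'edi / ABP construction, where $G_0$ is the edge-union of constant-degree bipartite expander gadgets at $\O{\log t}$ scales) is the right starting point, and you correctly flag that ``the bookkeeping for arbitrary $r$'' is where the proof must be won or lost. Note, though, that the paper itself offers no proof here: it simply writes ``We assume the following property of $G_0$'' and cites \cite{EC:AlwBloPie18}, so there is no internal argument to compare against.

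The problem is that the bookkeeping you sketch does not actually close, and the specific dichotomy you assert is wrong. You claim that ``gadgets at scales $r' \ll r$ sit strictly inside $A^{u,r}$ or strictly inside $B^{u,r}$.'' That is true for all \emph{but} the one or two gadgets at each scale $r'$ whose aligned block boundary lands near position $u+r$, and those are exactly the ones that matter. Concretely, consider $w = u+r$, the leftmost node of $B^{u,r}$. In the aligned-overlay construction, $w$'s incoming edges at scale $r'$ come from the length-$r'$ block immediately to the left of $w$'s own scale-$r'$ block, and that source block lies at distance at most $2r'$ from $w$. Since $A^{u,r}$ is precisely the set of nodes at distance $1$ through $r$ from $w$, the entire source block lies inside $A^{u,r}$ whenever $r' \leq r/2$. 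So \emph{every} scale $r' \in \{1,2,4,\ldots,\lfloor r/2\rfloor\}$ contributes its full constant number of incoming edges to $w$ from within $A^{u,r}$, and $w$'s indegree in $H^{u,r}$ is $\Theta(\log r)$, not $\O{1}$. Thus the step ``only a constant number of scales cross the cut'' is false when measured per-node, and the proof does not go through. (What you \emph{can} get from this counting is that the total number of edges in $H^{u,r}$ is $\O{d_\delta \cdot r}$, i.e.\ a constant \emph{average} degree, but the paper's subsequent use of the fact --- bounding the number of red edges by $3\alpha r d_\delta$ via ``each tampered node has $\leq d_\delta$ incident edges in $H^{u,r}$'' --- genuinely needs a maximum-degree bound.) If you want to rescue the argument, you would either have to work with a construction in which the fact is true by design (e.g.\ take $H^{u,r}$ to be a \emph{single} fixed constant-degree gadget rather than the full induced bipartite subgraph), or replace the constant-indegree claim by an edge-count bound and rework the downstream lemma accordingly.
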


Define $H_g^{u,r}$ to be the green subgraph of $H^{u, r}$, \ie, the subgraph of $H^{u, r}$ restricted to $G_g$.
\begin{lemma}\lemlab{lem:exp:red}
Let $n_r$ be the number of red edges in $H^{u, r}$. If $H_g^{u,r}$ is not a $4 \delta$-expander,  then $n_r \geq 3 \delta r$. 
\end{lemma}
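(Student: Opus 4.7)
The plan is to prove the contrapositive: if $H^{u,r}$ contains fewer than $3\delta r$ red edges, then $H_g^{u,r}$ must still be a $4\delta$-expander. Equivalently, I will show that if $H_g^{u,r}$ fails to be a $4\delta$-expander, then the witnessing subsets force at least $3\delta r$ edges of $H^{u,r}$ to exist between them, and all of those edges must be red.

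First I would unpack the assumption. Suppose $H_g^{u,r}$ is not a $4\delta$-expander. Then by \defref{def:localexpander} applied to the bipartite graph $H_g^{u,r}$ on $A^{u,r}$ and $B^{u,r}$ (each of size $r$), there exist subsets $W \subseteq A^{u,r}$ and $X \subseteq B^{u,r}$ with $|W|, |X| \geq 4\delta r$ such that $H_g^{u,r}$ contains no edge between $W$ and $X$. Since $H_g^{u,r}$ is obtained from $H^{u,r}$ by deleting exactly the red edges, this means that \emph{every} edge of $H^{u,r}$ with one endpoint in $W$ and the other in $X$ is red. Hence a lower bound on $|E_{H^{u,r}}(W,X)|$ translates directly into a lower bound on $n_r$.

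Next I would exploit the $\delta$-expansion of $H^{u,r}$ itself to lower-bound $|E_{H^{u,r}}(W,X)|$ by $3\delta r$. Assume for contradiction that $|E_{H^{u,r}}(W,X)| < 3\delta r$. Let $W_0 \subseteq W$ and $X_0 \subseteq X$ denote the vertices of $W$ and $X$, respectively, that are incident to at least one edge of $E_{H^{u,r}}(W,X)$. Since each edge contributes at most one vertex to each of $W_0$ and $X_0$, we have $|W_0|, |X_0| \leq |E_{H^{u,r}}(W,X)| < 3\delta r$. Consequently, $W' := W \setminus W_0$ and $X' := X \setminus X_0$ satisfy
\[
|W'| \geq |W| - |W_0| > 4\delta r - 3\delta r = \delta r, \qquad |X'| > \delta r,
\]
and by the construction of $W'$ and $X'$ there is no edge of $H^{u,r}$ between them. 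This contradicts the fact that $H^{u,r}$ is a $\delta$-expander between $A^{u,r}$ and $B^{u,r}$. Therefore $|E_{H^{u,r}}(W,X)| \geq 3\delta r$, and since every such edge is red, $n_r \geq 3\delta r$, as required.

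I do not anticipate a genuine obstacle here; the argument is a short, clean pigeonhole combined with the definition of an expander. The only minor care is in the bookkeeping of strict versus non-strict inequalities in the expander definition, which is why the gap $4\delta r - 3\delta r = \delta r$ gives strict rather than borderline slack; this directly ensures $|W'|, |X'| > \delta r$ and hence the contradiction with $\delta$-expansion of $H^{u,r}$ is unambiguous.
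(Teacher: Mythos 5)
Your proof is correct and uses essentially the same idea as the paper: apply the definition of a $4\delta$-expander to extract witnessing subsets $W$ and $X$ in which all crossing edges of $H^{u,r}$ must be red, and then invoke the $\delta$-expansion of $H^{u,r}$ to force at least $3\delta r$ such edges. The paper phrases it directly — at most $\delta r$ vertices of one side of the witnessing pair can fail to have an edge into the other side, so at least $|Y|-\delta r\ge 3\delta r$ vertices each contribute a red edge — whereas you argue by contradiction, removing the (fewer than $3\delta r$) endpoints of the red edges from both sides to manufacture a pair of subsets of size $>\delta r$ with no edge, contradicting $\delta$-expansion. These are two phrasings of the same pigeonhole; incidentally, your argument only needs to remove vertices from \emph{one} side, since deleting $W_0$ already ensures $W\setminus W_0$ has no edge into all of $X$.
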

\begin{proof}
If $H_g^{u,r}$ is not a $4 \delta$-expander, then there exist subsets $X \subseteq A^{u, r}$, $Y\subseteq B^{u,r}$ with $|X|,|Y| \ge 4\delta r $ such that there are no edges between $X$ and $Y$ in $H_g^{u,r}$. Since $H^{u,r}$ is a $\delta$-expander, at most $\delta r$ nodes $Y' \subset Y$  are not connected to any node in $X$. There are at most $|Y-Y'| \ge 3\delta r$ nodes which are all connected to $X$. So the number of edges in $H^{u,r}$ are at least $3\delta r$ all of which are red. 
\end{proof}

\begin{remark}
Though the observation states the property for descendants of a meta-node $u$, i.e., $A^{u, r} = [u, u+r-1]$ and $B^{u,r}=[u+r, u+2r-1]$, we note that it also holds for ancestors as well, \ie,  $A^{u, r} = [u-r +1, u]$ and $B^{u,r}=[u-2r-1, u-r]$. 
\end{remark}

Using the key observation described above, we now give an efficient procedure to verify if the green meta-graph $G_g$ has $4\delta$-local expansion around a given meta-node $u$. The idea is to estimate the number of red edges in increasing intervals around the given meta-node  and reject if this estimate is large for any interval.

\begin{lemma}
\lemlab{lem:green:expander}
Let $G_0$ be the meta-graph with $t$ vertices used by $\Enc$ and let 
$w = (w_1, \cdots, w_{3t})\in \{ 0,1\}^n$ be the corrupted word obtained from the $\PPT$ adversary. There exists a procedure $\DeltaExpansion$ that makes $\O{m \ell(\sec) \log^{2+\eps} t}$ coordinate queries to $w$ and rejects any meta-node $u$ around which  $G_g$ does not have $4 \delta$-local expansion with probability $1-\negl(n)$. 
\end{lemma}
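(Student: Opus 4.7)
The plan is to design $\DeltaExpansion$ as a sampling tester that, for each dyadic radius $r \in \{1, 2, 4, \ldots, t\}$ on both sides of $u$, estimates the fraction of red edges inside the bipartite $\delta$-expander $H^{u,r}$ (and its ancestor analogue) by invoking the $\GreenEdge$ subroutine from \lemref{lem:edge:green}. If for any such $r$ the empirical fraction of red edges exceeds a carefully chosen threshold, the tester rejects; otherwise it accepts.

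First I would fix the threshold $\alpha_{\text{rej}} := 3\delta/(2 d_\delta)$, with $d_\delta$ the constant indegree of $H^{u,r}$ given by the fact cited just before the lemma. The tester then iterates over the $\O{\log t}$ dyadic radii $r$ and, for each such $r$, samples (with replacement) $s = \Theta(\log^{1+\eps} t)$ edges from $H^{u,r}$ uniformly at random, calls $\GreenEdge$ on each sampled edge, and rejects if the empirical fraction of red edges is at least $\alpha_{\text{rej}}$. The same is done symmetrically for the ancestor bipartite expander with $A^{u,r}=[u-r+1,u]$ and $B^{u,r}=[u-2r+1,u-r]$. The total number of $\GreenEdge$ calls is $\O{\log^{2+\eps} t}$, and each call costs $\O{m\,\ell(\sec)}$ coordinate queries by \lemref{lem:edge:green}, yielding the claimed total of $\O{m\,\ell(\sec)\,\log^{2+\eps} t}$ queries.

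For soundness, suppose that $G_g$ does not have $4\delta$-local expansion around $u$. Then by the definition of local expansion (\defref{def:localexpander}) there is some radius $r^* \leq t$ at which the required expansion fails, and invoking \lemref{lem:exp:red} at the nearest dyadic $\hat r \in [r^*/2, r^*]$ yields at least $3\delta \hat r$ red edges inside $H^{u,\hat r}$. Since $H^{u,\hat r}$ contains at most $d_\delta \hat r$ edges in total, the true red-edge fraction at this radius is at least $3\delta/d_\delta = 2\alpha_{\text{rej}}$. A standard multiplicative Chernoff bound over the $s$ i.i.d.\ samples shows that the observed fraction drops below $\alpha_{\text{rej}}$ with probability at most $\exp(-\Omega(\log^{1+\eps} t))$, so the tester rejects at radius $\hat r$ except with negligible probability. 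A union bound over the $\O{\log t}$ dyadic radii (and the two directions) keeps the overall failure probability negligible in $n$, as required.

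The main obstacle will be choosing the threshold $\alpha_{\text{rej}}$ so that it is simultaneously (i) strictly smaller than the red-edge fraction $3\delta/d_\delta$ forced by the failure of $4\delta$-local expansion, providing the Chernoff gap needed here, and (ii) strictly larger than the red-edge fraction that arises when $u$ is $\alpha$-good with respect to $R\cup T$, which is what the companion completeness lemma (used elsewhere in $\Dec$) will need. Because the present lemma only asserts soundness against the absence of $4\delta$-local expansion, the argument here reduces to the clean Chernoff plus union bound sketched above; the tuning of $\alpha$ and $\delta$ to align with the acceptance guarantee is deferred to the accompanying lemma.
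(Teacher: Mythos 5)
Your proposal takes essentially the same route as the paper: test each dyadic radius on both sides of $u$, sample $\Theta(\log^{1+\eps}t)$ edges from the bipartite expander $H^{u,r}$ via $\GreenEdge$, apply \lemref{lem:exp:red} to lower-bound the red-edge count on failure, and close with Chernoff plus a union bound over the $\O{\log t}$ radii. The query-complexity accounting is identical.

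The one genuine difference is the rejection threshold, and here your version is actually more careful than the paper's. The paper rejects when the sampled red-edge \emph{count} exceeds $\tfrac52\delta s$, and in the analysis asserts that one "expects to see at least $3\delta s$ red edges." That step implicitly treats $H^{u,2^p}$ as having only $\approx 2^p$ edges; but the bipartite expander has indegree $d_\delta$, hence up to $d_\delta\cdot 2^p$ edges, so the true expected red \emph{fraction} is only $\geq 3\delta/d_\delta$, not $\geq 3\delta$. Your choice $\alpha_{\mathrm{rej}}=3\delta/(2d_\delta)$ normalizes by the edge count correctly, gives the Chernoff gap (true fraction $\geq 2\alpha_{\mathrm{rej}}$ vs.\ threshold $\alpha_{\mathrm{rej}}$), and also makes explicit the constraint that the companion completeness bound on $\alpha$-good nodes ($\leq 3\alpha d_\delta$ red edges per interval, as used in $\Dec_{<t}$) must sit strictly below it. So your write-up fixes a constant-normalization slip in the paper rather than diverging from it.

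One thing worth flagging, though it affects the paper equally: both proofs pass from "$G_g$ lacks $4\delta$-local expansion at some radius $r^*$" to "$4\delta$-expansion fails at a \emph{dyadic} radius $2^p$," and apply \lemref{lem:exp:red} there. \lemref{lem:exp:red} is stated for a fixed $r$, and failure at $r^*$ does not immediately imply failure at the nearest power of two, since $A^{u,\hat r},B^{u,\hat r}$ are different intervals than $A^{u,r^*},B^{u,r^*}$. Your phrase "invoking \lemref{lem:exp:red} at the nearest dyadic $\hat r$" inherits exactly this gloss. The intended fix (restricting attention to dyadic radii when defining what the tester certifies, which is why the slack from $2\delta$ to $4\delta$ appears) should be stated explicitly, but since you are reproducing the paper's reasoning faithfully here this is not a flaw unique to your proposal.
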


\begin{proof}
Consider the following algorithm $\DeltaExpansion$: 
\begin{mdframed}
$\DeltaExpansion(u,w, \delta)$: \\ \underline{Input} ~: Green meta node $u$, corrupted codeword $w \in \{0,1\}^n$
\begin{enumerate}

\item For every $p \in \{1, \ldots, \log t\}$: 
	\begin{itemize}
	\item  Consider a bipartite expander $H^{u, 2^p}$, with node sets $A_p = [u, u+2^p-1]$ and $B_p=[u+2^p,u+2^{p+1}-1]$.
	\item Randomly sample $ s:= \log^{1+\eps} t$ edges from the subgraph with replacement.
	\item Count the number of red edges in the sample. 
	\item If the number of red edges in the sample is larger than $\frac52 \delta s$, then 'Reject'.
	\item Consider a bipartite expander $H^{' u, 2^p}$ with node sets $A'_p = [u-2^p+1, u]$ and $B'_p=[u-2^{p+1} + 1,u-2^{p}]$.
	\item Randomly sample $ s:= \log^{1+\eps} t$ edges from the subgraph with replacement.	\item Count the number of red  edges in the sample. 
	\item If the number of red edges in the sample is larger than $\frac52 \delta s$, then 'Reject'.
	\end{itemize}
\item Accept otherwise. 
\end{enumerate}
\end{mdframed}
Note that $\DeltaExpansion$ calls the procedure $\GreenEdge$ for every sampled edge to check whether it is red or green. 
Each call to $\GreenEdge$ requires $\O{m \ell(\sec)}$ coordinate queries to $w$. 
The total number of calls to $\GreenEdge$ is at most $2 \log^{2+\eps} t$ since we sample at most $\log^{1+\eps} t$ distinct edges from each of the $\log t$ subgraph $H^{u, 2^p}$. 
Therefore, the total query complexity of $\DeltaExpansion$ is upper bounded by $\O{m \ell(\sec) } \log^{2+\eps} t$. 

Suppose $u$ is a meta-node around which $G_g$ does not have $4\delta$-local expansion. 
Then there exists a fixed node set $[u, u+2^p-1]$ and $[u+2^p, u+2^{p+1}-1]$ whose corresponding subgraph has at least $3\delta2^p$ red edges, by \lemref{lem:exp:red}.  
Therefore on sampling uniformly at random with replacement, we expect to see at least $3\delta s$ red edges. 
So the probability that we see at most $\frac{5}{2}\delta s$ red edges is at most $\exp(-\O{\log^{1+\eps} t})$, from standard Chernoff bounds.

Note that we accept $u$ if we see at most $\frac{5}{2}\delta s$ red edges in all the $2\log t$ intervals. 
By taking union bound the probability that we accept a meta-node around which $G_g$ does not have $4\delta$-expansion is at most $\frac{2\log t}{\exp(-\O{\log^{1+\eps} t})}$.
\end{proof}

Now, we list some key properties of the meta-nodes around which $G_g$ has local expansion. 
These properties are important to understand why we needed the local expansion verification procedure. 
Essentially we show that if any meta-node is $\alpha$-good under the set of tampered meta-nodes, then $G_g$ has $2\delta$-local expansion about it (for appropriately chosen value of $\alpha$). 
Then we show that any two meta-nodes around which $G_g$ has local expansion are connected by a path of green nodes in $G$. 
Now similar to the weak \CRLCC decoder, we can argue about the correctness of any ancestor of a correct node in a green path (unless the adversary finds a hash collision). 

Recall that we call a meta-node $u$ to be $\alpha$-good under the $T$ if every interval for every interval $[u, u+r-1 ] \cap T \le \alpha r$.   We see that if a meta-node $u$ is $\alpha$-good under $T$, then $G_g$ has $2\delta$-local expansion around $u$.

\begin{lemma}\lemlab{lem:alpha:expansion}
Let $u$ be a meta-node that is $\alpha$-good with respect to $T$. 
Then there is $2\delta$-local expansion around $u$ in $G_g$ for any $\alpha<\frac{\delta}{2}$.
\end{lemma}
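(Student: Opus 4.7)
The plan is to reduce the claim to $\delta$-local expansion of $G_0$ around $u$ via a clean structural observation about the green subgraph. Specifically, I will first argue that $G_g$ contains every edge $(u',v') \in E(G_0)$ whose two endpoints both lie \emph{outside} the tampered set $T$. This follows directly from the degree-reduction construction: if $u',v' \notin T$ then the blocks $c_{u'}, c_{u'+t}, c_{v'}, c_{v'+t}$ all decode to their true contents, so the recovered values $x_{u_i'}', \ell_{u_i',s}'$ and $x_{v_j'}', \ell_{v_j',s}'$ agree with the true labels and data. The final node $u_m'$ has all its parents inside meta-node $u'$, so its hash check is automatically satisfied; and the endpoint $v_{u'}' \in V(G)$ of the edge in $G$ corresponding to $(u',v') \in E(G_0)$ has parents $v_{u'-1}'$ (inside $v'$) and $u_m'$ (just shown green), whose labels are all correct, so its hash check is likewise satisfied by definition of the labeling. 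Hence $u_m'$ and $v_{u'}'$ are both green nodes of $G$, which by definition places $(u',v')$ in $G_g$.

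With this in hand, it suffices to establish $2\delta$-local expansion around $u$ in the subgraph $G_0 - T$, which is contained in $G_g$. Fix any radius $r > 0$, let $A = [u, u+r-1]$ and $B = [u+r, u+2r-1]$, and take arbitrary $W \subseteq A$, $X \subseteq B$ with $|W|, |X| \ge 2\delta r$. Since $u$ is $\alpha$-good with respect to $T$ and $\alpha < \delta/2$, we have $|T \cap A| \le \alpha r < \delta r/2$, so
\[
|W \setminus T| \;\ge\; |W| - |T \cap A| \;>\; 2\delta r - \tfrac{\delta r}{2} \;>\; \delta r,
\]
and analogously $|X \setminus T| > \delta r$. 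Because $G_0$ has $\delta$-local expansion around $u$, the expander between $A$ and $B$ must contain an edge between $W \setminus T$ and $X \setminus T$. By the structural observation above, this edge lies in $G_g$. The symmetric argument on the intervals $[u-r+1,u]$ and $[u-2r+1,u-r]$ handles the ancestor side, completing the verification that $G_g$ has $2\delta$-local expansion around $u$.

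The argument is mostly bookkeeping; the only step requiring real care is the structural observation in the first paragraph, since one has to trace through exactly which hash consistencies are forced by $u',v' \notin T$ (in particular, that both $u_m'$ and the specific landing node $v_{u'}'$ become green, not merely that some nodes in those meta-nodes are green). Once that is verified, the rest is just the estimate $\alpha < \delta/2 \Rightarrow 2\delta r - \alpha r > \delta r$ combined with the $\delta$-local expansion of $G_0$.
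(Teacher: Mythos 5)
Your proof is correct and follows the same overall route as the paper's, but you make explicit a step that the paper's own proof silently skips. The paper's proof picks $X\subseteq A$, $Y\subseteq B$ with $|X|,|Y|\ge 2\delta r$, peels off $T$ to get $|X - T|,|Y - T|\ge(2\delta - 2\alpha)r > \delta r$, invokes $\delta$-expansion of $G_0$ to produce an edge between $X-T$ and $Y-T$, and then concludes ``therefore, there is local expansion around $u$'' --- without ever arguing that this edge is \emph{green}, i.e., that it survives into $G_g$. That inference is exactly the content of your first-paragraph structural observation: an edge $(u',v')\in E(G_0)$ with $u',v'\notin T$ lands on a node $v_{u'}'$ of $G$ all of whose parent labels and own data/label decode to the true values (because the blocks $c_{u'},c_{u'+t},c_{v'},c_{v'+t}$ are untampered), and likewise the source node $u_m'$ has all parents internal to $u'$; so both endpoints pass their local hash check and the edge is green. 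This fact is asserted informally in the overview (``$G_g$ contains the graph $G_t - (R\cup T)$'') but never proved, and your reconstruction of it is what the paper's proof is implicitly relying on. The remaining arithmetic matches: you use $|T\cap A|\le \alpha r$ (the tight bound coming from $|A|=r$), while the paper uses the looser $|T\cap X|\le 2\alpha r$; both yield the needed $>\delta r$ bound when $\alpha < \delta/2$. So this is the same argument with the one genuinely nontrivial step filled in rather than elided.

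One small caveat worth recording: your structural observation identifies the parents of the landing node $v_{u'}'$ as $v_{u'-1}'$ and $u_m'$. If the landing node is $v_1'$ (which has no chain predecessor, and which the \ReduceDegree rule ``first node with indegree at most $1$'' could in principle assign \emph{two} external parents), the parent set is different, and greenness of $v_1'$ could then depend on a third meta-node $p$ not controlled by $u',v'\notin T$. The paper's $\GreenMeta$ and $\GreenEdge$ procedures are written as though every $v_j$ with $j<m$ has exactly one external parent, so the intended construction seems to rule this out, but the written description of \ReduceDegree is ambiguous on this point. You flagged that the structural observation is the step ``requiring real care,'' and this is precisely the corner that needs pinning down; it is, however, a wrinkle shared with the paper's own account rather than an error you introduced.
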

\begin{proof}
Given any $r>0$, consider the intervals $A=[u, u+r]$ and $B=[u+r+1, u+2r]$. 
Recall that $G_0$ is a $\delta$-local expander.
Let $X\subseteq A$ and $Y\subseteq B$ be given sets with size $2\delta r$ each. 
Since $u$ is $\alpha$-good with respect to $T$, then $|T\cap X|\le 2\alpha r$ and $|T\cap Y|\le 2\alpha r$. 
Thus, $|X-T|\ge(2\delta-2\alpha)r$ and similarly, $|Y-T|\ge(2\delta-2\alpha)r$. 
Setting $\alpha<\frac{\delta}{2}$ shows that both $X-T$ and $Y-T$ contains at least $\delta r$ nodes, and so there exists an edge between $X-T$ and $Y-T$. 
Therefore, there is local expansion around $u$. 
\end{proof}

Moreover, we see that the number of red edges in the $2\delta$-expander is at most $3\alpha r d_{\delta}$. 
\begin{lemma}
If $u$ is a meta-node that is $\alpha$-good under $T$, then for any $r>0$ the number of red edges in between the intervals $[u, u+r-1]$ and $[u+r, u+2r-1]$ is at most $3\alpha r d_{\delta}$.
\end{lemma}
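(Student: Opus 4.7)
The plan is to show that any red edge of $H^{u,r}$ must have at least one endpoint meta-node in $T$, and then to count such edges using the $\alpha$-goodness of $u$ together with the bounded-degree structure of $H^{u,r}$.

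Let $A=[u,u+r-1]$ and $B=[u+r,u+2r-1]$. Every edge $(x,y)\in E(H^{u,r})$ with $x\in A$, $y\in B$ corresponds to a unique edge $(x_m, y_x)\in E(G)$ produced by \ReduceDegree, where $y_x$ denotes the node of meta-node $y$ to which $x_m$ is attached. By definition, the meta-edge $(x,y)$ is red iff $x_m$ or $y_x$ is a red node in $G$. The key structural claim is that $x,y\notin T$ forces both $x_m$ and $y_x$ to be green. For $x_m$: if $x\notin T$ then $\ECC(\ECCD(c_x'))=c_x$ and $\ECC(\ECCD(c_{x+t}'))=c_{x+t}$, so every label $\ell_{x_i,s}'$ and data value $x_{x_i}'$ inside meta-node $x$ equals its original value; since the parents of $x_m$ in $G$ lie entirely within meta-node $x$, the hash relation defining greenness is satisfied and $x_m$ is green. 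For $y_x$: its parents in $G$ are the chain predecessor inside meta-node $y$ together with the external parent $x_m$ inside meta-node $x$, so when both $x,y\notin T$ all these parent labels and the local data $x_{y_x}'$ agree with the originals, and the hash relation for $y_x$ holds.

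Taking the contrapositive, every red edge $(x,y)$ of $H^{u,r}$ satisfies $x\in T$ or $y\in T$. The $\alpha$-goodness of $u$ at radius $r$ gives $|T\cap A|\le \alpha r$, while applying it at radius $2r$ on the right gives $|T\cap(A\cup B)|\le 2\alpha r$, and hence $|T\cap B|\le 2\alpha r$. Since $H^{u,r}$ is the bounded-degree bipartite $\delta$-expander built into $G_0$, every vertex of $H^{u,r}$ is incident to at most $d_\delta$ edges, so charging each red edge to a $T$-endpoint yields
\[
\#\{\text{red edges}\}\;\le\;|T\cap A|\cdot d_\delta + |T\cap B|\cdot d_\delta\;\le\;(\alpha r + 2\alpha r)\, d_\delta \;=\; 3\alpha r\, d_\delta,
\]
which is the desired bound.

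The only subtle point is the untampered-implies-green step, in particular the mild boundary case where $y_x = y_1$ may receive a second external parent $x'_m$ coming from a different meta-edge $(x',y)$, in which case one also needs $x'\notin T$ to guarantee greenness of $y_x$. This can be handled either by a minor adjustment to \ReduceDegree that restricts $y_1$ to a single external parent, or by absorbing the constantly many additional red edges into the bound without affecting the leading behavior $3\alpha r\, d_\delta$. Apart from this, the argument is just the $\alpha$-good count combined with the uniform degree bound $d_\delta$ on $H^{u,r}$.
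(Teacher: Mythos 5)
Your argument follows the same high-level route as the paper's (very terse) proof: bound $|T\cap A|\le\alpha r$ and $|T\cap B|\le 2\alpha r$ from the $\alpha$-goodness of $u$ at radii $r$ and $2r$, then multiply by the degree bound $d_\delta$ on $H^{u,r}$. However, you supply the key step the paper leaves implicit, namely that a red meta-edge must be incident to a tampered meta-node, and you correctly flag a genuine corner case that the paper's one-line proof does not address. With \ReduceDegree as written, the node $y_1$ starts with indegree $0$ and will absorb \emph{two} external parents (whenever $\indeg_{G_0}(y)\ge 2$, which is the typical case), so an edge $(x,y)$ landing on $y_1$ can be red because the \emph{other} external parent $x'\ne x$ is tampered, even though $x,y\notin T$; moreover $x'$ need not lie in $A\cup B$, so such an edge has no tampered endpoint to charge. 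Your first fix is the right one and is clean: restricting $y_1$ to a single external parent still works because the meta-node supplies $m-1$ slots $y_1,\dots,y_{m-1}$ for at most $\indeg_{G_0}(y)\le m-1$ external edges. Your second fix is less convincing as stated — the number of such extra red edges is not obviously ``constantly many'' (each tampered $x'$ can cause up to one extra red edge per child $y$ it reaches), so one would need a more careful charging argument (or a slightly larger constant in the lemma statement) rather than just absorbing a constant. Overall this is a more careful and more honest proof than the paper gives, matching its strategy but filling in a real gap.
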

\begin{proof}
Since $u$ is $\alpha$-good under $T$, the intervals $A:=[u, u+r-1]$ and $B:=[u+r, u+2r-1]$ contain at most $\alpha r$ and $2 \alpha r$ tampered nodes respectively. Since we assumed that the $\delta$-expander contained in $A$ and $B$ has constant degree $d_{\delta}$, the number of red edges is upper bounded by $3\alpha r d_{\delta}$.
\end{proof}

So in order to ensure that this number is not too large (for \DeltaExpansion to succeed), we choose $\alpha$ such that $3\alpha r d_{\delta} \ll  3\delta r$. This guarantees that the procedure \DeltaExpansion definitely accepts nodes that are $\alpha$-good under $T$. 

We now show that any two meta-nodes with local expansion property are connected by a path of green edges in $G_g$. This also ensures that the corresponding nodes are connected by a path of green nodes in the underlying graph $G$.

\begin{lemma}\lemlab{lem:greenconnected}
If $G_g$ has $4\delta$-local expansion around the meta nodes $u$ and $v$, then there exists a path from $u$ to $v$ in $G_g$.
\end{lemma}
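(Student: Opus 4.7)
Without loss of generality, assume $u<v$. The plan is to adapt the argument underlying \lemref{lem:cgood:connected}: iteratively inflate the set of descendants of $u$ in $G_g$ using forward $4\delta$-local expansion around $u$, and symmetrically inflate the set of ancestors of $v$ using backward $4\delta$-local expansion around $v$, until the two sets are forced to overlap by a pigeonhole count. Any node in their intersection lies on an all-green path $u\to w\to v$ in $G_g$, as required.

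For the density step, let $D_r\subseteq[u,u+r-1]$ denote the descendants of $u$ in $G_g$ within that interval, restricted to $r$ a power of $2$. I claim $|D_r|\geq(1-4\delta)r$ by induction. The base case $r=1$ is immediate since $u\in D_1$. For the inductive step, suppose $|D_r|\geq(1-4\delta)r$ and set $Y=[u+r,u+2r-1]\setminus D_{2r}$, the non-descendants in the right half. If $|Y|>4\delta r$, then $|D_r|>4\delta r$ as well (assuming $\delta<1/8$), so the forward $4\delta$-local expansion of $G_g$ at $u$ with radius $r$ supplies an edge $(x,y)\in E(G_g)$ with $x\in D_r$ and $y\in Y$; but then $y$ is a descendant of $u$ via $x$, contradicting $y\in Y$. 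Hence $|Y|\leq 4\delta r$ and $|D_{2r}|\geq(1-4\delta)r+(r-4\delta r)=(1-4\delta)\cdot 2r$. Applying the symmetric argument with backward expansion at $v$ yields $|A_r|\geq(1-4\delta)r$, where $A_r\subseteq[v-r+1,v]$ denotes ancestors of $v$ in $G_g$.

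To conclude, choose $r^*$ to be the smallest power of $2$ with $r^*\geq v-u+1$, so that $r^*<2(v-u+1)$. Both intervals $[u,u+r^*-1]$ and $[v-r^*+1,v]$ contain $[u,v]$, so their union has size $2r^*-(v-u+1)$; inclusion--exclusion then gives
\[|D_{r^*}\cap A_{r^*}|\geq 2(1-4\delta)r^*-\bigl(2r^*-(v-u+1)\bigr)=(v-u+1)-8\delta r^*\geq(1-16\delta)(v-u+1),\]
which is strictly positive whenever $\delta<1/16$. Any $w\in D_{r^*}\cap A_{r^*}$ is simultaneously a descendant of $u$ and an ancestor of $v$ in $G_g$, furnishing the desired path.

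The principal obstacle I anticipate is handling boundary effects: if $u$ is near $1$ or $v$ is near $t$, the inductively-constructed intervals $[u,u+r^*-1]$ or $[v-r^*+1,v]$ can extend outside $[1,t]$, so the expansion hypothesis must only be invoked within the valid range of indices. Since the argument uses only forward expansion at $u$ and only backward expansion at $v$, this can be addressed by truncating the intervals to $[1,t]$ and re-running inclusion--exclusion over the truncated union; the density bounds degrade by at most a constant factor, which can be absorbed by tightening the quantitative threshold on $\delta$. A secondary nuisance is that the induction directly produces $|D_r|$ bounds only for $r$ a power of $2$; this is resolved cleanly above by rounding $v-u+1$ up to $r^*$, at the cost of only a factor of $2$ in the counting.
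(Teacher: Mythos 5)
Your approach is genuinely different from the paper's, and the high-level idea is sound, but there is a real gap in the boundary handling that your proposed fix does not resolve. The paper proves the same density fact (its Claim~\ref{claim:expander:reachable} is your inductive step, with the constant $\frac34$ in place of $1-4\delta$), but it then combines the two halves differently. It sets $r$ so that $v=u+2r-1$, splits $[u,v]$ into the disjoint halves $A=[u,u+r-1]$ and $B=[u+r,v]$, shows that $u$ has at least $4\delta r$ descendants in $A$ and that $v$ has at least $4\delta r$ ancestors in $B$, and then applies the radius-$r$ expansion around $u$ \emph{between} $A$ and $B$ once more to produce a single green edge from a descendant of $u$ to an ancestor of $v$. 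Because $r\le(v-u+1)/2$, every interval that appears stays inside $[u,v]\subseteq[1,t]$, so boundaries never arise. Your argument instead inflates $D_r$ and $A_r$ to a radius $r^*\ge v-u+1$ so that the two intervals both cover $[u,v]$, and then extracts a common node by inclusion--exclusion; this works cleanly when the intervals fit, but precisely because $r^*\ge v-u+1$ they routinely do not.

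The gap is concrete: when $r^*>t-u+1$ (for instance $t=100$, $u=1$, $v=70$, so $r^*=128$), the last inductive doubling needs a $4\delta$-expander between $[u,u+r^*/2-1]$ and $[u+r^*/2,u+r^*-1]$, but the second interval leaves $[1,t]$, so no expansion guarantee is available. Truncating does not repair the count. If you stop the induction at the largest power of two $r'$ with $u+r'-1\le t$, you get $|D_{r'}|\ge(1-4\delta)r'$, and similarly $|A_{r''}|\ge(1-4\delta)r''$ for the backward side; but now $r'$ (or $r''$) can be as small as $r^*/2$, which can be \emph{exactly} $(v-u+1)/2$. Taking $r'=r''=(v-u+1)/2$, the union of the two intervals is $[u,v]$ of size $v-u+1$, and inclusion--exclusion yields
\[
|D_{r'}\cap A_{r'}|\;\ge\;2(1-4\delta)r'-(v-u+1)\;=\;-4\delta(v-u+1)\;<\;0,
\]
which is vacuous for every $\delta>0$. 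In other words, the loss from truncation is not a fixed constant factor that a smaller $\delta$ can absorb: once $r'$ drops to $(v-u+1)/2$, the intersection bound collapses regardless of $\delta$. To rescue the inclusion--exclusion route you would essentially need to restore $r$ close to $v-u+1$ inside $[1,t]$, and at that point you are back to the paper's device of splitting at the midpoint and using one more expansion step to bridge $A$ and $B$ --- which I would recommend adopting directly.
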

\begin{proof}
Let $r$ be such that $v = u + 2r-1$. Consider the $4\delta$ expander between $A:=[u, u+r-1] $ and $B:=[u+r, u+2r-1]$. 
We show using Claim~\ref{claim:expander:reachable} that $u$ is connected to at least $4\delta r$ meta-nodes in $A$, and similarly $v$ is reachable from at least $4\delta r$ meta-nodes in $B$. 
Now since $G_g$ has $4\delta$ expansion around $u$, there exists an edge between any two large enough subset of nodes of $A$ and $B$. 
Hence, the set of nodes reachable to $u$ in $A$ are connected to the set of nodes which are reachable from $v$ in $B$. 

It now remains to show that $u$ and $v$ are connected to at least $4 \delta r$ meta-nodes in $A$ and $B$ respectively. 
Let $i$ be such that $2^i \leq r < 2^{i+1}$. 
From Claim~\ref{claim:expander:reachable}, we get that $u$ is connected to at least $\frac 34 \cdot 2^i = \frac 38 \cdot 2^{i+1} \geq 4\delta r$ (for $\delta < \frac{1}{16}$) meta-nodes in the interval $[u, u+r-1]$. 
Using a similar argument we conclude that $v$ is also connected to at least $4 \delta r$ nodes in $[v-r+1, v] = [u+r, u+2r-1]$.

We now prove Claim~\ref{claim:expander:reachable}.
\begin{claim}\label{claim:expander:reachable}
For any $i > 0$, the number of meta-nodes in the interval $[u, u+2^i-1]$ reachable from $u$ is at least $\frac 34 \cdot 2^i$. 
Similarly, the number of meta-nodes in the interval $[u-2^i+1, u]$ reachable from $u$ is at least $\frac 34 \cdot 2^i$
\end{claim}
\begin{proof}
We prove this claim by induction on $i$. Let $R_i(u)$ denote the set of meta-nodes in $[u, u+2^{i}-1]$ reachable from $u$. 
We have to show that $| R_i(u) | \ge \frac 34 \cdot 2^i$. 

In the base case, for $i=0$, we know from $4\delta$-expansion of $[u, u+1]$ around $u$ that there is at least one edge between $u$ and $u+1$ in $G_g$. Hence, $| R_0(u) | \ge 1$. 

Let us assume the induction hypothesis for all $i \le i_0$. To prove the induction step for $i=i_0+1$, consider the intervals $A:=[u, u+2^{i_0} -1]$ and $B:=[u+2^{i_0}, u+2^{i_0+1}-1]$. 
Let $NR$ denote the set of non-reachable meta-nodes in $B$ not reachable from $A$. 
We know that $| NR | < 4\delta \cdot 2^{i_0}$. 
This follows from the fact that $G_g$ has $4\delta$-local expansion around $u$, and hence if $|NR| \ge 4\delta \cdot 2^{i_0}$, then there would be a an edge between $R_{i_0}(u)$ and $NR$ which would contradict the definition of $NR$. 
Therefore, 
\begin{align*}
R_{i_0+1}(u) &\geq R_{i_0}(u) + 2^{i_0} - | NR | \\
&\geq \frac34 \cdot 2^{i_0} + 2^{i_0} - 4\delta \cdot 2^{i_0} ~~\mbox{ (from IH) } \\
&\geq \frac34 \cdot 2^{i_0 +1} ~~\mbox{ (for $\delta < \frac{1}{16}$)}.
\end{align*}
\end{proof}
\end{proof}

\begin{lemma}\lemlab{lem:connected:correct}
Suppose there exists a path from meta-node $u$ to $v$ in the green subgraph $G_g$. If $v$ is untampered (correct), then either  $u$ is also untampered or the adversary has found a hash collision. 
\end{lemma}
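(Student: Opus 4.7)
My plan is to prove the lemma by induction on the length $L$ of the green path from $u$ to $v$ in $G_g$. The base case $L = 0$ is the hypothesis that $v$ is untampered, and it suffices to handle the single-edge case: whenever $(u, z) \in E(G_g)$ and $z$ is already known to be untampered, either $u$ is also untampered or the \PPT adversary has produced a hash collision. Iterating this single-edge claim in reverse along the green path then yields the full lemma.

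The core argument is a two-step collision-resistance chase across the boundary between the meta-nodes $u$ and $z$. Let $z_j \in V(G)$ be the entry node of $z$ to which $u_m$ is wired by \ReduceDegree, so $(u_m, z_j) \in E(G)$. Because $(u,z) \in E(G_g)$ is a green edge, both $u_m$ and $z_j$ are green nodes of $G$. Greenness of $z_j$ gives
\[
\ell_{z_j,s}' = H\bigl(s,\; x_{z_j}' \circ \ell_{z_{j-1},s}' \circ \ell_{u_m,s}'\bigr).
\]
Since $z$ is untampered, the values $\ell_{z_j,s}'$, $x_{z_j}'$ and $\ell_{z_{j-1},s}'$ all coincide with their true counterparts. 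Comparing with the true labeling equation $\ell_{z_j,s} = H(s,\, x_{z_j} \circ \ell_{z_{j-1},s} \circ \ell_{u_m,s})$, collision resistance of $H$ forces $\ell_{u_m,s}' = \ell_{u_m,s}$. Next, the greenness of $u_m$ (whose parents in $G$ are exactly $u_1,\ldots,u_{m-1}$) yields
\[
H\bigl(s,\; x_{u_m}' \circ \ell_{u_1,s}' \circ \ldots \circ \ell_{u_{m-1},s}'\bigr) \;=\; H\bigl(s,\; x_{u_m} \circ \ell_{u_1,s} \circ \ldots \circ \ell_{u_{m-1},s}\bigr),
\]
and a second application of collision resistance forces $x_{u_m}' = x_{u_m}$ together with $\ell_{u_i,s}' = \ell_{u_i,s}$ for every $i \le m$. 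In particular, every label of the meta-node $u$ is correctly recovered, so $c_{u+t} = \ECC(\ECCD(c_{u+t}'))$.

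The main obstacle will be closing the remaining gap on the data block, i.e., establishing $c_u = \ECC(\ECCD(c_u'))$. The chase above already pins down $x_{u_m}' = x_{u_m}$, so it remains to show $x_{u_j}' = x_{u_j}$ for each interior $j < m$. The natural route is a third hash-collision step at each node $u_j$: its local equation $\ell_{u_j,s} = H(s,\, x_{u_j} \circ \ell_{u_{j-1},s} \circ \ell_{p_r,s})$, combined with the already-established correctness of the two in-meta-node labels and of the external parent label $\ell_{p_r,s}$ (correct by the inductive hypothesis applied to the preceding meta-node on the path), forces $x_{u_j}' = x_{u_j}$ whenever $u_j$ is green. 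The delicate point is that $u$ need not itself be a green meta-node, so some interior $u_j$ with $j < m$ could be red; in that case I would argue that any $\ECCD(c_u')$ disagreeing with $(x_{u_1} \circ \ldots \circ x_{u_m})$ in position $j$ while agreeing in position $m$ would either contradict the label equalities obtained above (upon re-hashing through $u_m$) or else exceed the adversary's tampering budget on the block $c_u$. This last step is where I expect the most delicate combinatorial accounting to lie, and it is the principal obstacle to turning the clean two-step chase into a full proof.
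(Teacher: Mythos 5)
Your two-step collision-resistance chase is exactly the argument the paper intends, and it is correct as far as it goes: greenness of the receiving node $z_j$ plus correctness of $z$'s labels pins down $\ell_{u_m,s}'$, and greenness of $u_m$ (whose parents are exactly $u_1,\ldots,u_{m-1}$) then pins down $x_{u_m}'$ and every $\ell_{u_i,s}'$ for $i<m$. This shows $c_{u+t}$ is untampered. The paper's proof arrives at the same place by constructing an all-green path in $G$ and invoking \lemref{lem:cgood}, so the two approaches are essentially the same.

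The residual gap you flag --- that $x_{u_j}'=x_{u_j}$ for interior $j<m$ is not established --- is genuine, and it is present in the paper's own proof as well: the claim that the green path in $G$ starts ``from any node in the meta-node $u$'' is unjustified, since only $u_m$ (and the entry nodes of intermediate meta-nodes) are known to be green, not the interior $u_j$. Your suggested third collision step at $u_j$ requires both that $u_j$ be green and that $u_j$'s external parent have a certified label; neither follows from the hypotheses, and that parent need not lie on the green path at all. Indeed, the conclusion about $c_u$ cannot be derived from the stated hypotheses. Consider an adversary that modifies only the block $c_u$, replacing it by the valid $\ECC$-codeword $\ECC(\hat{T}_u)$ where $\hat{T}_u$ agrees with $T_u$ everywhere except in one interior coordinate $j<m$. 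Then all labels decode correctly and $x_{u_m}'=x_{u_m}$, so $u_m$ remains green and the outgoing green edge of $u$ survives; a downstream untampered $v$ exists and no hash collision has been produced; yet $\ECCD(c_u')\neq T_u$, so $u$ is tampered. The cost is a single $\ECC$ block of $\Theta(m\ell(\sec))$ bits, a vanishing fraction of $n$. To make the lemma true one must either add the hypothesis that every node of $u$ is green (and certify the external parents), or weaken the conclusion to assert only that $c_{u+t}$ is untampered; the decoder in \appref{app:strong:dec} would then need a corresponding fix for queries landing in a data block.
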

\begin{proof}
The proof of this lemma follows from the proof of \lemref{lem:cgood} once we show that there exists a path in $G$ with all green nodes from any node in the meta-node $u$ to some node in the meta-node $v$. 

Consider the path from meta-nodes $u$ to $v$ in $G_g$. 
Any two adjacent edges $(p, q)$ and $(q,r)$ in $(u,v)$ path in $G_g$ corresponds to the edge from the last node $p_m$ of $p$  to some node $q_i$ $(i < m)$ of meta-node $q$, and an edge from the last node $q_m$ of $q$  to some node $r_j$ $(j < m)$ of meta-node $r$. 
Since these edges are green, we know that the nodes $p_i, p_m$ and $q_j$ are green. 
Now, from the construction of the graph $G$, we conclude that there is an edge from $p_i$ to $p_m$. 
Therefore, there is a path in $G$ from any node in the meta-node $u$ to some node in the meta-node $v$ with all green nodes. 
\end{proof}

Equipped with all the necessary procedures, we now present the decoder for the Strong \CRLCC for any coordinate query $i \in [n]$ such that $\metanode{i} < t$.  

\begin{mdframed}
$\Dec_{< t}^w$: \underline{Input} ~: Index $i \in [n]$ with $u:= \metanode{i}< t$ of $G$, $\frac{\delta}{20d_{\delta}} \leq \alpha \leq \frac{\delta}{10d_{\delta}}$, $\delta < \frac{1}{16}$ space parameter $\eps>0$. 

\begin{enumerate}
\item Use $\DeltaExpansion$ to verify if $G_g$ has $4\delta$-local expansion around $u$.
\item If $\DeltaExpansion$ accepts $u$, then 
\begin{itemize}
	\item If $u < \frac{3t}{4}$ then return $w_i$.
	\item Else if $ \frac{3t}{4} \le u < t$:
		\begin{itemize}
			\item Use $\Dec^w_{=t}(w,j)$ for all $ 8tm\ell(\sec) - 4m\ell(\sec) \le j < 8tm\ell(\sec)$ to reconstruct the last block $w_{2t}$. 
			\item If $G_g$ has $4\delta$-local expansion around $t$ using the constructed labeling, then return $w_i$, else return $\bot$.
		\end{itemize}
\end{itemize}
\item If $\DeltaExpansion$ rejects $u$, then return $\bot$.
\end{enumerate}
\end{mdframed}

\begin{lemma}
\lemlab{lem:maj:front}
Let $\Enc$ be as described in \secref{sec:maj:lccencode}.
For any $i \in [n]$ such that $\metanode{i}<t$, $\Dec_{< t}$ does the following:
\begin{enumerate}
\item
For any $x \in \{0,1\}^k$ and $c =\Enc(s, x)$,  $\Dec^c_{< t}(s, i) = c[i]$.
\item
For any $x \in \{0,1\}^k$, $c =\Enc(s, x)$  and $w \in \{0,1\}^n$ generated by any \PPT adversary such that $\HAM(c, w) \le \frac{\Delta_J k}{\error}$, 
\[\PPr{\Dec^w_{<  t}(s, i) \in \{c[i], \bot \} } \ge 1-\negl(n).\]
\end{enumerate}
Moreover, $\Dec^w_{< t}$ makes at most $\O{\ell(\sec) \log^{3+\eps} n}$ queries to input $w$.
\end{lemma}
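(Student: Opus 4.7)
The plan is to verify each of the three conclusions of the lemma in turn, leveraging the machinery developed earlier in this section.

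For part (1), when the input is an uncorrupted codeword $c$, no block is tampered, so every node in $G$ is green and $G_g = G_0$ is itself a $\delta$-local expander. Every meta-node therefore has $4\delta$-local expansion, so $\DeltaExpansion$ accepts and the decoder returns the correct $i$-th bit. In the sub-case $u \geq 3t/4$ the additional call to $\Dec^w_{=t}$ returns the correct final label by \lemref{lem:maj:back} (which is deterministic on an uncorrupted codeword), so the second invocation of $\DeltaExpansion$ also passes.

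For part (2), I introduce the two sets $R \subseteq V(G_0)$ of red meta-nodes and $T \subseteq V(G_0)$ of tampered meta-nodes. Because tampering a single meta-node requires flipping at least $\Omega(m \ell(\sec))$ bits of $w$, and because any red but untampered meta-node must have at least $m/3$ of its parent labels inside other tampered meta-nodes, the total $|R \cup T|$ is a small constant fraction of $t$. Then \lemref{lem:cgood:most} guarantees that almost all meta-nodes are $\alpha$-good with respect to $R \cup T$; in particular the last $t/4$ meta-nodes contain some anchor $v$ that is $\alpha$-good with respect to $R \cup T$, and such a $v$ must satisfy $v \notin T$, i.e., $v$ is untampered. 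By \lemref{lem:alpha:expansion}, $G_g$ has $2\delta$-local expansion around $v$. I now branch on the output of $\DeltaExpansion$ applied to $u = \metanode{i}$. When $\DeltaExpansion$ rejects, the decoder returns $\bot$, which is acceptable. When $\DeltaExpansion$ accepts and $u < 3t/4$, the soundness direction of \lemref{lem:green:expander} implies that, except with negligible probability, $G_g$ has $4\delta$-local expansion around $u$; \lemref{lem:greenconnected} then yields a green path from $u$ to $v$ in $G_g$, and \lemref{lem:connected:correct} forces $u$ to be untampered as well, unless the adversary produced a hash collision (a negligible event by the \CRHF assumption). Hence $\ECC(\ECCD(w_u)) = c_u$ and the returned bit equals $c_i$. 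The sub-case $3t/4 \leq u < t$ is analogous: the reconstructed last block gives the correct label of meta-node $t$ by \lemref{lem:maj:back}, so meta-node $t$ itself serves as the untampered anchor, and the explicit $\DeltaExpansion$ check around $t$ ensures the green path exists before $w_i$ is output. A union bound over the polylogarithmic number of subroutine calls, combined with the negligible collision probability, gives the overall $1 - \negl(n)$ bound.

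For the query complexity, each invocation of $\DeltaExpansion$ uses $\O{m \ell(\sec) \log^{2+\eps} t}$ queries by \lemref{lem:green:expander}, and reconstructing the final block via $\Dec^w_{=t}$ contributes at most $\O{\ell(\sec) \log^{2+\eps} n}$ additional queries by \lemref{lem:maj:back}; with $m = \O{\log n}$ and $\log t = \Theta(\log n)$ these sum to the claimed $\O{\ell(\sec) \log^{3+\eps} n}$. The main obstacle will be the sub-case $u < 3t/4$: carefully bounding $|R \cup T|$ requires the double-layered accounting that the construction is designed for — the grouping of $m$ labels into each $\ECC$ block $c_{u+t}$ forces any label tampering to cost $\Omega(m \ell(\sec))$ bit flips, while the $\ReduceDegree$ gadget simultaneously caps the number of downstream nodes that a single tampered label can redden — so that the fraction of meta-nodes which fail to be $\alpha$-good stays below $1/4$ and the anchor $v$ is guaranteed to exist in the final quarter of the meta-graph.
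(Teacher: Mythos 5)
Your proof follows essentially the same path as the paper's: establish a good anchor meta-node in the final quarter using \lemref{lem:tampered:nodes} and \lemref{lem:cgood:most}, pass to local expansion via \lemref{lem:alpha:expansion}, and then combine \lemref{lem:green:expander}, \lemref{lem:greenconnected}, and \lemref{lem:connected:correct} with collision resistance; the query count is also handled identically. The one cosmetic difference is that you carry the set $R$ of red meta-nodes alongside $T$ and argue $\alpha$-goodness with respect to $R \cup T$, whereas the paper's appendix proof (and the statement of \lemref{lem:alpha:expansion}) works with $T$ alone, relying on the observation that every $G_0$-edge between two untampered meta-nodes is automatically green (so $\alpha$-goodness under $T$ already yields the $2\delta$-local expansion of $G_g$). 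Your accounting for $R$ is not wrong, and $\alpha$-goodness under $R \cup T$ certainly implies it under $T$ so \lemref{lem:alpha:expansion} still applies, but it is extra machinery that the argument does not require.
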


Before we prove the correctness of \lemref{lem:maj:front}, we show that any \PPT adversary that generates a corrupted codeword $w$ such that $\HAM(w, c) \leq \frac{\Delta_j k}{\error}$, cannot manage to corrupt many meta-nodes. 
Let $T$ denote the set of tampered meta-nodes of $G_0$.

\begin{lemma}\lemlab{lem:tampered:nodes}
Let $w \in \{ 0, 1\}^n$ be a corrupted codeword generated by any $\PPT$ adversary 
such that $\HAM(w, C)\leq \frac{\Delta_j k}{\error}$, then at most  $\frac{t}{4\error}$ meta-nodes are tampered. 
\end{lemma}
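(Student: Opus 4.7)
\begin{proofof}{\lemref{lem:tampered:nodes}}
The plan is a straightforward counting argument that converts the global Hamming budget of the adversary into an upper bound on the number of tampered meta-nodes, by observing that each tampered meta-node forces a provably large number of bit flips in some specific block.

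First I would unpack the definition of ``tampered''. By construction the codeword $c=\Enc(s,x)$ is a concatenation of blocks of the form $c_u=\ECC(T_u)$ and $c_{u+t}=\ECC(U_u)$ for $u\in[t]$, each of blocklength $B = 4m\ell(\sec)$ (using the simplifying choices $\beta=1, R=1/4$; the argument is identical for general $\beta,R$ after replacing $B$ by $\tfrac{m\beta\ell(\sec)}{R}$). A meta-node $u$ is tampered precisely when either $\tilde{c}_u := \ECC(\ECCD(c_u'))\neq c_u$ or $\tilde{c}_{u+t} := \ECC(\ECCD(c_{u+t}'))\neq c_{u+t}$. Since $\ECCD$ corrects up to a $\Delta_J$ fraction of errors in each block, whenever $\HAM(c_i, c_i') \le \Delta_J B$ we have $\ECCD(c_i') = \ECCD(c_i)$ and consequently $\tilde{c}_i = c_i$. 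Hence, for any tampered block $i\in\{u,u+t\}$, the adversary must have paid strictly more than $\Delta_J B$ bit flips on the $B$ coordinates making up block $i$.

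Next I would aggregate across meta-nodes. For each $u\in T$ choose a witness block $w(u)\in\{u,u+t\}$ that is tampered; the map $u\mapsto w(u)$ is injective and the witness blocks occupy disjoint coordinate ranges of $c$. Summing the per-block lower bound over the $|T|$ witness blocks gives
\[
\HAM(c,w) \;\ge\; \sum_{u\in T} \HAM(c_{w(u)}, c_{w(u)}') \;>\; |T|\cdot \Delta_J \cdot B \;=\; |T| \cdot \Delta_J \cdot 4m\ell(\sec).
\]
Combining this with the hypothesis $\HAM(c,w)\le \frac{\Delta_J k}{\error} = \frac{\Delta_J \cdot tm\ell(\sec)}{\error}$ (using $k = tm\ell(\sec)$) and rearranging yields
\[
|T| \;\le\; \frac{\Delta_J \cdot tm\ell(\sec)/\error}{\Delta_J\cdot 4m\ell(\sec)} \;=\; \frac{t}{4\error},
\]
as claimed.

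There is no real obstacle here: the entire content is the budget comparison above, and the only place care is needed is in pinning down the blocklength $B$ from the encoder of \secref{sec:maj:lccencode} and in verifying that the witness blocks for distinct tampered meta-nodes do not overlap (which is immediate because $\{u,u+t\}\cap\{u',u'+t\}=\emptyset$ whenever $u\ne u'$ and both indices lie in $[2t]$). The one mild subtlety is that the lemma is stated for general parameters $\beta,R$; in the general case the per-block cost becomes $\Delta_J\cdot\tfrac{m\beta\ell(\sec)}{R}$ and the total budget becomes $\frac{\Delta_J \cdot tm\beta\ell(\sec)}{\error}$, so the factors of $\beta$ and $R$ cancel identically and the same bound $|T|\le \tfrac{t}{4\error}$ holds.
\end{proofof}
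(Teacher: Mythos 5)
Your proof is correct and follows essentially the same approach as the paper: both arguments observe that a tampered meta-node forces the adversary to flip at least $\Delta_J \cdot 4m\ell(\sec)$ bits in one of its two associated blocks, then divide the total Hamming budget $\frac{\Delta_J k}{\error}$ by this per-meta-node cost and use $k = tm\ell(\sec)$. You spell out the witness-block disjointness that the paper leaves implicit, but the counting argument is identical.
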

\begin{proof}
We first observe that a meta-node can only be altered beyond repair by changing at least $4m \cdot \ell(\sec) \cdot \Delta_J$ bits.  Therefore, by changing at most $\frac{\Delta_J k}{\error}$ bits of the codeword we have $|T| \leq \frac{k}{4 \error m \ell(\sec)} = \frac{k'}{4 \error m} = \frac{t}{4\error}$ meta-nodes in $G$ are  tampered.
\end{proof}

Using these key lemmas and properties, we now prove the correctness of the decoder. %\lemref{lem:maj:front}. 

\begin{proof}[Proof of \lemref{lem:maj:front}]

{\bf Query Complexity: } The query complexity of the decoder is dominated by the query complexity of $\DeltaExpansion$. From \lemref{lem:green:expander}, it then follows that the query complexity of $\Dec^w_{<t}$ is at most $\O{m \ell(\sec) \log^{2+\eps} t} = \O{m \ell(\sec) \log^{2+\eps} n}$ for $m = \O{\log n}$.

{\bf Correctness (1): } If $w \in C$, then the green graph $G_g = G_0$ and hence, $G_g$ has $\delta$ expansion around all meta-nodes $u$. Also, since $T=\emptyset$, all meta-nodes are $\alpha$-good in $G_0$ with respect to $T$. Therefore, $\Dec_{< t}$ accepts a codeword with probability $1$. 

{\bf Correctness (2): } Now, let $w$ be a corrupted codeword such that $\HAM(d, c) \leq  \frac{\Delta_J k}{\error}$. Let $u =\metanode{i}$ be the meta-node corresponding to the queried index $i \in [n]$. 

Case (i):  Let $u < \frac{3t}{4}$, and let $G_g$ have $4\delta$-local expansion around $u$. We show that there exists a descendent $v$ of $u$ in $G_g$ such that (1)  $G_g$ is $4\delta$-local expander around $v$ and, (2) $v$ is untampered (correct).

Recall that $T$ denotes the set of tampered meta-nodes. 
From \lemref{lem:tampered:nodes}, we know that at most $\frac{t}{4\error}$ meta-nodes can be tampered by the adversary.  
Since $G_0$ is a $\delta$-local expander, from \lemref{lem:cgood:most} we know that the number of $\alpha$-good meta-nodes in $G_0$ with respect to the set of tampered meta-nodes $T$ is at least $t - |T|\left( \frac{2-\alpha}{\alpha}\right)$. 
Therefore, for any $\alpha > \frac{1}{200 d_{\delta}} $ and $\error \ge 1600 d_{\delta}$, there are at least $ \frac{15t}{16}$ $\alpha$-good meta-nodes in $G_0$ w.r.t $T$. \lemref{lem:alpha:expansion} then implies that $G_g$ has $4\delta$-local expansion around all these $\alpha$ good meta-nodes.  
So at most $\frac{t}{16} + \frac{t}{16} = \frac{t}{8}$ meta-nodes which do not satisfy conditions $(1)$ or $(2)$. 
Therefore, there exists at least one among the last $\frac{t}{4}$meta-nodes which satisfy both the conditions.

Case (ii): Let $u \geq \frac{3t}{4}$, and let $t$ be the corrected meta-node using $\Dec^w_{=t}$ (see \lemref{lem:maj:back}). If $G_g$ has $4\delta$-local expansion around $u$ and $t$, then assuming that the adversary has not found a hash collision, we can conclude from \lemref{lem:connected:correct}, that $u$ is untampered. So $\Dec^w_{<t}$ returns the correct value for all such coordinate queries. 

The probability that the decoder returns a wrong value is upper bounded by the probability that the procedure $\DeltaExpansion$ wrongly accepts a meta-node about which $G_g$ does not have $4\delta$ local expansion or if the adversary successfully finds a hash collision. This happens with $\negl(n)$ probability as shown in \lemref{lem:green:expander}.
\end{proof}

To conclude, we need to show that for any received word generated by a \PPT adversary most meta-nodes are corrected by the decoder (i.e $\Dec^w(s,i) \neq \bot$ ). In particular, we show that most meta-nodes of $G_0$ are untampered, and  $G_g$ had $4\delta$ local expansion around each of these untampered meta-nodes. 

This follows from the fact that there are at most $\frac{t}{16}$ meta-nodes are not $\alpha$-good under the set of tampered nodes. Note that for any query to a meta-node $u < \frac {3t}{4}$ the decoder $\Dec^w(s,i)$ returns the correct codeword symbol, \ie, $c[i]$ unless 
$u$ itself is not $\alpha$-good under the set of tampered nodes. So, for at least $\frac{3t}{4} - \frac{t}{16} = \frac{11t}{16}$ meta-node queries,  $\Dec^w(s,i)$ does not return a $\bot$. 
%  !TEX root=main.tex
\section{Computationally Relaxed Locally Decodable Codes ($\CRLDC$)}  \applab{crldc}
In this section of the appendix we present the formal definition of a Computationally Relaxed Locally Decodable Codes  ($\CRLDC$) --- defined informally in the main body of the paper.  
We first tweak the definition of a {\em local code} so that local decoding algorithm takes as input an index from the range $i \in [k]$ instead of $i \in [n]$ i.e., the goal is to decode a bit of the original message as opposed to a bit of the original codeword.

\begin{definition} A {\em local code} is a tuple $(\Gen, \Enc, \Dec)$ of probabilistic algorithms such that  
\begin{itemize}
\item $\Gen(1^{\sec})$ takes as input security parameter $\sec$ and generates a public seed $s \in \{0,1\}^{*}$. 
This public seed $s$ is {\em fixed} once and for all.  
\item $\Enc$ takes as input the public seed $s$ and a message $x \in \Sigma^k$ and outputs a codeword $c =\Enc(s,x)$ with $c \in \Sigma^n$.  
\item $\Dec$ takes as input the public seed $s$, an index $i \in [k]$, and is given oracle access to a word $w \in \Sigma^n$. $\Dec^w(s,i)$ outputs a symbol $b\in \Sigma$ (which is supposed to be the value at position $i$ of the original message $x$ i.e., the string $x$ s.t. $\Enc(s,x)$ is closest codeword to $w$).
\end{itemize}
We say that the (information) {\em rate} of the code  $(\Gen, \Enc, \Dec)$ is $k/n$. We say that the code is {\em efficient} if $\Gen,\Enc,\Dec$ are all probabilistic polynomial time (PPT) algorithms.
\end{definition}

Similarly, the notional of a computational adversarial channel is almost identical except that the channel challenges the $\Dec$ with an index $i \in [k]$ (as opposed to $i \in [n]$) and the decoder is supposed to output the $i$th bit of the original message. 
Apart from this change \defref{def:crldc:channel} and \defref{def:crlcc:channel} are identical.

\begin{definition}
\deflab{def:crldc:channel}
 A {\em computational adversarial channel} $\A$ with error rate $\tau$ is an algorithm that interacts with a local code $(\Gen, \Enc, \Dec)$  in rounds, as follows. In each round of the execution, given a security parameter $\sec$, 
\begin{enumerate}
\item  Generate $s\leftarrow\Gen(1^\sec)$; $s$ is public, so $\Enc$, $\Dec$, and $\A$ have access to $s$
\item The channel $\A$ on input $s$ hands a message $x$ to the sender.
\item The sender computes $c=\Enc(s, x)$ and hands it back to the channel (in fact the channel can compute $c$ without this interaction).
\item The channel $\A$ corrupts at most $\tau n$ entries of $c$ to obtain a word $w\in \Sigma^n$ and selects a challenge index $i \in [k]$;  $w$ is given to the receiver's \Dec with query access along with the challenge index $i$.
\item The receiver outputs $b\leftarrow \Dec^w(s, i)$.
\item We define $\A(s)$'s  {\em probability of fooling} $\Dec$ on this round to be $p_{\A, s}=\Pr[b\not \in \{\bot, x_i \}],$ where the probability is taken only over the randomness of the $\Dec^w(s, i)$.
 We say that $\A(s)$ is $\gamma$-successful {\em at fooling} $\Dec$ if $p_{\A, s}>\gamma$.
 We say that $\A(s)$ is $\rho$-successful {\em at limiting} $\Dec$ if $\left|\Goody_{\A,s}\right| < \rho \cdot k$, where $\Goody_{\A,s} \subseteq [k]$ is the set of indices $j$ such that $\Pr[\Dec^w(s,j) = x_j ] > \frac{2}{3}$. We use $\Fool_{\A,s}(\gamma,\tau,\sec)$ (resp. $\Limit_{\A,s}(\rho,\tau,\sec)$) to denote the event that the attacker was $\gamma$-successful at fooling $\Dec$ (resp. $\rho$-successful at limiting $\Dec$) on this round. 
\end{enumerate}
\end{definition}

\begin{definition}[(Computational) Relaxed Locally Decodable Codes (CRLDC)] A local code $(\Gen, \Enc, \Dec)$ is a $(q, \tau, \rho, \gamma(\cdot),\mu(\cdot))$-\CRLCC ~ {\em against a class} $\mathbb{A}$ of adversaries if $\Dec^w$ makes at most $q$ queries to $w$ and satisfies the following:
\begin{enumerate}
\item\label{RLDCweak1} For all public seeds $s$ if $w \leftarrow \Enc(s, x)$ then $\Dec^w(s,i)$ outputs $x_i$. 
\item\label{RLDCweak2} For all $\A \in \mathbb{A}$ we have $\Pr[\Fool_{\A,s}(\gamma(\sec),\tau,\sec)] \leq \mu(\sec)$, where the randomness is taken over the selection of $s \leftarrow \Gen(1^\sec)$ as well as $\A$'s random coins. 
\item \label{RLDCstrong} For all $\A \in \mathbb{A}$ we have $\Pr[\Limit_{\A,s}(\rho,\tau,\sec)] \leq \mu(\sec)$, where the randomness is taken over the selection of $s \leftarrow \Gen(1^\sec)$ as well as $\A$'s random coins. 
\end{enumerate}

When $\mu(\sec)=0$, $\gamma(\sec)=\frac{1}{3}$ is a constant and $\mathbb{A}$ is the set of all (computationally unbounded) channels we say that the code is a $(q, \tau, \rho, \gamma)$-\RLDC. 
When $\mu(\cdot)$ is a negligible function {\em and} $\mathbb{A}$ is restricted to the set of all probabilistic polynomial time (\PPT) attackers we say that the code is a $(q, \tau, \rho, \gamma)$-\CRLDC (computational relaxed locally correctable code). 
  
We say that a code that satisfies conditions \ref{RLDCweak1} and \ref{RLDCweak2} is a {\em Weak \CRLDC}, while a code satisfying conditions \ref{RLDCweak1}, \ref{RLDCweak2} and \ref{RLDCstrong} is a {\em Strong \CRLDC} code.
\end{definition}

As we remarked in the main body of the paper our construction of a Strong {$\CRLCC$} is also a strong $\CRLDC$. In particular, \thmref{strongRLDC} is identical to \thmref{strongRLCC} except that we replaced the word $\CRLCC$ with $\CRLDC$. 

\begin{theorem}\thmlab{strongRLDC}
 Assuming the existence of a collision-resistant hash function $ (\GenH,H)$   with length $\ell(\lambda)$, there exists a constant $0<\tau' < 1$ and negligible functions $\mu(\cdot),\gamma(\cdot)$ such that for all $\tau \leq \tau'$ there exist constants $0< r(\tau), \rho(\tau)<1$ such that there exists a $(\ell(\lambda)\cdot{\polylog n}, \tau, \rho(\tau), \gamma(\cdot),\mu(\cdot))$-Strong \CRLDC of blocklength $n$ over the binary alphabet with rate $r(\tau)$ where $r(.)$ and $\rho(.)$ have the property that $\lim_{\tau \rightarrow 0} r(\tau) = 1$ and $\lim_{\tau \rightarrow 0} \rho(\tau) = 1$. In particular, if $\ell(\lambda) =\polylog \sec$ and $\sec \in \Theta(n)$ then the code is a $({\polylog n}, \tau, \rho, \gamma,\mu(\cdot))$-Strong \CRLDC.
\end{theorem}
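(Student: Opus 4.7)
The plan is to leverage the Strong \CRLCC construction of \thmref{strongRLCC} essentially verbatim, changing only the decoding algorithm so that it outputs a message bit instead of a codeword bit. The key observation is that the encoding $\Enc$ from \secref{sec:maj:lccencode} is systematic in the block sense: the first $t$ output blocks are $\ECC(T_1),\ldots,\ECC(T_t)$, where $T_1\circ\cdots\circ T_t = x$ partitions the message into chunks of size $m\beta\ell(\sec)$. Consequently, recovering the $i$-th message bit $x_i$ reduces to correctly decoding a single block $c_j = \ECC(T_j)$ where $j = \lceil i/(m\beta\ell(\sec))\rceil \leq t$.

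The \CRLDC decoder, on input $(s,i)$ with oracle access to $w$, proceeds as follows: (1) compute $j = \lceil i/(m\beta\ell(\sec))\rceil$; (2) invoke $\DeltaExpansion$ to test whether the green subgraph $G_g$ has $4\delta$-local expansion around meta-node $j$ (for $j$ near $t$, the symmetric expansion criterion is naturally replaced by its one-sided ancestor version); (3) if the test rejects, output $\bot$; (4) if it accepts, read the entire block $w_j$ of $4m\beta\ell(\sec)$ bits, compute $\tilde T_j := \ECCD(w_j)$, and output the bit of $\tilde T_j$ at the position within the block corresponding to $x_i$. The rate is inherited directly from the Strong \CRLCC (encoding is unchanged), and the query complexity is $\O{m\ell(\sec)\log^{2+\eps} t} + \O{m\beta\ell(\sec)} = \ell(\lambda)\cdot\polylog n$.

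Correctness follows by reduction to the Strong \CRLCC analysis. When $w = \Enc(s,x)$ we have $G_g = G_0$, so $\DeltaExpansion$ always accepts, $\tilde T_j = T_j$, and the output equals $x_i$, giving perfect completeness. For a \PPT-generated $w$ with $\HAM(w, c) \leq \tau n$, acceptance by $\DeltaExpansion$ implies with all but negligible probability that meta-node $j$ has $4\delta$-local expansion in $G_g$ by \lemref{lem:green:expander}. Combining this with the existence of an $\alpha$-good untampered anchor meta-node provided by the counting arguments in the Strong \CRLCC analysis, \lemref{lem:greenconnected} yields an all-green path from $j$ to the anchor, and \lemref{lem:connected:correct} then implies that meta-node $j$ is itself untampered unless the adversary produces a hash collision. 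Being untampered means $\ECC(\ECCD(w_j)) = c_j$, so $\tilde T_j = T_j$ exactly, and the output equals $x_i$; this establishes the soundness condition $\Pr[\Dec^w(s,i) \notin \{x_i,\bot\}] \leq \mu(\sec)$.

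For the limiting parameter $\rho(\tau)$, \lemref{lem:tampered:nodes} bounds the tampered meta-nodes $|T| \leq t/(4\error)$, and \lemref{lem:cgood:most} bounds the non-$\alpha$-good meta-nodes by at most $|T|(2-\alpha)/\alpha$; hence at least $(1-t(\tau)/t)\cdot t$ meta-nodes are both untampered and $\alpha$-good with respect to $R \cup T$, and by \lemref{lem:alpha:expansion} each such meta-node is accepted by $\DeltaExpansion$ with overwhelming probability. Each accepted meta-node contributes all $m\beta\ell(\sec)$ of its message bits to $\Goody_{\A,s}$, so $|\Goody_{\A,s}| \geq \rho(\tau)\cdot k$ with $\rho(\tau) \to 1$ as $\tau \to 0$, matching the Strong \CRLCC guarantees. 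The main technical subtlety, and the only real obstacle I anticipate, is ensuring that the expansion test and path-connectivity arguments still go through at the boundary meta-nodes near $t$ where only ancestor intervals exist in-range; this should be handled by simply restricting \lemref{lem:greenconnected} and the inductive proof of Claim~\ref{claim:expander:reachable} to their one-sided variants, since an untampered $\alpha$-good anchor ancestor suffices to invoke \lemref{lem:connected:correct}.
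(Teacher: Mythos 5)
Your proposal is correct and follows essentially the same route as the paper's proof sketch: reuse the Strong \CRLCC encoding unchanged, observe that the first $t$ blocks are a systematic encoding of the message in groups $T_j$, test that the relevant meta-node is accepted by the local-expansion tester, decode $\ECCD(w_j)$ to recover $T_j$, and output the appropriate message bit directly (omitting the re-encoding step that the \CRLCC decoder would apply). Your explicit accounting of the extra $\O{m\beta\ell(\sec)}$ queries for reading the full block and your remark about one-sided expansion near the boundary $j\approx t$ are sensible fillings-in of details the paper's sketch leaves implicit, but do not constitute a different approach.
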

\begin{proof}(sketch)
The encoding algorithm in our strong $\CRLDC$ and strong $\CRLCC$ constructions are identical. 
The only change that we need to make is to {\em tweak} the local decoding algorithm to output bits of the original message instead of bits of the codeword. 
This task turns out to be trivial. 
In particular, the first part of the codeword in our construction is formed by partitioning the original message $x$ into $mt$ blocks $x= x_1 \circ \ldots \circ x_{tm}$, partitioning these blocks into $t$ groups $T_1 = \left(x_1 \circ \ldots x_m\right), \ldots, T_t = \left(x_{(t-1)m+1} \circ \ldots x_{tm}\right)$ and outputting $c_j=\ECC(T_j)$ for each $j \leq t$. 
Because our rate $r(\tau)$ approaches $1$ these bits account for {\em almost all} of the codeword.

The $\CRLCC$ decoding algorithm is given (possibly tampered) codeword $\tilde{c} = \tilde{c}_1 \circ \tilde{c}_2 \ldots$. 
When the decoding algorithm is challenged for one of the original bits of $c_j$ it works as follows: (1) use our randomized algorithm verify that the corresponding metanode $v_j$ in our graph is $\alpha$-good (if not we output $\bot$), (2) if $v_j$ is $\alpha$-good then run $\ECCD\left( \tilde{c}_j\right)$ to recover the {\em original} $T_j = \left(x_{(j-1)m+1},\ldots,x_{jm}\right)$, (3) run $\ECC(T_j)$ to recover the original $c_j$ and then find the appropriate bit of the codeword to output.  
The $\CRLDC$ decoding algorithm can simply omit step (3). 
Once we have recovered the original $T_j$ we can find the appropriate bit of the original message to output.
\end{proof}
 
\end{document}